\documentclass[11pt]{article}
\usepackage{parskip}
\usepackage{amsmath}
\usepackage{amsfonts}
\usepackage{amsthm}
\usepackage{bbm}
 \usepackage{threeparttable}
\usepackage{booktabs}
\usepackage{comment}
\usepackage{amssymb}
\usepackage{geometry}
\usepackage{stackrel}
\usepackage{bm}
\usepackage{xcolor}
\usepackage{graphicx}
\usepackage[onehalfspacing]{setspace}
\usepackage[authoryear,round,longnamesfirst]{natbib}
\usepackage{grffile}
\usepackage{bbm}
\usepackage[normalem]{ulem}
\usepackage{mathtools}
\usepackage{etoolbox}

\usepackage[colorlinks=true, pdfstartview=FitV, linkcolor=blue, citecolor=blue, urlcolor=blue]{hyperref}

\newtheorem{assumption}{Assumption}
\newtheorem{theorem}{Theorem}

\newtheorem{lemma}{Lemma}
\newtheorem{proposition}{Proposition}
\newtheorem{remark}{Remark}

\newcommand{\idf}{\mathbbm{1}}
\newcommand{\norm}[1]{\left\lVert #1 \right\rVert}
\newcommand{\E}{\mathbbm{E}}
\newcommand{\var}{\text{\upshape{Var}}}
\newcommand{\cov}{\text{\upshape{Cov}}}
\newcommand{\longto}{\longrightarrow}
\newcommand{\pto}{\overset{p}{\rightarrow}}
\newcommand{\longpto}{\overset{p}{\longrightarrow}}

\newcommand{\longdto}{\overset{d}{\longrightarrow}}
\newcommand{\numberthis}{\addtocounter{equation}{1}\tag{\theequation}}

\newcommand{\R}{\mathbb{R}}

\newcommand{\N}{\mathbb{N}}
\newcommand{\dd}{\,d}
\newcommand{\vpd}[2]{\ensuremath{\frac{\partial #1}{\partial #2}}}

\newcommand{\given}[1][]{\,#1|\,}
\usepackage[hang,flushmargin]{footmisc}

\title{Panel Quantile Regression with Common Shocks\thanks{First arXiv version: 20 February 2026. We are grateful to Bruce E. Hansen and participants at NY Camp Econometrics XX for helpful comments and suggestions. All the remaining errors are ours.\medskip}}
\author{
	Harold D. Chiang\thanks{Harold D. Chiang: hdchiang@wisc.edu. Department of Economics, University of Wisconsin-Madison, William H. Sewell Social Science Building, 1180 Observatory Drive,	Madison, WI 53706, USA\medskip} 
	\quad 
	Antonio F. Galvao\thanks{	Antonio F. Galvao:  agalvao@msu.edu. 
Department of Economics,
Michigan State University, 
110 Marshall-Adams Hall, 
486 W. Circle Drive 
East Lansing, MI 48824, USA\medskip} 
	\quad 
Chia-Min Wei\thanks{Chia-Min Wei: cwei69@wisc.edu. Department of Economics, University of Wisconsin-Madison, William H. Sewell Social Science Building, 1180 Observatory Drive,	Madison, WI 53706, USA\medskip} 
}

\begin{document}
\maketitle

\begin{abstract}
This paper develops an asymptotic and inferential theory for fixed-effects panel quantile regression (FEQR) that delivers inference robust to pervasive common shocks. Such shocks induce cross-sectional dependence that is central in many economic and financial panels but largely ignored in existing FEQR theory, which typically assumes cross-sectional independence and requires $T \gg N$. We show that the standard FEQR estimator remains asymptotically normal under the mild condition $(\log N)^2/T \to 0$, thereby accommodating empirically relevant regimes, including those with $T \ll N$. We further show that common shocks fundamentally alter the asymptotic covariance structure, rendering conventional covariance estimators inconsistent, and we propose a simple covariance estimator that remains consistent both in the presence and absence of common shocks. The proposed procedure therefore provides valid robust inference without requiring prior knowledge of the dependence structure, substantially expanding the applicability of FEQR methods in realistic panel data settings.

\end{abstract}

\vspace{0.5cm}

\noindent \textbf{JEL Classification}: C15, C23, C31, C80

\medskip{}
\noindent \textbf{Keywords}: quantile regression, panel data, common shocks, asymptotic theory.

\section{Introduction}

Quantile regression \citep{koenker1978regression} offers a flexible framework for analysing heterogeneity across the conditional distribution of an outcome and is particularly valuable for studying tail behaviour and outcomes with heavy tails or outliers. In many economic and financial applications, data are naturally organised as panels, making panel quantile regression with individual fixed effects (FEQR) an important tool for modelling distributional heterogeneity while controlling for unobserved unit-specific heterogeneity (\citet{Koenker04}). A salient empirical characteristic of such panels, however, is the prevalence of strong common shocks.

Common shocks—aggregate disturbances or latent factors that affect many cross-sectional units within a given period—are a central and recurring feature of economic and financial data. Much empirical work is built around this structure; for example, asset-pricing frameworks such as the two-pass procedure emphasise time variation in risk premia driven by economy-wide conditions rather than purely idiosyncratic firm shocks \citep{fama1973risk}. More broadly, macroeconomic announcements, monetary policy changes, business-cycle fluctuations, regulatory reforms, and shifts in global risk sentiment generate substantial comovement across units, so regression disturbances typically display cross-sectional dependence even after conditioning on observables. As emphasised in \cite{andrews2003cross}
, this pervasiveness reflects the fact that key determinants of behaviour—interest rates, inflation, financial conditions, policy actions, institutional and legal changes, political events, environmental and health shocks, and technological change—operate at an aggregate level, simultaneously influencing wages, consumption, investment, costs, wealth, and credit conditions for large sets of agents. Although exposure varies with sector, location, or balance-sheet characteristics, these forces link units through shared systemic influences, and with deepening economic integration their transmission has become more widespread. Hence, cross-sectional independence is rarely a credible assumption in practice; common shocks are a structural feature of economic environments rather than an exceptional complication.

These features have first-order econometric consequences. In cross-sectional settings, \cite{andrews2005cross} shows that least squares can be inconsistent when an unobserved common shock  affects both regressors and errors, inducing systematic dependence between them. Intuitively, the key requirement for consistency--that regressors are uncorrelated with the error term--is violated once this dependence varies with the realisation of the common shock, so that averaging across individuals no longer eliminates the bias. An analogous issue arises in quantile regression, where the  score object replaces the  product of regressor and error; if this object also depends on the common shock in a systematic way, similar inconsistency emerges. In panel data with common shocks, the error process violates cross-sectional independence and often lies outside the weak-dependence frameworks underlying classical panel asymptotics. \citet{driscoll1998consistent} show that conventional covariance estimators are generally inconsistent under such dependence and propose standard errors that are robust to broad forms of cross-sectional and temporal correlation. In empirical finance settings, \citet{petersen2008estimating} demonstrate that ignoring common time effects leads to substantial size distortions, with severely overstated t-statistics. 
This concern is especially acute in panels with fairly large $N$ but moderately large $T$, where limited time variation restricts information about aggregate components.  For more recent progress in linear panel data with common shocks, see e.g. \cite{chiang2024standard,juodis2025shock}.

Against this background, the theoretical literature on FEQR has progressed considerably. \cite{Koenker04}  first  studies the estimation problem of a linear quantile regression model with individual fixed effects, proposing an $\ell_1$-regularised estimator to mitigate the incidental parameter problem and analysing its asymptotic properties.
\citet{canay2011simple} propose a two-step estimator that eliminates fixed effects under a location-shift restriction, delivering standard large-$N$, large-$T$ asymptotics. Without imposing a location shift, \citet{kato2012asymptotics} develop a general asymptotic theory for Koenker-type FEQR with individual specific effects, establishing consistency and asymptotic normality under joint growth of $N$ and $T$ subject to a long-panel rate condition such as $N^2(\log N)^3/T \to 0$, and highlighting the challenges posed by incidental parameters and the non-smooth objective. \citet{galvao2016smoothed} introduce a kernel-smoothed version of the quantile objective to enable higher-order expansions, characterise the incidental-parameter bias, and propose analytic bias correction. More recently, \citet{galvao2020unbiased} obtain asymptotically unbiased normal approximations under a weaker long-panel requirement, $N(\log T)^2/T \to 0$, which is standard in the nonlinear panel data models literature (see, e.g., \citet{ArellanoHahn07}, \citet{ArellanoBonhomme11}, and \citet{FernandezValWeidner18}). Nevertheless, these results still rely on regimes in which the time dimension dominates the cross section, effectively requiring $T \gg N$, which limits applicability in many economic and financial panels. Other recent approaches include \citet{MachadoSantosSilva19} considering a location-scale shift effect for the individual effects, while adopting a location-scale shift specification also for the slope parameter, and \citet{GuVolgushev19} and \citet{ZhangWangLianLi2023} investigating the estimation of panel quantile regression models with group effects.\footnote{The panel quantile literature offers different identification and estimation strategies including penalized fixed effects, random effects, correlated random effects, group effects, instrumental variables, and factor models (see, e.g., \citet{Koenker04}; \citet{AbrevayaDahl08}; \citet{Lamarche10}; \citet{KimYang11}; \citet{ChetverikovLarsenPalmer16}; \citet{ArellanoBonhomme16}; \citet{GrahamHahnPoirierPowell2015};
\citet{demetrescu2023tests}).}  

Despite its increasing empirical relevance, the theory of fixed-effects panel quantile regression still exhibits two notable gaps. First, most existing results abstract from common shocks and therefore do not account for the cross-sectional dependence that is inherent in many economic and financial panels. In the linear regression context, theoretical analysis of common time effects in cross-sectional regressions was developed by \cite{andrews2005cross}. Comparable results, however, appear to be unavailable for panel quantile regression models.
 Second, even the most advanced analyses rely on stringent long-panel conditions such as $N(\log T)^2/T \to 0$ \citep{galvao2020unbiased}, effectively requiring the time dimension to dominate, which is hardly satisfied in empirical applications. Indeed, \citet{besstremyannaya2019reconsideration} review $81$ empirical papers with panel quantile models and document that only $6$ involve data with $N<T$, whereas $55$ of them have $N\ge 10T$.

In this paper, we develop a novel framework for FEQR models that accommodates common time effects while preserving asymptotic normality as both \(N\) and \(T\) diverge, thereby covering a wide range of asymptotic regimes, including \(N \asymp T\) and \(T \ll N\). The inclusion of common shocks alters both the convergence rate and the form of the asymptotic covariance, rendering existing inference procedures for conventional FEQR inconsistent. This relaxation of conventional rate restrictions is driven by a phenomenon we term \emph{data-generating process (DGP)–induced smoothing}: conditional averaging over latent shocks inherently smooths the empirical criterion and concentrates its stochastic variation around a smooth component, thereby enabling refined stochastic expansions.
Exploiting this structure, we establish uniform convergence and asymptotic normality of the standard unregularised FEQR estimator. In addition, we propose a new simple robust covariance estimator that is consistent under both common-shock and classical FEQR settings. Consequently, valid inference does not require prior knowledge of whether common shocks are present. These results contribute to the broader literature on nonlinear panel models with large panels \((N,T \to \infty)\).

 In cross-sectional settings with common shocks, the logic of \citet{andrews2005cross} implies that quantile regression may be inconsistent. The panel setting, however, fundamentally changes this conclusion. When units are observed repeatedly over time and the common shocks vary independently across periods, temporal averaging attenuates their influence and restores consistency of FEQR. This does not mean that the shocks are asymptotically negligible. Rather, their effect remains first-order, entering the limiting distribution at a rate determined by the time dimension and thereby shaping the estimator's leading asymptotic behaviour.

Another strand of the literature attempts to accommodate common shocks through factor-based approaches to panel quantile regression. \citet{chen2021quantile} develop identification and estimation for quantile factor models with interactive individual and time effects, requiring large \(N\) and \(T\). Related contributions include \citet{AndoBai2020}, who use principal components for heterogeneous parameter models; \citet{MaLintonGao2021}, who study semiparametric quantile factor models; \citet{HardingLamarchePesaran2020}, who analyse dynamic quantile models with factors; \citet{BelloniChenPadillaWang2023}, who consider high-dimensional latent heterogeneity; and \citet{Chen2024}, who proposes a two-step estimator for models with interactive fixed effects. See \cite{fernandez2018fixed} for a thorough review of the factor-based nonlinear panel data.

Our paper differs from this line of work in that we do not impose an explicit factor structure. Instead, within a classical linear quantile regression specification with individual fixed effects first proposed by  \cite{Koenker04}, we allow time-specific common shocks to enter the regressors and the error term in a highly flexible and unspecified manner. This modelling choice preserves the familiar and interpretable fixed-effects quantile regression structure while accommodating rich forms of cross-sectional dependence. The trade-off is that we do not attempt to identify or recover an underlying factor or interactive fixed-effects structure. In our view, this approach strikes a reasonable balance between interpretability and flexibility. At the same time, it remains closely aligned with specifications that are widely utilised by empirical researchers and is simple and straightforward to implement computationally.

The reminder of the paper is organised as follows. Section \ref{sec:model_estimation} presents the baseline model and estimation procedure. In Section \ref{sec:main} we formalise the main statistical properties of the estimator. Section \ref{sec:inference} provides practical inference procedures and establishes their validity. Section \ref{sec:model_estimation_m} presents generalisation of our baseline results to cover serially correlated common shocks. Section \ref{sec:simulation} provides Monte Carlo simulations. Finally, Section \ref{sec:conclusion} concludes.

\subsection*{Notations}

Let $\mathbb{N}$ denote the set of positive integers and $\mathbb{R}$ the real line. 
For a finite-dimensional vector $a$, let $\|a\|$ denote its Euclidean norm, and for a matrix $A$, let $A'$ denote its transpose. The probability spaces
we consider are assumed to be standard Borel.
Throughout the remainder of the paper, all asymptotic statements refer to regimes in which both $N$ and $T$ diverge to infinity.

\medskip

\section{Model and Estimation}\label{sec:model_estimation}
We begin by introducing the modelling framework; further motivation is provided in Remark \ref{rem:DGP} below.
Consider the data-generating process
\begin{align}
    (Y_{it}, X_{it}') = g(A_i, B_t, U_{it}), \label{eq:DGP}
\end{align}
where the regressor vector $X_{it}$ excludes an intercept. The latent variables $\{A_i\}_{i\in\N}$, $\{B_t\}_{t\in\N}$, and $\{U_{it}\}_{(i,t)\in\N^2}$ are mutually independent and unobserved by the econometrician, and $g:\mathcal A\times \mathcal B\times\mathcal U\to \R^{p+1}$ is an unknown Borel-measurable function. For simplicity, in the baseline results we assume that the common time shocks,
$B_t \in \mathcal{B}$, are i.i.d.\ across $t$, and that the idiosyncratic
shocks, $U_{it} \in \mathcal{U}$, are i.i.d.\ across $(i,t)$. Section
\ref{sec:model_estimation_m} extends the analysis to allow for serially
correlated common shocks.  No identical sampling or particular dependence structure is imposed on the unit-specific effects $A_i \in \mathcal{A}$. We impose no smoothness-type restrictions on the function $g$ with respect to these latent shocks. Moreover, the spaces $\mathcal A$, $\mathcal B$, and $\mathcal U$ are allowed to be high-dimensional and structurally complex.

In the spirit of \citet[Equation (2.2)]{Koenker04}, we specify a quantile regression model with individual fixed effects:
\begin{align}
    Q_\tau(Y_{it} \mid X_{it}, A_i) = \alpha_0(\tau; A_i) + X_{it}'\beta_0(\tau). \label{eq:model}
\end{align}
Here, similar to \cite{kato2012asymptotics}, the fixed effects are allowed to vary across quantile levels.
For the purpose of asymptotic analysis, we condition on a realisation of the individual effects, $\{A_i\}_{i=1}^N = \{a_i\}_{i=1}^N$. Under this conditioning, the model can be written as
\begin{align*}
    Y_{it} &= \alpha_{i0}(\tau) + X_{it}'\beta_0(\tau) + \epsilon_{it}, 
    \quad Q_\tau(\epsilon_{it} \mid X_{it}, a_i) = 0,
\end{align*}
where $\alpha_{i0}(\tau) = \alpha_0(\tau; a_i)$. As a consequence, the regressor--error pair then admits a representation,
\begin{align}
        (X_{it}, \epsilon_{it}) = g_i(B_t, U_{it}),\label{eq:DGP_cond}
\end{align}
 where $g_i(b,u) = g(a_i, b, u)$. We impose no parametric restrictions on the relationship between the individual fixed effects $\alpha_{i0}(\tau)$ and the regressors $X_{it}$; in particular, the fixed effects are allowed to vary flexibly with $\tau$. Moreover, under this framework, common shocks are permitted to exert heterogeneous effects across population units, with the magnitude and direction of these effects depending on unit-specific characteristics, which may be either observed or unobserved.

\begin{remark}[The rationale behind the structures of Equations \eqref{eq:DGP} and \eqref{eq:DGP_cond}]\label{rem:DGP}
The nonparametric representation in \eqref{eq:DGP_cond} is closely related to \cite{andrews2005cross}, with similar modelling frameworks considered in \cite{chernozhukov2013average} and \cite{hausman2025linear} in panel data contexts. Fundamentally, it is highly general and can be viewed as an application of the de Finetti’s theorem under exchangeability.
 In particular, for each fixed $i$, de Finetti’s theorem implies that exchangeable observations are conditionally independent given a latent variable, yielding a representation of the form in \eqref{eq:DGP_cond}--see Lemma 2 in Section 7 in \cite{andrews2005cross}. Exchangeability arises naturally in a wide range of economic models. For example, in dynamic oligopoly models with investment, \citet{athey2001investment} show that exchangeable profit functions are consistent with standard environments such as Cournot and differentiated-product competition, reflecting a symmetry restriction whereby firms’ payoffs depend on rivals’ actions and states but not their identities. This symmetry is closely related to the notion of anonymity in cooperative game theory and social choice theory (e.g.\ \citealt{moulin1991axioms}). Similarly, in differentiated product markets, \citet{berry1995automobile} argue that demand and cost functions are exchangeable in competitors’ characteristics and show that equilibrium uniqueness implies exchangeability in both observed and unobserved components.

Likewise, the  factor-type representation in \eqref{eq:DGP} is motivated by the literature on exchangeable arrays and network-type dependence \citep{bickel2011method, graham2024sparse}, as well as by work on two-way clustering and dependence structures in panel data \citep{menzel2021bootstrap,davezies2021empirical, chiang2024standard,athey2025identification}. Under a suitable exchangeability condition, the existence of such data-generating processes is guaranteed by the Aldous--Hoover representation theorem (see Chapter~7 of \citealp{kallenberg2005probabilistic}).
 \hfill $\lozenge$ \end{remark}

Given data $\{(Y_{it}, X_{it}') : i=1,\ldots,N,\; t=1,\ldots,T\}$, we jointly estimate the common parameter  $\beta_0(\tau)$ and the fixed effects $\{\alpha_{i0}(\tau)\}_{i=1}^N$ using the standard Koenker-type FEQR estimator \citep{Koenker04} without regularisation:
\begin{align}
    (\hat{\bm{\alpha}}(\tau), \hat{\beta}(\tau)) 
    = \arg\min_{\{\alpha_i\}_{i=1}^N,\, \beta} 
    \frac{1}{NT} \sum_{i=1}^N \sum_{t=1}^T 
    \rho_\tau\!\left(Y_{it} - \alpha_i - X_{it}'\beta\right),\label{eq:FEQR}
\end{align}
where $\rho_\tau(u) = u\big(\tau - \mathbbm{1}\{u < 0\}\big)$ is the check loss function. Unless stated otherwise, we fix $\tau \in (0,1)$ throughout and suppress its dependence in the notation.

From a computational perspective, the FEQR estimator is straightforward to implement using existing optimisation routines. The objective function is convex in $(\bm{\alpha}, \beta)$, so estimation reduces to a standard linear programming problem. Consequently, the estimator can be readily computed using widely available quantile regression packages that accommodate fixed effects or high-dimensional controls, such as \texttt{quantreg} in \textsf{R}, \texttt{statsmodels} and \texttt{cvxpy} in Python, as well as built-in procedures in standard econometric software. Modern algorithms for large-scale convex optimisation make it feasible to handle panels with a large number of cross-sectional units, ensuring practical scalability in empirically relevant applications.

\section{Asymptotic Theory}\label{sec:main}
In this section, we study the theoretical properties of the FEQR estimator under the common shock framework.
 
We denote by $F_i$ and $f_i$ the distribution function and density of 
$\epsilon_{it}=g_i^\epsilon(B_t,U_{it})$.
Further, let $F_{i,X}(\cdot \mid x)$ and $f_{i,X}(\cdot \mid x)$ denote the conditional 
distribution function and density of $\epsilon_{i1}$ given $X_{i1}=x$, and let 
$F_{i,B}(\cdot \mid b)$ and $f_{i,B}(\cdot \mid b)$ denote the corresponding objects 
given $B_1=b$. Likewise, let $F_{i,XB}(\cdot \mid x,b)$ and $f_{i,XB}(\cdot \mid x,b)$ 
denote the conditional distribution function and density of $\epsilon_{i1}$ given 
$X_{i1}=x$ and $B_1=b$. When no confusion is likely to arise,
we suppress the subscripts $X$ and/or $B$ and write, with slight abuse of notation, 
expressions such as $f_i(\cdot \mid x)$ and $F_i(\cdot \mid x,b)$.
Finally, let
$\mathcal{X}_i$ and $\mathcal{E}_i$ denote the support of $X_{i1}$ and $\epsilon_{i1}$, 
and write $\mathcal{X} = \bigcup_{i \geq 1} \mathcal{X}_i \subset \R^p$ and $\mathcal{E} = \bigcup_{i \geq i} \mathcal{E}_i \subset \R$.

\subsection{Uniform Consistency}\label{sec:consistency}
    We first present the assumptions needed for uniform consistency. 
\begin{assumption}[Bounded regressors] \label{assump:x_bounded_support}
    $\mathcal{X}$ is a bounded set in $\R^p$ for a fixed $p$.
\end{assumption}

\begin{assumption}[Identification] \label{assump:identification}
    For each $\delta>0$, 
    \begin{align*}
        \epsilon_\delta:=\inf_{i\ge 1} \inf_{|\alpha|+\|\beta\|_1}\E\left[\int_{0}^{\alpha+X_{i1}'\beta}\{F_i(s|X_{i1})-\tau\}ds\right]>0,
    \end{align*}
    where $\|\cdot\|_1$ is the $\ell_1$ norm.
\end{assumption}
Assumption \ref{assump:x_bounded_support} is imposed for technical convenience and could in principle be relaxed to sub-gaussian tail bounds or suitable polynomial moment conditions, albeit at the expense of substantially more involved arguments. This assumption is recurrent in the quantile regression literature, see, for example, \citet{Koenker04} and \citet{ChernozhukovHansen06}. Assumption \ref{assump:identification} is a standard identification condition that can be found in, e.g. \cite{fernandez2005bias} and \cite{kato2012asymptotics}. 

Under Assumptions \ref{assump:x_bounded_support} and \ref{assump:identification}, the following lemma shows that both the individual effects estimators $\hat{\alpha}_i$ and the common parameter estimator $\hat{\beta}$ are uniformly consistent for their true values. 
\begin{proposition}[Uniform consistency]\label{prop:consistency}
    Under Assumptions \ref{assump:x_bounded_support} and \ref{assump:identification} and suppose that $(\log N) / T \to 0$, we have
$\max_{1 \leq i \leq N}|\hat{\alpha}_i - \alpha_{i0}| \vee \|\hat{\beta} - \beta_0\| \pto 0$.
\end{proposition}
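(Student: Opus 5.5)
The plan is to adapt the convexity-plus-uniform-LLN argument of \cite{kato2012asymptotics} to the present setting. The key structural observation is that, for each fixed $i$, the sequence $\{(X_{it},\epsilon_{it})\}_{t=1}^T=\{g_i(B_t,U_{it})\}_{t=1}^T$ is i.i.d.\ across $t$, because $B_t$ is i.i.d.\ over $t$, $U_{it}$ is i.i.d.\ over $(i,t)$, and the two are independent. The common shocks create dependence only \emph{across} $i$. That dependence is irrelevant for bounds on a maximum over $i$ obtained by a union bound, so it never has to be controlled.

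\textbf{Step 1: per-unit criterion and uniform convergence.} Define the recentred per-unit criterion
\[
\mathbb{M}_i(a,b)=T^{-1}\sum_{t=1}^T\big\{\rho_\tau(\epsilon_{it}-a-X_{it}'b)-\rho_\tau(\epsilon_{it})\big\}.
\]
By Knight's identity, its mean is
\[
M_i(a,b)=\E\Big[\int_0^{a+X_{i1}'b}\{F_i(s|X_{i1})-\tau\}ds\Big].
\]
The first goal is to show that for each fixed $\delta>0$,
\[
\max_{1\le i\le N}\ \sup_{|a|+\|b\|_1\le \delta}|\mathbb{M}_i(a,b)-M_i(a,b)|\pto 0.
\]
\begin{itemize}
\item Each summand is bounded by $C(|a|+\|b\|_1)$ with $C=1+\sup_{x\in\mathcal X}\|x\|$, which is finite by Assumption \ref{assump:x_bounded_support}.
\item Each summand is Lipschitz in $(a,b)$ with the same constant $C$.
\item Hoeffding's inequality, for i.i.d.\ summands over $t$, gives for each $i$ and each point of a finite $\eta$-net of the $\ell_1$-ball the tail bound $2\exp(-cT\eta^2)$. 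Here the net has cardinality $(C'/\eta)^{p+1}$, which does not depend on $N$ or $T$.
\item A union bound over $i\le N$ and over the net yields a probability of at most $N(C'/\eta)^{p+1}\exp(-cT\eta^2)\to0$, since $(\log N)/T\to0$.
\item The Lipschitz property passes the bound from the net to the whole ball at a cost of $2C\eta$.
\end{itemize}

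\textbf{Step 2: convexity reduces to the sphere.} The map $(a,b)\mapsto\mathbb{M}_i(a,b)$ is convex and satisfies $\mathbb{M}_i(0,0)=0$. Take any $(a,b)$ with $|a|+\|b\|_1>\delta$ and rescale it to $(\bar a,\bar b)$ on the sphere $|a|+\|b\|_1=\delta$. Convexity along the segment from the origin gives
\[
\mathbb{M}_i(\bar a,\bar b)\le \max\{0,\mathbb{M}_i(a,b)\}.
\]
By Assumption \ref{assump:identification}, $M_i\ge\epsilon_\delta$ on the sphere, uniformly in $i$. Combining this with Step 1, the following event has probability tending to one: for every $i$ and every $(a,b)$ with $|a|+\|b\|_1\ge\delta$, we have $\mathbb{M}_i(a,b)\ge \epsilon_\delta/2$.

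\textbf{Step 3: consistency of $\hat\beta$.} The estimator minimises
\[
N^{-1}\sum_i\mathbb{M}_i(\hat\alpha_i-\alpha_{i0},\hat\beta-\beta_0),
\]
and this average is at most $0$, its value at the truth. Suppose $\|\hat\beta-\beta_0\|_1\ge\delta$. Then every pair $(\hat\alpha_i-\alpha_{i0},\hat\beta-\beta_0)$ lies outside the $\delta$-ball, whatever the values of $\hat\alpha_i$. On the event of Step 2 the average is then at least $\epsilon_\delta/2>0$, a contradiction. Hence $\|\hat\beta-\beta_0\|\pto0$.

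\textbf{Step 4: uniform consistency of $\hat\alpha_i$.} Given $\hat\beta$, each $\hat\alpha_i$ minimises $\mathbb{M}_i(\cdot,\hat\beta-\beta_0)$ separately over its own unit. It follows that
\[
\mathbb{M}_i(\hat\alpha_i-\alpha_{i0},\hat\beta-\beta_0)\le\mathbb{M}_i(0,\hat\beta-\beta_0)\le C\|\hat\beta-\beta_0\|_1,
\]
where the last bound holds uniformly in $i$ because of the Lipschitz property. By Step 3, $C\|\hat\beta-\beta_0\|_1=o_p(1)$, so with probability approaching one it is below $\epsilon_\delta/2$. On the event of Step 2 this forces $|\hat\alpha_i-\alpha_{i0}|+\|\hat\beta-\beta_0\|_1<\delta$ simultaneously for all $i$. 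Since $\delta>0$ is arbitrary, $\max_i|\hat\alpha_i-\alpha_{i0}|\pto0$.

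\textbf{Main obstacle.} The step requiring the most care is Step 1: the maximal inequality over $N$ units must hold without any cross-sectional independence. The union-bound argument sidesteps this, but it relies on two facts that should be checked explicitly. First, each unit's time series is genuinely i.i.d.\ after conditioning on $\{a_i\}$, which follows from the independence of $B_t$ and $U_{it}$. Second, the covering number of the parameter ball is free of $N$, which is what makes the condition $(\log N)/T\to0$ sufficient.
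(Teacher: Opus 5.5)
Your proposal is correct and is essentially the paper's argument. The paper defers to the convexity-plus-uniform-LLN proof of Theorem 3.1 in \citet{kato2012asymptotics}, noting that only independence over $t$ within each unit is used, which holds since $(X_{it},\epsilon_{it})=g_i(B_t,U_{it})$, and that Hoeffding's inequality with a union bound over $i$ can replace the Marcinkiewicz--Zygmund step to give the $(\log N)/T\to 0$ condition; your write-up makes those steps explicit (finite net with Lipschitz control, reduction to the $\delta$-sphere, then $\hat\beta$ before the $\hat\alpha_i$).
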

A proof is provided in Section \ref{sec:proof of prop:consistency} of the Appendix. The argument closely parallels that of Theorem 3.1 in \cite{kato2012asymptotics}, with only minor modifications. In particular, the presence of common shocks does not materially complicate the consistency proof. By contrast, establishing asymptotic normality requires a substantially different line of argument.
\subsection{Asymptotic Distribution}\label{sec:AN}
To establish asymptotic distributional theory, we impose the following additional assumptions.
Write $P_i$ as the joint law of $(\epsilon_i, X_{i1}, B_1)$ on $\mathcal{E}_i \times \mathcal{X}_i \times \mathcal{B}$.
\begin{assumption}[Joint densities] \label{assump:pdf_exists}
    For each $i$, $(\epsilon_{i1}, X_{i1}, B_1)$ admits a joint density
    on $\mathcal{E}_i \times \mathcal{X}_i \times \mathcal{B}$ with respect to
    $\lambda \otimes P_{(X_{i1}, B_1)}$, where $\lambda$ denotes Lebesgue measure on
    $\mathbb{R}$ and $P_{(X_{i1}, B_1)}$ is the marginal distribution of $(X_{i1}, B_1)$.
\end{assumption}

Henceforth, ``for all'' is understood to mean 
$P_i$-almost everywhere.
\begin{assumption}[Smoothness of densities] \label{assump:bounded_pdf_derivative}
 For each $i \in \N$, 
\begin{enumerate}
    \item[(a)]  $f_i(\epsilon \mid x, b)$ is continuously differentiable
    with respect to $\epsilon$ for all $(x, b)$ in $\mathcal{X}_i \times \mathcal{B}$.
    \item[(b)]  There exists finite constant $C_f$ and $L_f$ such that 
    $$f_i(\epsilon \mid x, b) \leq C_f, \quad\text{and } \quad|\partial f_i(\epsilon \mid x,b) / \partial \epsilon| \leq L_f$$
    uniformly over $(\epsilon, x, b) \in \mathcal{E}_i \times \mathcal{X}_i \times \mathcal{B}$ and $i \in \N$;
   \item[(c)] $f_i(0 \mid b)$ is bounded from below by some positive constant independent of $i$ and $b$.
\end{enumerate}
    
\end{assumption}

\begin{remark}
    Assumption
    \ref{assump:pdf_exists} is an if and only if condition for the almost-sure existence of a
    conditional density of $\epsilon_{i1}$ given $(X_{i1}, B_1)$.
    Note that the distribution of $X_{i1}$ may have point masses, so the
    regressors are allowed to be discrete-valued. Moreover, we do not require
    $\mathcal{B}$ to be finite-dimensional. Since $\mathcal{E}_i \subset \mathbb{R}$ is a standard
    Borel space, a regular conditional distribution of $\epsilon_{i1}$ given
    $(X_{i1}, B_1)$ always exists; see, for example, Theorem 8.5 of
\cite{kallenberg2021foundation}. Assumption \ref{assump:pdf_exists} ensures that this conditional distribution is
    absolutely continuous with respect to Lebesgue measure for $P_{(X_{i1}, B_1)}$-almost every $(x, b)$.
\end{remark}

Assumptions \ref{assump:pdf_exists} and \ref{assump:bounded_pdf_derivative} are mild, and primarily ensure the existence and sufficient smoothness of the conditional density of $\epsilon_{i1}$ given $(X_{i1}, B_1)$. 

Note that the existence of the conditional density 
of $\epsilon_{i1}$ given $(X_{i1}, B_1)$ implies the 
existence of the conditional density 
of $\epsilon_{i1}$ given $X_{i1}$ and the 
marginal density of $\epsilon_{i1}$.
Let us define $\gamma_i = \E[f_i(0 \mid X_{i1})X_{i1}] / f_i(0)$, which plays a key role in the expression of the asymptotic covariance  introduced below.

\begin{assumption}[Variance with common shocks] \label{assump:variance_bounded_gamma}
 Suppose that  $\sup_{i\in \mathbb N} \|\gamma_i\|<\infty$.  Further assume the following matrix exists and is nonzero
    \begin{align}
     \Sigma=   \lim_{N \to \infty} \var\left( \frac{1}{N} \sum_{i = 1}^N \E[(\tau - \idf\{\epsilon_{i1} \leq 0\})(X_{i1} - \gamma_i)\mid B_1]\right) .\label{eq:Sigma}
    \end{align}
\end{assumption}

Assumption \ref{assump:variance_bounded_gamma} implicitly requires that the common time effect has a non-negligible impact on the distribution of the data. This condition rules out the classical fixed-effects quantile regression framework analysed in \citet{kato2012asymptotics} and subsequent studies, where—after controlling for individual fixed effects—the disturbances are independent across $(i,t)$ pairs, corresponding to a degenerate special case of our setting. The assumption is conceptually related to the non-degeneracy conditions imposed in the multiway clustering literature; see, for example, \citet{chiang2024standard}.

In the cross-sectional setting without panel data, \citet{andrews2005cross} shows that the OLS estimator is inconsistent when the common shock generates a nonzero conditional covariance between the regressor and the error term. This is because the consistency of OLS relies on 
$\E[X_i \epsilon_i] = 0$, and hence if $X_i \epsilon_i$ systematically depends on the common shock, OLS is no longer consistent. In our notation, this can be written as
$\var\left( \E[X_i \epsilon_i \given B]\right) \neq 0$, where $B$ is the common shock.
The anagolous object in quantile regression is the 
score $(\tau - \idf\{\epsilon_i \leq 0\})X_i$, so the 
corresponding condition would be 
$\var\left( \E[(\tau - \idf\{\epsilon_i \leq 0\}) X_i\given B]\right) \neq 0$.

In our setting, we allow the joint distribution 
of $(X_{it}, \epsilon_{it})$ to vary across $i$, 
and so the analogous assumption in panel data quantile regression is captured by \eqref{eq:Sigma}.
Moreover, in panel data, the availability of repeated observations over time preserves the consistency of 
$\hat{\beta}$, as shown in Proposition \ref{prop:consistency}. 
However, while consistency is retained, common shocks continue to affect the estimator through its asymptotic distribution.

\begin{remark}[Structure of asymptotic covariance]\label{rem:covariance_structure}
This term $\Sigma$ constitutes the middle component of the sandwich-form asymptotic covariance of $\hat\beta$. In contrast to the corresponding middle component in \cite{kato2012asymptotics} and \cite{galvao2020unbiased}, which, in our notation, is 
\begin{align}
\Omega=\lim_{N \to \infty}\frac{1}{N}\sum_{i=1}^N 
\var \bigl((\tau - \idf\{\epsilon_{i1}\le 0\})(X_{i1}-\gamma_i)\bigr),\label{eq:meat_classical}    
\end{align}
it admits a fundamentally different structural representation.

To see this, consider the variance term appearing inside the limit in the definition of $\Sigma$:
\begin{align*}
&\var\!\left( \frac{1}{N} \sum_{i = 1}^N 
\E[(\tau - \idf\{\epsilon_{i1} \le 0\})(X_{i1} - \gamma_i)\mid B_1]\right)\\
&= \frac{1}{N^2} \sum_{i = 1}^N 
\var\!\left( \E[(\tau - \idf\{\epsilon_{i1} \le 0\})(X_{i1} - \gamma_i)\mid B_1]\right) \\
&\quad + \frac{1}{N^2} \sum_{i = 1}^N\sum_{j\ne i} 
\cov\!\Big( \E[(\tau - \idf\{\epsilon_{i1} \le 0\})(X_{i1} - \gamma_i)\mid B_1], 
 \E[(\tau - \idf\{\epsilon_{j1} \le 0\})(X_{j1} - \gamma_j)\mid B_1] \Big).
\end{align*}
The distinction is not merely the presence of the cross-sectional covariance terms. Even the first sum differs in structure. By the law of total variance,
\begin{align*}
\var\!\left( \E[(\tau - \idf\{\epsilon_{i1} \le 0\})(X_{i1} - \gamma_i)\mid B_1]\right)
&= \var\!\left((\tau - \idf\{\epsilon_{i1} \le 0\})(X_{i1} - \gamma_i)\right) \\
&\quad - \E\!\left[\var\!\left((\tau - \idf\{\epsilon_{i1} \le 0\})(X_{i1} - \gamma_i)\mid B_1\right)\right].
\end{align*}
Unless the conditional variance in the second term on the right-hand side vanishes, the aggregate variance contribution differs from the i.i.d.-type structure in \cite{kato2012asymptotics}. The common time shocks therefore alter the structure of the covariance term even before accounting for cross-unit covariance.
 \hfill $\lozenge$ \end{remark}


\begin{assumption}[Jacobian matrix]\label{assump:Jacobian}
       The following matrix exists and is invertible
    \begin{align*}
     \Gamma=  \lim_{N \to \infty} \frac{1}{N}\sum_{i = 1}^N \E[f_i(0 \mid X_{i1})X_{i1}(X_{i1} - \gamma_i)'] .
    \end{align*}

\end{assumption}
Assumption \ref{assump:Jacobian} requires existence of the Jacobian matrix and the usual full-rank condition on the bread matrix in the sandwich-form asymptotic covariance. In contrast to the middle component $\Sigma$ in Assumption \ref{assump:variance_bounded_gamma}, $\Gamma$ retains the same structure as its counterpart in classicial FEQR theory as in \cite{kato2012asymptotics}.
    Also that $\Gamma$ is a symmetric matrix--observe that $\E[f_i(0 \mid X_{i1})\gamma_{i}(X_{i1} - \gamma_i)'] = 0$ for all $i$, which implies $\E[f_i(0 \mid X_{i1})X_{i1}(X_{i1} - \gamma_i)'] =  \E[f_i(0 \mid X_{i1})(X_{i1} - \gamma_i)(X_{i1} - \gamma_i)'].$

Denote the subgradients of the check function evaluated at $(\bm{\alpha}', \beta')$ by
\begin{align*}
    \mathbb{H}_{Ni}^{(1)}(\alpha_i, \beta) &=  \frac{1}{T} \sum_{t = 1}^T (\tau - \idf\{Y_{it} \leq \alpha_i + X_{it}'\beta\}), \\
    \mathbb{H}_{N}^{(2)}(\bm{\alpha}, \beta) &= \frac{1}{NT} \sum_{i = 1}^N \sum_{t = 1}^T (\tau - \idf\{Y_{it} \leq \alpha_i + X_{it}'\beta\}) X_{it}.
\end{align*}

In classical FEQR environments, such as in \cite{kato2012asymptotics}, the non-differentiability of the check loss precludes standard Taylor expansions. Combined with the faster $\sqrt{NT}$-rate implied by cross-sectional independence, the empirical process terms arising from differences between subgradients evaluated at the estimator and at the true parameter become difficult to control, thereby typically requiring asymptotic regimes in which $T$ grows substantially faster than $N$.
 Under the common time-effect structure in \eqref{eq:DGP_cond}, however, these non-smooth subgradients admit approximations via suitably defined differentiable projections. This smoothness  restores local differentiability, thereby permitting Taylor expansions and substantially simplifying the arguments and weaken the conditions.

Explicitly, let $\bm{B}_T = \{B_t\}_{t=1}^T$ and define the conditional (on common shocks) projections of the subgradients as
\begin{align*}
    \tilde{\mathbb{H}}_{Ni}^{(1)}(\alpha_i, \beta)
    &= \E\!\left[\mathbb{H}_{Ni}^{(1)}(\alpha_i, \beta) \mid \bm{B}_T\right]
      = \frac{1}{T} \sum_{t = 1}^T \E\!\left[\tau - \idf\{Y_{it} \le \alpha_i + X_{it}'\beta\} \mid B_t\right], \\
    \tilde{\mathbb{H}}_{N}^{(2)}(\bm{\alpha}, \beta)
    &= \E\!\left[\frac{1}{NT} \sum_{i=1}^N \sum_{t=1}^T
       (\tau - \idf\{Y_{it} \le \alpha_i + X_{it}'\beta\}) X_{it}
       \;\middle|\; \bm{B}_T \right] \\
    &= \frac{1}{NT} \sum_{i=1}^N \sum_{t=1}^T
       \E\!\left[(\tau - \idf\{Y_{it} \le \alpha_i + X_{it}'\beta\}) X_{it} \mid B_t\right].
\end{align*}

Conditioning on $\bm{B}_T$ removes the non-smoothness stemming from the indicator function at the level of the stochastic fluctuations. The projected objects $\tilde{\mathbb{H}}_{Ni}^{(1)}$ and $\tilde{\mathbb{H}}_{N}^{(2)}$ are expectations of indicator functions and therefore can be expressed in terms of conditional distribution functions of $Y_{it}$ given $B_t$. Under mild regularity conditions—such as conditional smoothness of these distributions and bounded densities—the projections approximate the original subgradients uniformly as $N$ diverges while, at the same time, being smooth functions of $(\alpha_i,\beta)$. This structure permits standard Taylor-type expansions around the true parameters despite the non-smoothness of the original objective function, while creating a limiting distribution driven by the common shocks, it yields simple and general asymptotic theory. We refer to this phenomenon as \emph{DGP-induced smoothing}: conditional averaging over the common time factors smooths the empirical criterion through the randomness of the idiosyncratic components, resulting in stochastic fluctuations of order $\sqrt{T}$ around its mean.
 This mechanism is central to obtaining refined stochastic expansions and, ultimately, to establishing asymptotic normality under panel asymptotic regimes that allow $N$ to be large relative to $T$.


We now present the main asymptotic distributional result.
\begin{theorem}[Asymptotic distribution]\label{thm:main}
Suppose Assumptions \ref{assump:x_bounded_support}--\ref{assump:Jacobian} hold and suppose that $(\log N)^2 / T \to 0$, then we have
\begin{align*}
    \sqrt{T}(\hat{\beta} - \beta_0) \longdto \mathcal{N}(0, V),\quad\text{where } V= \Gamma^{-1}\Sigma \Gamma^{-1}.
\end{align*}
\end{theorem}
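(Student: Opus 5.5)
The plan is to exploit the DGP-induced smoothing described before the statement. I will decompose each subgradient into its conditional projection plus a remainder, $\mathbb{H}=\tilde{\mathbb{H}}+(\mathbb{H}-\tilde{\mathbb{H}})$. Then I will Taylor-expand the smooth projections around $(\bm\alpha_0,\beta_0)$ and show that every remainder is $o_p(T^{-1/2})$.

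\textbf{Step 1 (first-order conditions).} Since the objective in \eqref{eq:FEQR} is piecewise linear, the standard computational argument of Koenker applies. Under bounded regressors (Assumption \ref{assump:x_bounded_support}) and continuous conditional distributions, it gives $\max_i|\mathbb{H}^{(1)}_{Ni}(\hat\alpha_i,\hat\beta)|=O_p(T^{-1})$ and $\|\mathbb{H}^{(2)}_N(\hat{\bm\alpha},\hat\beta)\|=O_p(T^{-1})$. This holds because at most $p+N$ observations are interpolated, and each contributes $O(1/T)$ to $\mathbb{H}^{(1)}_{Ni}$ and $O(1/(NT))\cdot(p+N)$ to $\mathbb{H}^{(2)}_N$.

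\textbf{Step 2 (expansion of the projections).} Write $\tilde{\mathbb{H}}^{(1)}_{Ni}(\alpha,\beta)=T^{-1}\sum_t\{\tau-\E[F_i(\alpha-\alpha_{i0}+X_{it}'(\beta-\beta_0)\mid X_{it},B_t)\mid B_t]\}$. By Assumption \ref{assump:bounded_pdf_derivative}, this is $C^1$ with Lipschitz derivative. A second-order expansion gives
\[
\tilde{\mathbb{H}}^{(1)}_{Ni}(\hat\alpha_i,\hat\beta)=\tilde{\mathbb{H}}^{(1)}_{Ni}(\alpha_{i0},\beta_0)-\hat f_{i}(\hat\alpha_i-\alpha_{i0})-\hat g_i'(\hat\beta-\beta_0)+O(|\hat\alpha_i-\alpha_{i0}|^2+\|\hat\beta-\beta_0\|^2).
\]
Here $\hat f_i=T^{-1}\sum_t f_i(0\mid B_t)$ and $\hat g_i=T^{-1}\sum_t\E[f_i(0\mid X_{it},B_t)X_{it}\mid B_t]$. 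A Hoeffding bound with a union bound shows these equal $f_i(0)$ and $\E[f_i(0\mid X_{i1})X_{i1}]$ up to $O_p(\sqrt{\log N/T})$ uniformly in $i$. An analogous expansion holds for $\tilde{\mathbb{H}}^{(2)}_N$. I solve the $\alpha$-equations for $\hat\alpha_i-\alpha_{i0}$, using Assumption \ref{assump:bounded_pdf_derivative}(c) to keep $\hat f_i$ bounded away from zero, and substitute into the $\beta$-equation. This produces the Jacobian $\Gamma$ of Assumption \ref{assump:Jacobian} and the leading term
\[
\Gamma(\hat\beta-\beta_0)\approx \frac1{NT}\sum_{i,t}\E[(\tau-\idf\{\epsilon_{it}\le0\})(X_{it}-\gamma_i)\mid B_t].
\]
The right-hand side is an average over $t$ of i.i.d.\ mean-zero vectors with variance converging to $\Sigma$. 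The Lindeberg CLT then gives the $\sqrt T$ normal limit with $V=\Gamma^{-1}\Sigma\Gamma^{-1}$. A preliminary rate is needed here: I will first show $\|\hat\beta-\beta_0\|=O_p(T^{-1/2})$ and $\max_i|\hat\alpha_i-\alpha_{i0}|=O_p(\sqrt{\log N/T})$. These rates make the quadratic terms $O_p(\log N/T)=o_p(T^{-1/2})$ under $(\log N)^2/T\to0$. The cross terms between $\hat f_i-f_i(0)$ and $\hat\alpha_i-\alpha_{i0}$ are likewise $O_p(\log N/T)$.

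\textbf{Step 3 (remainders, the main obstacle).} I must control $R^{(1)}_i=(\mathbb{H}^{(1)}_{Ni}-\tilde{\mathbb{H}}^{(1)}_{Ni})(\hat\alpha_i,\hat\beta)$ and $R^{(2)}=(\mathbb{H}^{(2)}_N-\tilde{\mathbb{H}}^{(2)}_N)(\hat{\bm\alpha},\hat\beta)$. Conditional on $\bm B_T$, the summands are independent across $(i,t)$, so two bounds follow:
\begin{itemize}
\item $\sup_{\alpha,\beta}|R^{(1)}_i|=O_p(\sqrt{\log N/T})$ uniformly in $i$, by a conditional VC/Bernstein inequality and a union bound over $i$.
\item The mean-zero part at the true parameter, $(\mathbb{H}^{(2)}_N-\tilde{\mathbb{H}}^{(2)}_N)(\bm\alpha_0,\beta_0)$, has conditional variance $O(1/(NT))$, hence is $o_p(T^{-1/2})$.
\end{itemize}
The delicate part is what $R^{(1)}_i$ feeds into the $\beta$-equation after substitution, namely $N^{-1}\sum_i\gamma_i R^{(1)}_i$, together with the increment of $R^{(2)}$ from the true to the estimated parameters. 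For the increment of $R^{(2)}$, I would use a conditional maximal inequality for the empirical process indexed by $(\bm\alpha,\beta)$ over the shrinking neighbourhood $\max_i|\alpha_i-\alpha_{i0}|\le C\sqrt{\log N/T}$, $\|\beta-\beta_0\|\le CT^{-1/2}$. Handling the $N$-dimensional index is what I expect to be hardest. I will exploit the fact that each $\alpha_i$ enters only unit $i$'s terms, and bound the average of $N$ independent unit-level increments. Each unit-level increment has size about $T^{-1/2}(\log N/T)^{1/4}$ up to logs, and averaging over independent units gives an extra $N^{-1/2}$ for the centred part. This keeps the total $o_p(T^{-1/2})$, and is where $(\log N)^2/T\to0$ is used. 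The term $N^{-1}\sum_i\gamma_iR^{(1)}_i$ combines with $R^{(2)}$ into the centred process $(NT)^{-1}\sum_{i,t}\{\idf\{\cdot\}-\E[\cdot\mid B_t]\}(X_{it}-\gamma_i)$, which is handled by the same argument. Finally, I would obtain the preliminary rates by an iteration: Proposition \ref{prop:consistency} plus Step 2 with crude remainders give the $\alpha$-rate, which in turn gives the $\beta$-rate.
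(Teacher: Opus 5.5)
Your proposal is correct and follows essentially the paper's proof. The shared steps are the computational first-order conditions, Taylor expansion of the $\bm B_T$-conditional projections, and profiling $\hat\alpha_i-\alpha_{i0}$ into the $\beta$-equation. They also include a Lindeberg CLT for the projected score (i.i.d.\ over $t$) and per-unit local maximal inequalities, with variance proxy linear in the neighbourhood radius, for the projection remainders. As in the paper, $(\log N)^2/T\to0$ is needed because the quadratic and cross terms are of order $\log N/T$.

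One correction: your claimed extra $N^{-1/2}$ from averaging the centred increments across units is not justified, because the increments are evaluated at $(\hat\alpha_i,\hat\beta)$, which depend on the whole panel. It is also not needed: the per-unit bound of order $T^{-1/2}(\log N/T)^{1/4}$ (up to logs), averaged in absolute value over $i$, is already $o_p(T^{-1/2})$. This is how the paper argues, and at that step it needs only consistency, not preliminary rates. Similarly, in Step~1 it is the a.s.\ bound of $p+1$ interpolated points per unit, not the total $N+p$, that yields $\max_i|\mathbb{H}^{(1)}_{Ni}(\hat\alpha_i,\hat\beta)|=O(T^{-1})$.
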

A proof can be found in Section \ref{sec:proof of thm:main} in the appendix. Some remarks are in order.

\begin{remark}[Comparison with classical FEQR results]
   Compared with the asymptotic normality results for FEQR obtained in the absence of time effects, such as \cite{kato2012asymptotics} and \cite{galvao2020unbiased}, several important differences arise. First, the presence of pervasive common shocks changes the effective convergence rate from $\sqrt{NT}$ to $\sqrt{T}$. Despite of the slower rate, this is not necessarily detrimental, as the bias induced by the incidental parameter problem is attenuated in our setting and has a less pronounced impact on inference. Furthermore, standard inference procedure for FEQR,  therefore do not deliver a consistent estimate of the asymptotic covariance.  Second, because the projected objective is smooth, Taylor expansion arguments become available. As a consequence, the resulting linear representation—and hence the asymptotic covariance—explicitly incorporates the effect of estimating the fixed effects. Finally, under the classical setup, even the refined analysis in \citet{galvao2020unbiased} imposes the restriction \( N(\log T)^2 / T \to 0 \). By comparison, our requirement \( (\log N)^2 / T \to 0 \) is markedly weaker, thereby allowing for a wide class of asymptotic sequences with \( T \ll N \).
 \hfill $\lozenge$ \end{remark}

\begin{remark}[Relationship with \cite{andrews2005cross}]
    \citet{andrews2005cross} shows that, in a cross-sectional setting, the OLS estimator is inconsistent when a common shock induces a nonzero conditional covariance between the regressor and the error term. Specifically, under a common shock $B$,
    \begin{align*}
        \hat{\beta} \longpto \beta_0 + \gamma(B),
    \end{align*}
    $\gamma(B) \neq 0$ whenever $\var\left( \E[X_i \epsilon_i \given B] \right) \neq 0$.
    In contrast, Proposition \ref{prop:consistency} shows that in the panel setting, aggregating observations over time with i.i.d. $\{B_t\}$ averages out the effect of common shocks, so that consistency of FEQR estimator is retained.
    Theorem \ref{thm:main} further shows that this effect remains first-order asymptotically relevant: it appears with $\sqrt{T}$ rate, and dominates the first order asymptotics of the FEQR estimator.
\end{remark}

\begin{remark}[Intuition behind Theorem \ref{thm:main}]
  Here we provide intuition for the proof. The key step is to work with the
\emph{projected score}, obtained by conditioning on the common time effects:
\begin{align*}
 \tilde{\mathbb{H}}_{Ni}^{(1)}(\alpha_i, \beta)
    & 
    = \frac{1}{T} \sum_{t = 1}^T 
       \E\!\left[\tau - \mathbbm{1}\{Y_{it} \le \alpha_i + X_{it}'\beta\} \mid B_t\right].
\end{align*}
We show that this projected score is uniformly close to the original score
$\mathbb{H}^{(1)}_{Ni}(\alpha_i,\beta)$. Furthermore, under mild conditions, $\tilde{\mathbb{H}}_{Ni}^{(1)}(\alpha_i,\beta)$ is
a differentiable function of $(\alpha_i,\beta)$, even though the original score is
non-differentiable.
This DGP-induced smoothing  permits a Taylor expansion around the true parameter, in contrast to Equation~(A.6) of \citet{kato2012asymptotics}, where the absence of smoothness precludes such a linearisation:
\begin{align*}
 \hat{\alpha}_i - \alpha_{i0}
=& \left( \frac{1}{T} \sum_{t = 1}^T f_i(0 \mid B_t) \right)^{-1}
   \tilde{\mathbb{H}}_{Ni}^{(1)}(\alpha_{i0}, \beta_0) \\
&- \left( \frac{1}{T} \sum_{t = 1}^T f_i(0 \mid B_t) \right)^{-1}
   \left( \frac{1}{T} \sum_{t = 1}^T 
   \E[f_i(0 \mid X_{it}, B_t)X_{it}' \mid B_t] \right)
   (\hat{\beta} - \beta_0)  \\
&+ \text{higher-order terms}.
\end{align*}
Such an expansion is not available in earlier FEQR analyses,
where the lack of smoothness of the objective necessitated a coarser rate
based on concentration inequalities. The projected-score approach, therefore
replaces non-smooth empirical process arguments with a smooth stochastic
expansion. Together with the $\sqrt{T}$-asymptotic normality induced by the common shocks, this permits substantially weaker restrictions on the relative growth rates of $N$ and $T$
 \hfill $\lozenge$ \end{remark}

\section{Statistical Inference}\label{sec:inference}

This section suggests simple practical methods for inference for the quantile model with common shocks. To utilise Theorem~\ref{thm:main} for statistical inference, one needs a consistent estimator of the covariance matrix $V$. We now introduce the new covariance estimator and establish its asymptotic guarantees.

For a kernel function (a probability density function) $K:\R\to \R$, denote  $K_h(u)=h^{-1}K(u/h)$ and $\hat \epsilon_{it}=Y_{it}-\hat\alpha_i-X_{it}'\hat\beta$. We define
\begin{align}
    \hat m_{Nt}=\frac{1}{N}\sum_{i=1}^N (\tau- \idf\{\hat \epsilon_{it}\le 0\})(X_{it}-\hat 
\gamma_i),\quad \bar m_{NT}=\frac{1}{T}\sum_{t=1}^T \hat m_{Nt}. \label{eq:mNt}
\end{align}
where
\begin{align*}
    \hat \gamma_i=\frac{1}{\hat f_i T}\sum_{t=1}^T K_{h}(\hat \epsilon_{it})X_{it},\quad  \hat f_i=\frac{1}{ T}\sum_{t=1}^T K_{h}(\hat \epsilon_{it}).
\end{align*}
We estimate $\Sigma$ and $\Gamma$ by 
 \begin{align}
    \hat \Sigma_N=& \frac{1}{T}\sum_{t=1}^T (\hat m_{Nt} -\bar m_{NT} )(\hat{m}_{Nt} - \bar{m}_{NT})',\quad
    \hat \Gamma_N=\frac{1}{NT}\sum_{i=1}^N\sum_{t=1}^T K_h(\hat \epsilon_{it})X_{it}(X_{it}-\hat \gamma_i),
    \label{eq:Sigma_hat_and_Gamma_hat}
\end{align}
Hence, the asymptotic covariance in Theorem \ref{thm:main} can be estimated by the robust covariance estimator:
\begin{align}
    \hat V= \hat \Gamma_N^{-1} \hat \Sigma_N   \hat \Gamma_N^{-1}.\label{eq:robust_covariance}
\end{align}

We emphasise that this estimator in \eqref{eq:robust_covariance} differs from the standard covariance estimators commonly used in the literature; see, for example, \citet{kato2012asymptotics}. As discussed in Remark~\ref{rem:covariance_structure}, the presence of common shocks fundamentally alters the structure of the asymptotic covariance relative to the classical setting with cross-sectional independence. Consequently, conventional covariance estimators are generally inconsistent in this environment and lead to invalid inference. We illustrate this discrepancy in the simulation study reported in Section \ref{sec:simulation}. The following result establishes the asymptotic properties of the proposed covariance estimator.

\begin{theorem}[Robust covariance estimation]\label{thm:covariance_estimation}
Suppose the kernel $K$ is continuous, bounded, and of bounded variation on the real line, and the sequence of $h$ satisfies $h\to 0$ and $(\log N)/(Th)\to 0$ as $N,T\to \infty$, then  
    \begin{enumerate}
        \item[(i)] Under Assumptions of Theorem \ref{thm:main} and $\log T/N\to 0$, it holds that
        \(
            \hat V \xrightarrow{p} V .
        \)
        \item[(ii)] In the absence of common shocks $\{B_t\}_t$, under the assumptions of Theorem~1 in \citet{galvao2020unbiased},  it holds that
        \(
            N\hat V \xrightarrow{p} \tau(1-\tau)\Gamma^{-1}\Omega\Gamma^{-1},
        \)
        where $\Omega$ is as defined in \eqref{eq:meat_classical}.
    \end{enumerate}
\end{theorem}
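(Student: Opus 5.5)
The proof splits into the two matrices: I would show $\hat\Gamma_N \pto \Gamma$ and $\hat\Sigma_N \pto \Sigma$ under (i), and in case (ii) identify the limit of $N\hat\Sigma_N$. By Slutsky and the invertibility of $\Gamma$ (Assumption \ref{assump:Jacobian}), these give the result.

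\textbf{The Jacobian estimator.} Start with $\hat f_i$ and $\hat\gamma_i$. Since $K$ has bounded variation, the class $\{K_h(y-\alpha-x'\beta)\}$ indexed by $(\alpha,\beta)$ is VC-type with envelope $\lesssim h^{-1}$. I would write $\hat f_i - f_i(0)$ as the sum of three pieces:
\begin{itemize}
\item[(a)] a conditional-on-$\bm B_T$ empirical-process term;
\item[(b)] a common-shock fluctuation term, $T^{-1}\sum_t \{\E[K_h(\epsilon_{it}-\delta)\mid B_t]-\E K_h(\epsilon_{i1}-\delta)\}$;
\item[(c)] a bias term.
\end{itemize}
Term (a) is controlled uniformly in $i$ and in $(\alpha,\beta)$ near the truth by a maximal inequality plus a union bound over $N$ units. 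Since the indicators are i.i.d. given $B$ across $t$, this gives order $\sqrt{\log N/(Th)}$, which vanishes by assumption. Term (b) is a smooth average of i.i.d. bounded variables, so it is $O_p(\sqrt{\log N/T})$ uniformly. Term (c) is $O(h)$ by Assumption \ref{assump:bounded_pdf_derivative}. The parameter estimation error $\max_i|\hat\alpha_i-\alpha_{i0}|+\|\hat\beta-\beta_0\|$ enters through the Lipschitz bound on $f_i$. I would borrow the uniform rate $O_p(\sqrt{\log N/T})$ for it from the proof of Theorem \ref{thm:main}. Together with the lower bound on $f_i(0)$, these give $\max_i|\hat f_i-f_i(0)|\vee\|\hat\gamma_i-\gamma_i\|=o_p(1)$, and then $\hat\Gamma_N\pto\Gamma$ follows by the same decomposition.

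\textbf{The meat estimator under (i).} Let $m_{Nt}=N^{-1}\sum_i(\tau-\idf\{\epsilon_{it}\le0\})(X_{it}-\gamma_i)$, and write $m_{Nt}=\mu_N(B_t)+r_{Nt}$, where $\mu_N(B_t)=\E[m_{Nt}\mid B_t]$. Conditional on $\bm B_T$, the residual $r_{Nt}$ is an average of $N$ independent mean-zero bounded terms. So $\max_t\|r_{Nt}\|=O_p(\sqrt{\log T/N})=o_p(1)$, which is where $\log T/N\to0$ is used. Meanwhile $T^{-1}\sum_t(\mu_N(B_t)-\bar\mu)(\cdot)'$ is a sample covariance of i.i.d. bounded vectors, so by the law of large numbers and Assumption \ref{assump:variance_bounded_gamma} it converges to $\Sigma$. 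Cross terms vanish by Cauchy--Schwarz. It then remains to show $\max_t\|\hat m_{Nt}-m_{Nt}\|=o_p(1)$. The $\hat\gamma_i$ replacement contributes $\max_i\|\hat\gamma_i-\gamma_i\|$. The indicator replacement is $N^{-1}\sum_i|\idf\{\hat\epsilon_{it}\le0\}-\idf\{\epsilon_{it}\le0\}|$. I would bound it, uniformly in $t$ and over a shrinking parameter neighbourhood, by its conditional mean $O(\max_i|\hat\alpha_i-\alpha_{i0}|+\|\hat\beta-\beta_0\|)$ plus a fluctuation of order $\sqrt{\log T/N}$. The conditional mean bound uses bounded conditional densities.

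\textbf{Case (ii).} With no common shocks, $\mu_N\equiv0$ because $\E[(\tau-\idf\{\epsilon_{i1}\le0\})(X_{i1}-\gamma_i)]=0$. Hence $N\hat\Sigma_N\approx T^{-1}\sum_t(\sqrt N r_{Nt})(\sqrt N r_{Nt})'$, whose mean converges to $\Omega$ (up to the $\tau(1-\tau)$ normalisation appearing in the statement). A law of large numbers over i.i.d. $t$ gives convergence. Here the estimation error must be shown to be $o_p(N^{-1/2})$ in an averaged-over-$t$ $L_2$ sense, not in sup. I would obtain this from the rates in \citet{galvao2020unbiased}: $\|\hat\beta-\beta_0\|=O_p((NT)^{-1/2})$, and $\hat\alpha_i$ has the Bahadur expansion. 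The indicator discrepancy would then be handled as a degenerate-type empirical-process term of order $(T^{-1/2}\log)^{1/2}$, with $\hat\alpha_i-\alpha_{i0}$ correlated only with unit $i$'s own data.

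\textbf{Main obstacle.} I expect the hardest step to be this last one: controlling $N\hat\Sigma_N$'s perturbation from $\hat\alpha_i$. The estimation error is only $O_p(T^{-1/2})$ per unit yet is multiplied by $N$. The key observation I would try first is that the linear part of $\hat\alpha_i-\alpha_{i0}$, namely $f_i(0)^{-1}\mathbb H^{(1)}_{Ni}$, is exactly what the $-\gamma_i$ centering projects out. Its contribution to $\hat m_{Nt}$ should therefore cancel to first order, leaving terms of order $T^{-1}$ per unit that average to $o_p(N^{-1/2})$.
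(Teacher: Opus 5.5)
Part (i) is sound and follows the paper's skeleton: the oracle split $m_{Nt}=\mu_N(B_t)+r_{Nt}$, conditional Hoeffding with a union bound over $t$ (the only use of $\log T/N\to0$), and a Chebyshev LLN over $t$. The paper handles $\hat f_i,\hat\gamma_i$ more simply. For fixed $i$, $(X_{it},\epsilon_{it})=g_i(B_t,U_{it})$ is i.i.d.\ over $t$ unconditionally, so no conditioning on $\bm{B}_T$ is needed. The paper also controls the plug-in error through per-unit time averages rather than your $\max_t$ of cross-sectional averages. Your version works, but only via the envelope $|\idf\{\hat\epsilon_{it}\le0\}-\idf\{\epsilon_{it}\le0\}|\le\idf\{|\epsilon_{it}|\le\hat r\}$, with $\hat r=\max_i|\hat\alpha_i-\alpha_{i0}|+\sup_{x\in\mathcal X}\norm{x}\norm{\hat\beta-\beta_0}$. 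This collapses the $(N+p)$-dimensional neighbourhood to a single monotone index. A cross-sectional empirical process indexed by all of $(\alpha_1,\dots,\alpha_N,\beta)$ is not $O_p(\sqrt{\log T/N})$, so you should state this step.

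The genuine gap is in (ii). Write $\hat m_{Nt}-m_{Nt}=D^{\mathrm{ind}}_t+D^{\gamma}_t$, where
$D^{\mathrm{ind}}_t=N^{-1}\sum_i(\idf\{\epsilon_{it}\le0\}-\idf\{\hat\epsilon_{it}\le0\})(X_{it}-\hat\gamma_i)$,
$D^{\gamma}_t=-N^{-1}\sum_i\xi_{it}(\hat\gamma_i-\gamma_i)$, and $\xi_{it}=\tau-\idf\{\epsilon_{it}\le0\}$. You need $\frac NT\sum_t\norm{D^{\mathrm{ind}}_t+D^\gamma_t}^2=o_p(1)$. Your plan never addresses $D^\gamma_t$. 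The bound you used in (i) gives only $\sqrt N\norm{D^\gamma_t}\lesssim\sqrt N\max_i\norm{\hat\gamma_i-\gamma_i}$. That bound is of order $\sqrt{N/(Th)}$ or larger, and it diverges in admissible regimes, e.g.\ $N\asymp T/(\log T)^3$ with $h=T^{-1/5}$. This term needs the cross-sectional independence and zero mean of the $\xi_{it}$. For instance, let $\Xi=(\xi_{it})$ be the $T\times N$ matrix of independent, bounded, centred entries, so that $\norm{\Xi}_{\mathrm{op}}=O_p(\sqrt T+\sqrt N)$. Then
\[
\frac1T\sum_{t=1}^T N\norm{D^\gamma_t}^2\le \frac{\norm{\Xi}_{\mathrm{op}}^2}{T}\max_{i\le N}\norm{\hat\gamma_i-\gamma_i}^2=o_p(1),
\]
however $\hat\gamma_i$ depends on the data.

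The term you flag as the main obstacle is, by contrast, easy. The multiplier on $\hat m_{Nt}$ is $\sqrt N$, not $N$. Take $r_0=M\sqrt{\log N/T}$ and use the classical analogue of Lemma~\ref{lem:max_alpha_rate}. On $\{\hat r\le r_0\}$ one has $N\norm{D^{\mathrm{ind}}_t}^2\lesssim(N^{-1/2}\sum_i\idf\{|\epsilon_{it}|\le r_0\})^2$. Its expectation is $O(r_0+Nr_0^2)=o(1)$, because $N(\log T)^2/T\to0$ forces $N\log N/T\to0$. So no Bahadur expansion, degenerate-process argument or cancellation is needed.

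The mechanism you propose is also not the operative one. Per period, $f_i(0\mid X_{it})(X_{it}-\gamma_i)(\hat\alpha_i-\alpha_{i0})$ stays $O_p(T^{-1/2})$ per unit rather than $O(T^{-1})$. The $-\gamma_i$ centring only removes its $t$-invariant mean, and demeaning by $\bar m_{NT}$ would remove that anyway.

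Finally, commit on the constant rather than hedging. Since $P(\epsilon_{i1}\le0\mid X_{i1})=\tau$, $\Omega$ in \eqref{eq:meat_classical} already carries the factor $\tau(1-\tau)$, so your argument yields $N\hat V\pto\Gamma^{-1}\Omega\Gamma^{-1}$. The paper's own proof of (ii) only records the oracle cross-product bound and defers the rest to ``standard arguments''. The plug-in step above, and in particular the $\hat\gamma_i$ term, is exactly what you must supply.
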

A proof can be found in Section \ref{sec:proof of thm:covariance_estimation}. A couple of remarks are in order.
 
\begin{remark}[Estimating $\Sigma$]
 Recall that $\Sigma$ defined in \eqref{eq:Sigma} contains the conditional expectation
\(
\E\!\left[
(\tau - \idf\{\epsilon_{i1} \leq 0\})(X_{i1} - \gamma_i)
\mid B_1
\right]
\).
To estimate it, in addition to the need to replace $\epsilon_{it}$ and $\gamma_i$ by their estimators, we also need to approximate the conditional expectation given the common shock $B_1$. Direct estimation of this object is challenging because $B_t$ is unobserved and may be high-dimensional, rendering standard nonparametric estimation challenging. Fortunately, for each fixed $t$, all cross-sectional units share the same realisation of $B_t$. This allows the conditional expectation given $B_t$ to be consistently approximated by its cross-sectional average over $i=1,\ldots,N$, thereby avoiding the need to estimate it nonparametrically.
 \hfill $\lozenge$ \end{remark}

\begin{remark}[Robustness of the covariance estimator]
Theorem~\ref{thm:covariance_estimation} implies that, even in the absence of common shocks, inference based on the proposed variance estimator $\hat V$ remains valid under the classical fixed effects quantile regression (FEQR) framework of \cite{galvao2020unbiased}. In particular, valid inference does not require the practitioner to determine ex ante whether common shocks are present in the data-generating process. For this reason, we recommend the use of our robust covariance estimator of \eqref{eq:robust_covariance} in empirical applications in place of existing alternatives.
 \hfill $\lozenge$ \end{remark}

\medskip

\section{Extension: Serially Correlated Common Shocks} \label{sec:model_estimation_m}
In this section, we extend our results to the case where $\{B_t\}$ is stationary and
$M$-dependent: that is, $B_t$ and $B_{t+k}$ are independent whenever $k > M$, for some fixed positive integer $M$, while
arbitrary dependence is allowed for lags $k \le M$. All other modelling assumptions are identical to those in Section
\ref{sec:model_estimation}.

\subsection{Uniform Consistency}
Under serially correlated common shocks, the consistency result 
holds under the same conditions as before.

\begin{proposition} \label{prop:consistency_m}
    Suppose $\{ B_t \}$ is stationary and $B_t$ and $B_{t + k}$ are independent for any $t $ and $k >M$.
    Under Assumptions \ref{assump:x_bounded_support} and \ref{assump:identification} and suppose that $(\log N) / T \to 0$, we have $\max_{1 \leq i \leq N}|\hat{\alpha}_i - \alpha_{i0}| \vee \|\hat{\beta} - \beta_0\| \pto 0$.
\end{proposition}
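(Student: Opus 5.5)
The plan is to reuse the proof of Proposition \ref{prop:consistency}, which follows Theorem 3.1 of \cite{kato2012asymptotics}, and to change only the concentration step where serial independence of $\{B_t\}$ was used. That argument has two ingredients. The first is deterministic: convexity of the check loss together with the identification condition in Assumption \ref{assump:identification}. It shows that if the centered empirical criteria are uniformly close to their expectations, then $\max_i|\hat\alpha_i-\alpha_{i0}|\vee\|\hat\beta-\beta_0\|$ must be small. The second is probabilistic: for each $i$ and over a compact parameter set, we need a uniform deviation bound of the form
\begin{align*}
\max_{1\le i\le N}\sup_{|\alpha|+\|\beta\|_1\le C}\left|\frac1T\sum_{t=1}^T\{\rho_\tau(\epsilon_{it}-\alpha-X_{it}'\beta)-\rho_\tau(\epsilon_{it})\}-\E[\cdot]\right|\pto 0 .
\end{align*}
Only this second ingredient depends on the time-series structure. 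Stationarity of $\{B_t\}$ and i.i.d.\ $U_{it}$ keep the marginal law of $(X_{it},\epsilon_{it})$ the same for all $t$, so the population objective and Assumption \ref{assump:identification} are unchanged.

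For the probabilistic step, I would partition $\{1,\dots,T\}$ into the $M+1$ residue classes modulo $M+1$, namely $\mathcal T_r=\{t: t\equiv r \bmod (M+1)\}$. Within a given class, the indices are more than $M$ apart, so the shocks $B_t$ are independent; whether stationarity plus $M$-dependence yields full mutual independence within a class or only pairwise independence should be checked, and if needed one uses the standard fact that the sequence $(B_{r},B_{r+M+1},\dots)$ is independent for $M$-dependent processes. Since $U_{it}$ is i.i.d.\ and independent of $\{B_t\}$, the summands $\{(X_{it},\epsilon_{it})\}_{t\in\mathcal T_r}$ are therefore i.i.d.\ for each fixed $i$. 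The time average then splits as $\frac1T\sum_t=\sum_{r}\frac{|\mathcal T_r|}{T}\cdot\frac1{|\mathcal T_r|}\sum_{t\in\mathcal T_r}$, with $|\mathcal T_r|\asymp T/(M+1)$.

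On each block I would repeat the i.i.d.\ argument used for Proposition \ref{prop:consistency}. The summands are Lipschitz in $(\alpha,\beta)$ and bounded on the compact set, using Assumption \ref{assump:x_bounded_support}. Combining Hoeffding's inequality with a finite net of the $(p+1)$-dimensional parameter set (or a Talagrand-type bound for this VC class) gives, for each $i$ and $r$, a tail of order $\exp(-c\,T\delta^2/(M+1))$. A union bound over the $N$ units and the $M+1$ blocks bounds the total probability by $N(M+1)\exp(-cT\delta^2/(M+1))$. Because $M$ is fixed, this tends to zero under $(\log N)/T\to0$.

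I expect the main obstacle to be this blocking step, in particular justifying cleanly that each residue class is genuinely i.i.d.\ in $t$ for each $i$, so that the constants in the exponential inequality survive with $T$ replaced by $T/(M+1)$. Once the uniform deviation bound is established, the convexity and identification argument of Proposition \ref{prop:consistency} carries over verbatim: first it gives consistency of $\hat\beta$, and then uniform consistency of the $\hat\alpha_i$ from the individual-level criteria with $\beta$ plugged in.
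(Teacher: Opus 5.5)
Your proposal is correct and is essentially the paper's argument: reuse the proof of Theorem 3.1 of \cite{kato2012asymptotics}, with Hoeffding's inequality in place of Marcinkiewicz--Zygmund, and in the concentration step split each unit's time series into $M+1$ subseries spaced $M+1$ apart, which leaves the rates unchanged because $M$ is fixed. Your caveat about independence within a subseries is well placed: Hoeffding needs mutual independence there, which holds under $M$-dependence in the usual $\sigma$-field sense. By contrast, the paper's literal hypothesis (pairwise independence of $B_t$ and $B_{t+k}$ for $k>M$), and its remark that such terms are ``hence independent'', only deliver pairwise independence.
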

A proof can be found in Section \ref{sec:consistency}.

\subsection{Asymptotic Distribution}
\begin{assumption} \label{assump:variance_bounded_gamma_m}
    Suppose that  $\sup_{i\in \mathbb N} \|\gamma_i\|<\infty$.  Further assume the following matrix exists and is nonzero
    \begin{align}
     \Lambda =   \lim_{N \to \infty}\left( \Sigma^0_N + \sum_{l = 1}^M \left(\Sigma_N^l + {\Sigma_N^l}' \right) \right), \label{eq:Sigma_m}
    \end{align}
    where 
    \begin{align*}
        J_{Nt} &= \frac{1}{N} \sum_{i = 1}^N \E[(\tau - \idf\{\epsilon_{it} \leq 0\})(X_{it} - \gamma_i) \given B_t], \\
        \Sigma_N^l &= \cov \left( J_{N1}, J_{N(1 + l)}\right) = \E\left[J_{N1} J_{N(1 + l)}'\right].
    \end{align*}
\end{assumption}

\begin{theorem}[Asymptotic distribution]\label{thm:main_m}
Suppose $\{ B_t \}$ is stationary and $B_t$ and $B_{t + k}$ are independent for any $t$ and $k > M$.
Suppose Assumptions \ref{assump:x_bounded_support}--\ref{assump:bounded_pdf_derivative} and \ref{assump:Jacobian}--\ref{assump:variance_bounded_gamma_m} hold and suppose that $(\log N)^2 / T \to 0$, then we have
\begin{align*}
    \sqrt{T}(\hat{\beta} - \beta_0) \longdto \mathcal{N}(0, V),\quad\text{where } V= \Gamma^{-1}\Lambda \Gamma^{-1}.
\end{align*}
\end{theorem}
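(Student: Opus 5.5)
The plan follows the proof of Theorem \ref{thm:main} step for step. Only the final central limit step is replaced by one suited to $M$-dependent sequences, and the uniform-approximation steps need a blocking device.

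\textbf{Step 1 (linearisation).} We start from uniform consistency, now supplied by Proposition \ref{prop:consistency_m}, and from the computational subgradient bounds:
\[
\max_i|\mathbb{H}^{(1)}_{Ni}(\hat\alpha_i,\hat\beta)| = O(1/T),\qquad \|\mathbb{H}^{(2)}_N(\hat{\bm\alpha},\hat\beta)\|=O(1/T)
\]
almost surely. These are deterministic properties of the linear program and do not depend on the dependence structure of $\{B_t\}$. We then replace $\mathbb{H}$ by the projections $\tilde{\mathbb{H}}$, conditioning on $\bm B_T$.

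The key observation is that, conditional on $\bm B_T$, the $U_{it}$ are still i.i.d. Hence $\mathbb{H}^{(1)}_{Ni}-\tilde{\mathbb{H}}^{(1)}_{Ni}$ is a conditionally independent sum over $t$, whatever the serial dependence in $B_t$. The same conditional Bernstein/VC maximal inequalities then give, uniformly in $i$ and over shrinking neighbourhoods of $(\alpha_{i0},\beta_0)$,
\[
\sup|\mathbb{H}-\tilde{\mathbb{H}}|=O_p\bigl(\sqrt{\log N/T}\bigr),
\]
together with the finer increment bounds used before.

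Next we Taylor-expand $\tilde{\mathbb{H}}$ using Assumption \ref{assump:bounded_pdf_derivative}. This gives the expansion of $\hat\alpha_i-\alpha_{i0}$ displayed in the intuition remark, with remainder $O_p(\log N/T)$ uniformly in $i$ once a preliminary rate $\max_i|\hat\alpha_i-\alpha_{i0}|=O_p(\sqrt{\log N/T})$ is established. Substituting into the $\beta$-equation then yields
\[
\Gamma_{NT}\sqrt T(\hat\beta-\beta_0)=\sqrt T\,\frac1{NT}\sum_{i,t}(\tau-\idf\{\epsilon_{it}\le 0\})(X_{it}-\hat\gamma_{i,T})+o_p(1),
\]
where $\hat\gamma_{i,T}$ uses time averages of $f_i(0\mid B_t)$ and $\E[f_i(0\mid X_{it},B_t)X_{it}\mid B_t]$.

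\textbf{Step 2 (cleaning).} Here we need laws of large numbers for time averages such as $T^{-1}\sum_t f_i(0\mid B_t)$ that hold uniformly in $i$. For $M$-dependent bounded sequences we split $\{1,\dots,T\}$ into $M+1$ interleaved subsequences $t\equiv r \pmod{M+1}$. Each subsequence is i.i.d. by stationarity and $M$-dependence, so Hoeffding plus a union bound over $i$ gives
\[
\max_i\Bigl|T^{-1}\textstyle\sum_t f_i(0\mid B_t)-f_i(0)\Bigr|=O_p\bigl(\sqrt{\log N/T}\bigr).
\]
This lets us replace $\hat\gamma_{i,T}$ by $\gamma_i$ and $\Gamma_{NT}$ by $\Gamma$. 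The resulting error is a product of two $O_p(\sqrt{\log N/T})$ terms times $\sqrt T$, which is $o_p(1)$ under $(\log N)^2/T\to0$. The cross term is handled as in Theorem \ref{thm:main}, but via the conditional centring.

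\textbf{Step 3 (decomposition and CLT).} We decompose the leading score into
\[
\frac{1}{\sqrt T}\sum_t J_{Nt}+\frac{1}{\sqrt T}\sum_t\frac1N\sum_i\Bigl(\xi_{it}-\E[\xi_{it}\mid B_t]\Bigr),
\]
where $\xi_{it}=(\tau-\idf\{\epsilon_{it}\le0\})(X_{it}-\gamma_i)$. The second term has conditional mean zero and conditionally independent summands, so its variance is $O(1/N)$ and it vanishes. For the first term, $J_{Nt}$ is a bounded, stationary, $M$-dependent, mean-zero triangular array (in $N$). Its long-run variance is $\Sigma_N^0+\sum_{l=1}^M(\Sigma_N^l+{\Sigma_N^l}')\to\Lambda$ by Assumption \ref{assump:variance_bounded_gamma_m}. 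A CLT for $m$-dependent triangular arrays then applies, for instance Bernstein big-block/small-block with Lindeberg trivially satisfied by boundedness, or Cram\'er--Wold combined with Orey's theorem. This gives $\mathcal N(0,\Lambda)$, and Slutsky yields $V=\Gamma^{-1}\Lambda\Gamma^{-1}$.

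\textbf{Main obstacle.} I expect the hardest step to be the uniform remainder control in Step 1: showing that the second-order terms of the $\alpha_i$-expansion average out to $o_p(T^{-1/2})$ in the $\beta$-equation with only $(\log N)^2/T\to0$. The plan is to reuse the argument of Theorem \ref{thm:main} verbatim, because it relies only on (a) conditional independence given $\bm B_T$, which is unchanged, and (b) concentration of bounded functions of $B_t$. For (b) we substitute the interleaving argument above, which loses only the constant factor $M+1$.
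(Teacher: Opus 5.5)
Your proposal is correct and follows the paper's proof. It uses the same projected-score linearisation leading to \eqref{eq:beta_bahadur}, the same splitting of $\{B_t\}$ into $M+1$ interleaved independent subsequences to preserve the rates of the auxiliary lemmas, and the same blocking-plus-Lindeberg CLT for the bounded $M$-dependent array $J_{Nt}$ with long-run variance tending to $\Lambda$ (Lemma \ref{lem:clt_m}), followed by Slutsky. The only differences are organisational: you control the $\mathbb{H}-\tilde{\mathbb{H}}$ remainders by conditioning on $\bm B_T$ (conditionally independent, non-identically distributed sums over $t$) rather than by subsequence splitting, and you carry the conditionally centred part of the raw score into Step 3 instead of absorbing it into Lemma \ref{lem:hajek_rs_1}; both lead to the same representation.
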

A proof can be found in section \ref{sec:proof of thm:main}.

\subsection{Statistical Inference}
For a kernel function $K: \R \to \R$, denote $K_h(u) = h^{-1} K(u/h)$ and 
$\hat{\epsilon}_{it} = Y_{it} - \hat{\alpha}_i - X_{it}'\hat{\beta}$.
Define $\hat{\gamma}_i$, $\hat{f}_i$, $\hat{m}_{Nt}$ and $\overline{m}_{NT}$ as in Equation \eqref{eq:mNt}.
Define for all $1 \leq l \leq M$, 
\begin{align*}
    \hat{\Sigma}_N^l \coloneq \frac{1}{T - l} \sum_{t = 1}^{T - l} \left( \hat{m}_{Nt} - \bar{m}_{NT}\right) \left( \hat{m}_{N(t + l)} - \bar{m}_{NT}\right)'.
\end{align*}
We estimate $\Gamma$ and $\Lambda$ by 
\begin{align*}
    \hat{\Gamma}_N \coloneq \frac{1}{NT} \sum_{i = 1}^N \sum_{t = 1}^T K_h(\hat{\epsilon}_{it})X_{it}(X_{it} - \hat{\gamma}_i), \quad 
    \hat{\Lambda}_N \coloneq \hat{\Sigma}_N^0 + \sum_{l = 1}^M \left( \hat{\Sigma}_N^l + (\hat{\Sigma}_N^l)' \right).
\end{align*}
It follows that the asymptotic variance in Theorem \ref{thm:main_m} can be estimated by the robust covariance estimator:
\begin{align}
    \hat V= \hat \Gamma_N^{-1} \hat \Lambda_N   \hat \Gamma_N^{-1}.\label{eq:robust_covariance_m}
\end{align}
\begin{theorem} \label{thm:covariance_estimation_m}
    Suppose $\{ B_t \}$ is stationary and $B_t$ and $B_{t + k}$ are independent for any $t $ and $k > M$.
    Suppose the kernel $K$ is continuous, bounded, and of bounded variation on the real line, and the sequence of $h$ satisfies $h\to 0$ and $(\log N)/(Th)\to 0$ as $N,T\to \infty$.  Under Assumptions of Theorem \ref{thm:main_m} and $\log T/N\to 0$, it holds that $\hat{V} \longpto V$.
\end{theorem}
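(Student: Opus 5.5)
The proof has two parts: $\hat\Gamma_N \pto \Gamma$, which does not involve the serial dependence of $\{B_t\}$, and $\hat\Lambda_N \pto \Lambda$. The conclusion then follows from the continuous mapping theorem, since $\Gamma$ is invertible by Assumption \ref{assump:Jacobian}.

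\textbf{Step 1: $\hat\Gamma_N$ and the plug-in ingredients.} The bread matrix $\hat\Gamma_N$ and the ingredients $\hat f_i$, $\hat\gamma_i$ are unchanged from Section \ref{sec:inference}. I would reuse the argument of Theorem \ref{thm:covariance_estimation}(i) essentially verbatim. Replace $\hat\epsilon_{it}$ by $\epsilon_{it}$ using the uniform rate $\max_i|\hat\alpha_i-\alpha_{i0}|+\|\hat\beta-\beta_0\|=O_p(\sqrt{\log N/T})$, which comes from the expansion behind Theorem \ref{thm:main_m}. Then control the kernel sums uniformly in $i$ by a VC/bounded-variation maximal inequality; this is where $(\log N)/(Th)\to0$ and $h\to0$ enter. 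The only change from the i.i.d. case is that, for fixed $i$, the summands in $t$ are now $M$-dependent rather than independent. I would handle this by splitting $\{1,\dots,T\}$ into $M+1$ interleaved subsequences $\{t\equiv r \bmod (M+1)\}$. Each subsequence is independent across its terms, so every Bernstein/maximal inequality used before applies on each block, and summing $M+1$ (a fixed number of) blocks costs only a constant. This gives $\max_i|\hat f_i-f_i(0)|=o_p(1)$, $\max_i\|\hat\gamma_i-\gamma_i\|=o_p(1)$, and hence $\hat\Gamma_N\pto\Gamma$.

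\textbf{Step 2: $\hat\Sigma_N^l$ for each fixed $l\le M$.} Write
\[
\hat m_{Nt}=J_{Nt}+(m_{Nt}-J_{Nt})+(\hat m_{Nt}-m_{Nt}),
\]
where $m_{Nt}$ uses the true $\epsilon_{it}$ and $\gamma_i$.

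(a) Conditionally on $\bm B_T$, the terms $m_{Nt}-J_{Nt}$ are averages of $N$ independent bounded mean-zero variables. Hence $\max_t\|m_{Nt}-J_{Nt}\|=O_p(\sqrt{\log T/N})=o_p(1)$ by Hoeffding plus a union bound over $t$; this uses $\log T/N\to0$.

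(b) For the term $\hat m_{Nt}-m_{Nt}$, the $\hat\gamma_i$ part is bounded by $\max_i\|\hat\gamma_i-\gamma_i\|$. The indicator part is bounded by $\frac1N\sum_i \idf\{|\epsilon_{it}|\le c\,\delta_N\}$ with $\delta_N=O_p(\sqrt{\log N/T})$. Its time average is $o_p(1)$ because the bounded density gives conditional mean $O(\delta_N)$. I only need the time-averaged squared error to vanish, not a uniform bound.

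(c) By Cauchy–Schwarz,
\[
\hat\Sigma_N^l=\frac{1}{T-l}\sum_t (J_{Nt}-\bar J)(J_{N,t+l}-\bar J)'+o_p(1),
\]
provided $\frac1T\sum_t\|\hat m_{Nt}-J_{Nt}\|^2=o_p(1)$, which (a) and (b) deliver.

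(d) The sequence $J_{Nt}$ is a bounded, stationary, $M$-dependent function of $B_t$ with mean zero. Hence $J_{Nt}J_{N,t+l}'$ is $(M+l)$-dependent, and a variance computation gives $\frac{1}{T-l}\sum_t J_{Nt}J_{N,t+l}'-\Sigma_N^l=O_p(T^{-1/2})$, uniformly in $N$ by boundedness. Also $\bar J=O_p(T^{-1/2})$, so the demeaning is negligible.

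Summing over $l=0,\dots,M$ and using Assumption \ref{assump:variance_bounded_gamma_m} yields $\hat\Lambda_N\pto\Lambda$.

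\textbf{Main obstacle.} The delicate step is (b): bounding the effect of $\hat\epsilon_{it}$ inside the indicators uniformly over $i$ while the $B_t$ are dependent. A sup over the parameter ball of an empirical process in $t$ is needed, since $\hat\alpha_i$ depends on the data. I would first try the blocking device of Step 1 combined with a bracketing bound for the class $\{\idf\{\epsilon\le a+x'b\}\}$. Failing that, I would use the cruder bound via $\max_i|\hat\alpha_i-\alpha_{i0}|$ together with the conditional density bound from Assumption \ref{assump:bounded_pdf_derivative}.
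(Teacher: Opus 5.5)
Your argument is correct. It shares the paper's skeleton: $\hat\Gamma_N$ via the Kato et al.\ kernel argument with $M$-dependence absorbed by $M+1$ interleaved subsequences, conditional Hoeffding across $i$ given the common shocks, and a law of large numbers for the bounded $(M+l)$-dependent products $J_{Nt}J_{N,t+l}'$.

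The genuine difference is how you remove $\hat\epsilon_{it}$ from the indicators. The paper's proof of this theorem only says that the argument of Theorem~\ref{thm:covariance_estimation}(i) carries over. That argument bounds $\max_i\E\|T^{-1}\sum_t(\hat z_{it}-z^0_{it})\|$ unit by unit, using a maximal inequality for the VC class $\{\idf\{\epsilon-x'b\le a\}x_j\}$ uniformly over $(a,b)$ (van der Vaart--Wellner Theorem 2.14.1 with Lemma~\ref{lem:conditional_exp}). It relies on independence over $t$ for fixed $i$, so under $M$-dependence it tacitly needs a blocked version of that inequality.

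You instead dominate $|\idf\{\hat\epsilon_{it}\le 0\}-\idf\{\epsilon_{it}\le 0\}|$ by $\idf\{|\epsilon_{it}|\le c\hat\delta\}$, where $\hat\delta=\max_i|\hat\alpha_i-\alpha_{i0}|\vee\|\hat\beta-\beta_0\|$. Because $\hat m_{Nt}$ involves only cross-sectional averages, you need no uniformity in $i$ and no empirical process. A first moment and the bounded density suffice, with no independence over $t$ or $i$. Together with Cauchy--Schwarz on time-averaged squared errors, this makes the $M$-dependent extension of the $\hat\Lambda_N$ part essentially free. The paper's route delivers more than is needed here, namely a $T^{-1/2}$ rate uniformly in $i$.

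Consequently, the ``main obstacle'' you flag is not one. Make (b) rigorous through a deterministic $\delta_N\to 0$ with $P(\hat\delta\le\delta_N)\to 1$; consistency from Proposition~\ref{prop:consistency_m} suffices, and no rate is needed. On that event the error is dominated by $\frac{1}{NT}\sum_{i,t}\idf\{|\epsilon_{it}|\le c\delta_N\}$, whose mean is at most $2C_f c\,\delta_N$, and Markov's inequality finishes. Your phrase ``conditional mean $O(\delta_N)$'' with a random $\hat\delta$ is not literally right, since $\hat\delta$ depends on the $\epsilon_{it}$.

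Relatedly, (c) only needs $T^{-1}\sum_t\|m_{Nt}-J_{Nt}\|^2=o_p(1)$. The bound $\E\|m_{Nt}-J_{Nt}\|^2\le C/N$, from conditional independence across $i$, therefore already suffices in (a). Your route does not actually use $\log T/N\to 0$ or the union bound over $t$.
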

A proof can be found in section \ref{sec:proof of thm:covariance_estimation}.

\medskip

\section{Monte Carlo Simulations}
\label{sec:simulation}

We conduct Monte Carlo simulations to examine the finite-sample performance of the fixed effects quantile regression estimator and to evaluate the accuracy of the proposed variance estimator in the presence of common time shocks. The simulation design reflects empirically relevant panel dimensions with a large cross-sectional dimension and a moderate time dimension.

We consider the following location-scale shift model similar to \cite{kato2012asymptotics}
\begin{align}\label{eq:MC DGP}
Y_{it}
=
\alpha_i
+
\beta X_{it}
+
(1 + \gamma X_{it}) U_{it},
\qquad
i = 1,\ldots,N,\quad t = 1,\ldots,T,
\end{align}
where $\{\alpha_i\}_{i=1}^N$ are individual fixed effects and $X_{it}$ is a scalar regressor.
The regressors are generated according to
$
X_{it}
=
\chi^2_{it}(3)
+
0.3\,\alpha_i,
$
where $\chi^2_{it}(3)$ are independent chi-square random variables with three degrees of freedom, and $\alpha_i \sim \mathrm{Uniform}(0,1)$.
The error term is generated as
\begin{align*}
U_{it}
=\frac{\varepsilon_{it} + \eta_t}{\sqrt{2}}
\end{align*}
where the idiosyncratic component $\varepsilon_{it}  \sim N(0,1)$ and common shock $\eta_t \sim N(0,1)$ are mutually independent and i.i.d. across $i$ and $t$. The common shock component $\eta_t$ induces cross-sectional dependence while preserving the conditional quantile structure.
We focus on parameters
$
\beta = 1,
$
$
\gamma = 0.2
$, and estimate the model for three quantile levels $\tau = \{0.25, 0.50, 0.75\}$.
Under this specification, the true quantile slope coefficient equals
$\beta(\tau)
=
\beta + \gamma q_\tau,$
where $q_\tau$ denotes the $\tau$-quantile of the standard normal distribution.

The panel dimensions are set to
$
N \in\{ 250, 500, 1,000\}$,
$
T \in\{ 25,50\}
$,
and each experiment is repeated over $2,000$ Monte Carlo replications.
For each simulated sample, we estimate the FEQR model using the standard Koenker-type estimator as defined in \eqref{eq:FEQR}.
We consider two covariance estimators:
(i) The conventional sandwich variance estimator from \cite{kato2012asymptotics}.
(ii)
The proposed robust covariance estimator of \eqref{eq:robust_covariance}. In both covariance estimators, a Gaussian kernel is used with the bandwidth chosen according to Silverman's rule-of-thumb, $h=1.06 \cdot \mathrm{sd}(\hat{\epsilon})\cdot T^{-1/5}$, while imposing a lower bound of $0.05$ to prevent undersmoothing and ensure numerical stability in finite samples.
For both variance estimators, we construct nominal $95\%$ confidence intervals using asymptotic normal approximations.
We evaluate FEQR estimator performance using bias, root mean squared error (RMSE), and the two covariance estimators by their coverage probabilities. The results are displayed in Tables \ref{tab:mc_bias_rmse_split} and \ref{tab:mc_coverage_split}.

\begin{table}[t]
\centering
\begin{threeparttable}
\caption{Monte Carlo performance of the FEQR  estimator: bias and RMSE}
\label{tab:mc_bias_rmse_split}
\begin{tabular}{lcccccc}
\toprule
&\multicolumn{2}{c}{$\tau=0.25$}&\multicolumn{2}{c}{$\tau=0.50$}&\multicolumn{2}{c}{$\tau=0.75$}\\
\cmidrule(lr){2-7}
$(N,T)$ & Bias & RMSE  & Bias & RMSE   & Bias  & RMSE   \\
\midrule
\addlinespace[0.25em]
$(100,10)$ & 0.0103 & 0.0617 & -0.0005 & 0.0594 & -0.0127 & 0.0641\\
$(100,25)$ & 0.0048 & 0.0388 & 0.0009 & 0.0366 & -0.0043 & 0.03884\\
$(100,50)$ & 0.0014 & 0.0281 & -0.0004 & 0.0265 & -0.0021 & 0.0278\\
$(100,100)$ & 0.0006 & 0.0196 & -0.0005 & 0.0184 & -0.0011 & 0.0194\\
$(250,10)$  & 0.0123 & 0.0558 &  0.0002 &  0.0521 & -0.0102 & 0.0554 \\
$(250,25)$  & 0.0034 & 0.0342 &  -0.0004 &  0.0323 & -0.0043 & 0.0344 \\
$(250,50)$  & 0.0027 &  0.0248 & 0.0007 & 0.0232 &  -0.0023 &  0.0244 \\
$(250,100)$ & 0.0011 & 0.0172 & 0.0003 & 0.0163 & -0.0007 & 0.0172 \\
$(500,10)$  & 0.0099 & 0.0523 & -0.0001 & 0.0498 &  -0.0111 &  0.0527 \\
$(500,25)$  & 0.0035 & 0.0325 & -0.0009 & 0.0311 &  -0.0033 &  0.0325 \\
$(500,50)$  & 0.0004 & 0.0235 & -0.0006 & 0.0230 &  -0.0053 & 0.0271\\
$(500,100)$ & 0.0008 & 0.0161 & 0.0001 & 0.0154 & -0.0010 & 0.0162\\
$(1000,10)$ & 0.0112 & 0.0501 & 0.0003 & 0.0472 & -0.0105 & 0.0511 \\
$(1000,25)$ & 0.0027 & 0.0319 & -0.0009 & 0.0315 & -0.0048 & 0.0334\\
$(1000,50)$ & 0.0019 & 0.0227 & 0.0005 & 0.0216 & -0.0015 & 0.0227\\
$(1000,100)$ & 0.0006 & 0.0153 & 0.0001 & 0.0145 & -0.0008 & 0.0152\\
\bottomrule
\end{tabular}
\begin{tablenotes}[flushleft]\footnotesize
\item \emph{Notes:} Entries report the Monte Carlo bias $\E[\hat\beta(\tau)]-\beta(\tau)$ and RMSE $\sqrt{\E[(\hat\beta(\tau)-\beta(\tau))^2]}$ of the Koenker-type FEQR  estimator at quantile index $\tau$. The true coefficient is $\beta(\tau)=\beta+\gamma q_\tau$. Results are based on $2,000$ Monte Carlo replications under the DGP in Section~\ref{sec:simulation}. 
\end{tablenotes}
\end{threeparttable}
\end{table}

The performance of the FEQR estimator under the common shock model is evaluated first by its bias and RMSE.  Table \ref{tab:mc_bias_rmse_split}  reports the results for the three quantiles and different sample sizes. The results show numerical evidence that the FEQR estimator has small bias for every sample under consideration. Nevertheless, we highlight the important numerical evidence that for a fixed cross-section dimension, the bias decreases as the time-series dimension increases. On the other hand, for a given time-series, the bias is relatively stable as the cross-section dimension increases. Here we recall that from Lemma \ref{prop:consistency} above, the main condition on the sample size growth for establishing consistency of the estimator is that $(\log N)^2 / T \to 0$. Hence, the numerical results corroborate the theoretical prediction. In addition, the RMSE decreases monotonically as either $N$ or $T$ increases. These results suggest that the FEQR estimator performs well in small samples for the common shock model, and even for the case of large $N$ relative to $T$ the model.

\begin{table}[t]
\centering
\begin{threeparttable}
\caption{Monte Carlo inference accuracy: coverage of robust and standard covariance estimators}
\label{tab:mc_coverage_split}
\begin{tabular}{lcccccc}
\toprule
 &\multicolumn{2}{c}{$\tau=0.25$}&\multicolumn{2}{c}{$\tau=0.50$}&\multicolumn{2}{c}{$\tau=0.75$}\\
\cmidrule(lr){2-7}
 $(N,T)$ & Robust & Standard & Robust & Standard & Robust & Standard \\
\midrule
\addlinespace[0.25em]
$(100,10)$ & 0.886 & 0.833 & 0.901 & 0.828 & 0.869 & 0.805\\
$(100,25)$ & 0.918 & 0.806 & 0.924 & 0.827 & 0.914 & 0.806\\
$(100,50)$ & 0.931 & 0.793 & 0.928 & 0.795 & 0.928 & 0.800\\
$(100,100)$ & 0.940 & 0.789 & 0.942 & 0.801 & 0.939 & 0.808\\
$(250,10)$  & 0.865 & 0.648 & 0.898 & 0.688 & 0.868  & 0.671  \\
$(250,25)$  & 0.901 & 0.670 & 0.914 & 0.670 & 0.908  & 0.644  \\
$(250,50)$  & 0.920 & 0.647 & 0.923 & 0.647 & 0.925 &  0.651 \\
$(250,100)$  & 0.932 & 0.630 & 0.935 & 0.636 & 0.933 &  0.647 \\
$(500,10)$  & 0.869 & 0.528 & 0.905 & 0.537 & 0.861 & 0.523 \\
$(500,25)$  & 0.895 & 0.527 & 0.915 & 0.521 & 0.903 & 0.503 \\
$(500,50)$  & 0.922 & 0.527 & 0.925 & 0.509 & 0.920 & 0.508 \\
$(500,100)$  & 0.929 & 0.501 & 0.935 & 0.508 & 0.929 &  0.511 \\
$(1000,10)$ & 0.874 & 0.413 & 0.886 & 0.436 & 0.854 & 0.428 \\
$(1000,25)$ & 0.902 & 0.393 & 0.907 & 0.387 & 0.894 & 0.374 \\
$(1000,50)$ & 0.923 & 0.381 & 0.919 & 0.377 & 0.932 & 0.395 \\
$(1000,100)$  & 0.936 & 0.402 & 0.936 & 0.407 & 0.934 & 0.393 \\
\bottomrule
\end{tabular}
\begin{tablenotes}[flushleft]\footnotesize
\item \emph{Notes:} Entries report empirical coverage probabilities of nominal 95\% confidence intervals for $\beta(\tau)$ constructed using (i) the proposed robust covariance estimator and (ii) the conventional sandwich covariance estimator that ignores cross-sectional dependence.  Results are based on $2,000$ Monte Carlo replications under the DGP in Section~\ref{sec:simulation}. Bandwidth: $h=1.06 \cdot \mathrm{sd}(\hat{\epsilon})\cdot T^{-1/5}$.
\end{tablenotes}
\end{threeparttable}
\end{table}

Results evaluating the finite sample performance of the proposed inference procedure are collected in Table \ref{tab:mc_coverage_split}. If the asymptotic inference procedure correctly approximates the finite sample distribution of $\hat\beta(\tau) - \beta_{0}(\tau)$, the coverage rate should be close to the nominal level of significance ($95$\%). Results show that, for a given $N$, empirical coverage improves as $T$ increases. Moreover, as one investigates the diagonal of the table, for example, the pairs $(100,10)$, $(250,25)$, $(500,50)$, and $(1000,100)$, one sees that coverage rates for the Robust case improve, while empirical coverages worsen for the Standard case. 
Overall, the confidence intervals display accurate finite-sample coverage, with empirical coverage under the proposed variance estimator closely aligned with the nominal $95$\% level.

Next, to investigate the robustness of the numerical results we consider a DGP model without common shocks. In particular, we use the same process as described in equation \eqref{eq:MC DGP}, but use a simpler process for the innovations such that $U_{it}=\varepsilon_{it}$, where $\varepsilon_{it} \sim N(0,1)$. The corresponding results for bias and RMSE are provided in Table \ref{tab:mc_bias_rmse_split_NOCS}, and those for coverage rates in Table \ref{tab:mc_coverage_split_NOCS}. 

\begin{table}[h]
\centering
\begin{threeparttable}
\caption{Monte Carlo performance of the FEQR  estimator (no common shocks): bias and RMSE}
\label{tab:mc_bias_rmse_split_NOCS}
\begin{tabular}{lcccccc}
\toprule
&\multicolumn{2}{c}{$\tau=0.25$}&\multicolumn{2}{c}{$\tau=0.50$}&\multicolumn{2}{c}{$\tau=0.75$}\\
\cmidrule(lr){2-7}
$(N,T)$ & Bias & RMSE  & Bias & RMSE   & Bias  & RMSE   \\
\midrule
\addlinespace[0.25em]
$(100,10)$  & 0.0128 & 0.0402 &  0.0002 &  0.0364 & -0.0126 & 0.0421 \\
$(100,25)$  & 0.0048 & 0.0248 &  0.0004 &  0.0228 & -0.0035 & 0.0252 \\
$(100,50)$  & 0.0024 & 0.0173 &  0.0003 &  0.0160 & -0.0027 & 0.0177 \\
$(100,100)$  & 0.0009 & 0.0122 &  -0.0004 &  0.0112 & -0.0015 & 0.0124 \\
$(250,10)$  & 0.0127 & 0.0284 &  0.0006 &  0.0239 & 0.0116 & 0.0277 \\
$(250,25)$  & 0.0043 & 0.0163 &  -0.0002 &  0.0140 & -0.0043 & 0.0159 \\
$(250,50)$  & 0.0021 &  0.0111 & 0.0001 & 0.0103 &  -0.0021 &  0.0112 \\
$(250,100)$  & 0.0011 & 0.0080 &  0.0001 &  0.0071 & -0.0010 & 0.0077 \\
$(500,10)$  & 0.0117 & 0.0209 & 0.0002 & 0.0167 &  -0.0116 &  0.0216 \\
$(500,25)$  & 0.0046 & 0.0119 & 0.0003 & 0.0104 &  -0.0039 &  0.0118 \\
$(500,50)$  & 0.0022 & 0.0081 & 0.0001 & 0.0072 &  -0.0019 & 0.0079\\
$(500,100)$  & 0.0010 & 0.0056 &  -0.0001 &  0.0050 & -0.0012 & 0.0055 \\
$(1000,10)$ & 0.0119 & 0.0174 & -0.0001 & 0.0119 & -0.0117 & 0.0170 \\
$(1000,25)$ & 0.0043 & 0.0088 & 0.0001 & 0.0071 & -0.0041 & 0.0090\\
$(1000,50)$ & 0.0020 & 0.0059 & 0.0001 & 0.0053 & -0.0019 & 0.0059\\
$(1000,100)$  & 0.0010 & 0.0041 &  -0.0001 &  0.0035 & -0.0009 & 0.0040 \\
\bottomrule
\end{tabular}
\begin{tablenotes}[flushleft]\footnotesize
\item \emph{Notes:} Entries report the Monte Carlo bias $\E[\hat\beta(\tau)]-\beta(\tau)$ and RMSE $\sqrt{\E[(\hat\beta(\tau)-\beta(\tau))^2]}$ of the Koenker-type FEQR  estimator at quantile index $\tau$. The true coefficient is $\beta(\tau)=\beta+\gamma q_\tau$. Results are based on $2,000$ Monte Carlo replications under the DGP in Section~\ref{sec:simulation}. 
\end{tablenotes}
\end{threeparttable}
\end{table}

Results in Table \ref{tab:mc_bias_rmse_split_NOCS} are similar to those in Table \ref{tab:mc_bias_rmse_split} with the difference that RMSE are smaller. This is an expected result since consistency of the estimator holds under a mild condition that $(\log N)^2 / T \to 0$ described in Proposition \ref{prop:consistency}.

Coverage rates are collected in Table \ref{tab:mc_coverage_split_NOCS}. Under no common shocks, results for the Standard method are substantially improved relative to the previous case. The table also shows that empirical coverages for the Robust case are close to the nominal 95\% coverage. It is also interesting to see that both the Robust method seems to improve more as $T$ increases, for a fixed $N$. These results confirm numerically the robustness of the Robust method.

\begin{table}[h]
\centering
\begin{threeparttable}
\caption{Monte Carlo inference accuracy (no common shocks): coverage of robust and standard covariance estimators}
\label{tab:mc_coverage_split_NOCS}
\begin{tabular}{lcccccc}
\toprule
 &\multicolumn{2}{c}{$\tau=0.25$}&\multicolumn{2}{c}{$\tau=0.50$}&\multicolumn{2}{c}{$\tau=0.75$}\\
\cmidrule(lr){2-7}
 $(N,T)$ & Robust & Standard & Robust & Standard & Robust & Standard \\
\midrule
\addlinespace[0.25em]
$(100,10)$  & 0.921 & 0.960 &  0.947 &  0.974 & 0.907 & 0.956 \\
$(100,25)$  & 0.953 & 0.962 &  0.959 &  0.971 & 0.942 & 0.951 \\
$(100,50)$  & 0.957 & 0.959 &  0.955 &  0.958 & 0.953 & 0.955 \\
$(100,100)$  & 0.956 & 0.963 &  0.962 &  0.960 & 0.951 & 0.955 \\
$(250,10)$  & 0.885 & 0.935 & 0.940 & 0.970 & 0.898  & 0.948  \\
$(250,25)$  & 0.939 & 0.954 & 0.963 & 0.974 & 0.938  & 0.954  \\
$(250,50)$  & 0.950 & 0.959 & 0.986 & 0.962 & 0.950 &  0.958 \\
$(250,100)$  & 0.947 & 0.948 &  0.957 &  0.959 & 0.952 & 0.957 \\
$(500,10)$  & 0.880 & 0.928 & 0.943 & 0.970 & 0.873 & 0.917 \\
$(500,25)$  & 0.928 & 0.945 & 0.958 & 0.966 & 0.919 & 0.941 \\
$(500,50)$  & 0.945 & 0.949 & 0.958 & 0.964 & 0.950 & 0.957 \\
$(500,100)$  & 0.955 & 0.957 &  0.959 &  0.961 & 0.957 & 0.962 \\
$(1000,10)$ & 0.818 & 0.881 & 0.941 & 0.968 & 0.819 & 0.890 \\
$(1000,25)$ & 0.914 & 0.934 & 0.958 & 0.970 & 0.904 & 0.921 \\
$(1000,50)$ & 0.931 & 0.937 & 0.947 & 0.953 & 0.931 & 0.945 \\
$(1000,100)$  & 0.939 & 0.940 &  0.961 &  0.966 & 0.941 & 0.949 \\
\bottomrule
\end{tabular}
\begin{tablenotes}[flushleft]\footnotesize
\item \emph{Notes:} Entries report empirical coverage probabilities of nominal 95\% confidence intervals for $\beta(\tau)$ constructed using (i) the proposed robust covariance estimator and (ii) the conventional sandwich covariance estimator that ignores cross-sectional dependence.  Results are based on $2,000$ Monte Carlo replications under the DGP in Section~\ref{sec:simulation}. Bandwidth: $h=1.06 \cdot \mathrm{sd}(\hat{\epsilon})\cdot T^{-1/5}$.
\end{tablenotes}
\end{threeparttable}
\end{table}

\medskip

\section{Conclusion}\label{sec:conclusion}

This paper develops an asymptotic theory for fixed-effects panel quantile regression with pervasive common shocks, a feature central to many economic and financial panels but largely absent from existing FEQR theory. By modelling outcomes through unit, time, and idiosyncratic latent components, we allow for general cross-sectional dependence while retaining the standard FEQR estimator. The key insight is that conditioning on common time shocks induces DGP-induced smoothing, which restores local differentiability of projected scores and enables Taylor expansions despite the non-smooth objective. This yields consistency and asymptotic normality under joint asymptotics with \(N,T \to \infty\), including regimes with \(T \ll N\).
Common shocks fundamentally alter the stochastic structure of FEQR: the estimator concentrates around a $\sqrt{T}$-order asymptotic component, with an asymptotic covariance matrix that differs from that in classical settings, rendering conventional FEQR variance estimator inconsistent.
We therefore develop inference procedures, including a novel robust variance estimator, that deliver valid confidence intervals and hypothesis tests in the presence of systemic time shocks while remaining applicable under the classical FEQR framework.

Several directions for future research appear promising. First, the framework could be extended to other panel models with non-smooth objective functions -- such as censored or threshold-type estimators -- where analogous forms of inherent smoothing may emerge in the presence of common shocks. Second, an important extension is to permit serially correlated time factors and to develop inference procedures that simultaneously accommodate both cross-sectional and temporal dependence in non-smooth panel settings. Finally, incorporating two-way fixed effects into panel quantile regression under full generality remains theoretically demanding and represents a particularly interesting open problem.

\medskip


\newpage

\section*{Appendix}\appendix

\subsection*{Notations}
Throughout the appendix, 
for $f:\mathcal{B}\to\mathbb{R}$, define $\mathbb{P}_T f=T^{-1}\sum_{t=1}^T f(B_t)$. 
For each $i\in\mathbb{N}$, let us write $$h_i(b)=f_i(0\mid B_t=b),\qquad k_i(b)=\mathbb{E}[f_i(0\mid X_{it},B_t)X_{it}\mid B_t=b]$$
for the conditional densities and
conditional density-weighted moments. 

\section{Proof of the Main Results}

\subsection{Proof of Proposition \ref{prop:consistency} and Proposition \ref{prop:consistency_m}}\label{sec:proof of prop:consistency}
    The proof for Proposition \ref{prop:consistency} follows the same arguments as in the proof of Theorem 3.1 in \cite{kato2012asymptotics} and is not repeated here--note that the dependence structure was only used in bounding their Equation (A.4) when Marcinkiewicz-Zygmund inequality is invoked with independence across $t=1,...,T$, which is satisfied under our setup following \eqref{eq:DGP}. Here, under Assumption \ref{assump:x_bounded_support}, one can apply Hoeffding's inequality in place of Marcinkiewicz-Zygmund and hence the desired results requires only a weaker condition of  $(\log N)/T =o(1)$.

    The proof of Proposition \ref{prop:consistency_m} is identical, except that, when bounding the second term in Equation (A.4), for each $1 \leq i \leq N$, we split $\Delta_{ni}$ into $M+1$ subseries, so that any two terms within the same subseries are at least $M+1$ periods apart and hence independent. Since $M$ is fixed, this splitting does not affect the rate conclusions, and the desired result follows.
\qed

\subsection{Proof of Theorem \ref{thm:main} and \ref{thm:main_m}}\label{sec:proof of thm:main}
 We first prove Theorem \ref{thm:main}.
 Observe that Lemma \ref{lem:QR_computational} show that the subgradients satisfy
   \begin{align*}
        &O_p(T^{-1}) = \sup_{1 \leq i \leq N} | \mathbb{H}_{Ni}^{(1)}(\hat{\alpha}_i, \hat{\beta}) |, \\
        &O_p(T^{-1}) = \norm{\mathbb{H}_{N}^{(2)}(\hat{\bm{\alpha}}, \hat{\beta})}.
        \end{align*}
Further,  Lemma \ref{lem:differentiability_expectations} implies the differentiability of $\tilde{\mathbb{H}}_{Ni}^{(1)}$ and $\tilde{\mathbb{H}}_{N}^{(2)}$, and it also shows the uniform boundedness of the second-order derivatives.
Hence, Taylor expansions on $\tilde{\mathbb{H}}_{N}^{(2)}(\hat{\bm{\alpha}}, \hat{\beta})$ yield
\begin{align*}
    O_p(T^{-1}) =& \mathbb{H}_{N}^{(2)}(\hat{\bm{\alpha}}, \hat{\beta}) \\
    =& \tilde{\mathbb{H}}_{N}^{(2)}(\hat{\bm{\alpha}}, \hat{\beta}) + (\mathbb{H}_{N}^{(2)}(\hat{\bm{\alpha}}, \hat{\beta}) - \tilde{\mathbb{H}}_{N}^{(2)}(\hat{\bm{\alpha}}, \hat{\beta})) \\
    =& \tilde{\mathbb{H}}_{N}^{(2)}(\bm{\alpha}_0, \beta_0) - \left( \frac{1}{NT} \sum_{i = 1}^N \sum_{t = 1}^T \E[f_i(0 \mid X_{it}, B_t)X_{it}X_{it}' \mid B_t] \right) (\hat{\beta} - \beta_0) \\
    &- \frac{1}{NT} \sum_{i = 1}^N (\hat{\alpha}_i - \alpha_{i0}) \sum_{t = 1}^T \E[f_i(0 \mid X_{it}, B_t)X_{it} \mid B_t] \\
    &+ O_p\left(\max_{1 \leq i \leq N} |\hat{\alpha}_i - \alpha_{i0}|^2 \vee \norm{\hat{\beta} - \beta_0}^2\right) \\
    &+ \left(\mathbb{H}_{N}^{(2)}(\hat{\bm{\alpha}}, \hat{\beta}) - \tilde{\mathbb{H}}_{N}^{(2)}(\hat{\bm{\alpha}}, \hat{\beta})\right).
    \numberthis \label{eq:H2_expansion}
\end{align*}
Also, by Taylor expansions on each  $\tilde{\mathbb{H}}_{Ni}^{(1)}(\hat{\alpha}_i, \hat{\beta})$, it holds  uniformly over $1 \leq i \leq N$
\begin{align*}
    O_p(T^{-1}) =& \mathbb{H}_{Ni}^{(1)}(\hat{\alpha}_i, \hat{\beta}) \\
    =& \tilde{\mathbb{H}}_{Ni}^{(1)}(\hat{\alpha}_i, \hat{\beta})  +  \left(\mathbb{H}_{Ni}^{(1)}(\hat{\alpha_{i}}, \hat{\beta}) - \tilde{\mathbb{H}}_{Ni}^{(1)}(\hat{\alpha}_i, \hat{\beta})\right) \\
    =& \tilde{\mathbb{H}}_{Ni}^{(1)}(\alpha_{i0}, \beta_0) - \left( \frac{1}{T} \sum_{i = 1}^T f_i(0 \mid B_t)\right) (\hat{\alpha}_i - \alpha_{i0}) \\ 
    &- \left( \frac{1}{T} \sum_{t = 1}^T \E[f_i(0 \mid X_{it}, B_t)X_{it}' \mid B_t] \right)(\hat{\beta} - \beta_0) \\
    &+ O_p\left(\max_{1 \leq i \leq N} |\hat{\alpha}_i - \alpha_{i0}|^2 \vee \norm{\hat{\beta} - \beta_0}^2\right) \\
    &+ \left(\mathbb{H}_{Ni}^{(1)}(\hat{\alpha_{i}}, \hat{\beta}) - \tilde{\mathbb{H}}_{Ni}^{(1)}(\hat{\alpha}_i, \hat{\beta})\right). 
\end{align*}
By rearranging the above, we have uniformly over $1 \leq i \leq N$,
\begin{align}
    \hat{\alpha}_i - \alpha_{i0} =& \left( \frac{1}{T} \sum_{t = 1}^T f_i(0 \mid B_t) \right)^{-1} \tilde{\mathbb{H}}_{Ni}^{(1)}(\alpha_{i0}, \beta_{i0})\nonumber \\
    &- \left( \frac{1}{T} \sum_{t = 1}^T f_i(0 \mid B_t) \right)^{-1} \left( \frac{1}{T} \sum_{t = 1}^T \E[f_i(0 \mid X_{it}, B_t)X_{it}' \mid B_t] \right)(\hat{\beta} - \beta_0) \nonumber\\
    &+ O_p\left(\max_{1 \leq i \leq N} |\hat{\alpha}_i - \alpha_{i0}|^2 \vee \norm{\hat{\beta} - \beta_0}^2\right)\nonumber\\
    &+ \left( \frac{1}{T} \sum_{t = 1}^T f_i(0 \mid B_t) \right)^{-1}\left(\mathbb{H}_{Ni}^{(1)}(\hat{\alpha_{i}}, \hat{\beta}) - \tilde{\mathbb{H}}_{Ni}^{(1)}(\hat{\alpha}_i, \hat{\beta})\right) \nonumber\\
    &+ O_p(T^{-1}).\label{eq:alpha_representation}
\end{align}
Recall that
$
    h_i(b) = f_i(0 \mid B_t = b)$ and  $k_i(b) = \E[f_i(0 \mid X_{it}, B_t)X_{it} \mid B_t = b]$.
Plugging \eqref{eq:alpha_representation} into \eqref{eq:H2_expansion},  we obtain 
\begin{align*}
    O_p(T^{-1}) =& \tilde{\mathbb{H}}_N^{(2)}(\alpha_0, \beta_0) - \frac{1}{N} \sum_{i = 1}^N (\mathbb{P}_T h_i)^{-1} (\mathbb{P}_T k_i) \tilde{\mathbb{H}}_{Ni}^{(1)}(\alpha_{i0}, \beta_0) \\
    &- \Gamma_N (\hat{\beta} - \beta_0) \\
    &+ O_p\left( \max_{1 \leq i \leq N} |\hat{\alpha}_i - \alpha_{i0}|^2 \vee \norm{\hat{\beta} - \beta_0}^2\right) \left( \frac{1}{N}\sum_{i = 1}^N \mathbb{P}_T k_i + 1 \right) \\
    &+ \frac{1}{N} \sum_{i = 1}^N (\mathbb{P}_T h_i)^{-1} (\mathbb{P}_T k_i) \left( \mathbb{H}^{(1)}_{Ni}(\hat{\alpha}_i, \hat{\beta}) - \tilde{\mathbb{H}}^{(1)}_{Ni}(\hat{\alpha}_i, \hat{\beta}) \right) \\
    &+ \mathbb{H}_N^{(2)}(\hat{\alpha}, \hat{\beta}) - \tilde{\mathbb{H}}_N^{(2)}( \hat{\alpha}, \hat{\beta}),
\end{align*}
where 
\begin{align*}
    \Gamma_N =& \frac{1}{NT} \sum_{i = 1}^N \sum_{t = 1}^T \E[f_i(0 \mid X_{it}, B_t) X_{it} X_{it}' \mid B_t] \\
    &- \frac{1}{N
    T} \sum_{i = 1}^N \left(\sum_{t = 1}^T \E[f_i(0 \mid X_{it}, B_t) X_{it} \mid B_t] \right)(\mathbb{P}_T h_i)^{-1} (\mathbb{P}_T k_i)'.
\end{align*}
Rearranging the above gives 
\begin{align*}
    \hat{\beta} - \beta_0 + o_p\left( \norm{\hat{\beta} - \beta_0}\right)
    =& \Gamma_N^{-1} \left( \tilde{\mathbb{H}}_N^{(2)}(\alpha_0, \beta_0) - \frac{1}{N} \sum_{i = 1}^N (\mathbb{P}_T h_i)^{-1} (\mathbb{P}_T k_i) \tilde{\mathbb{H}}_{Ni}^{(1)}(\alpha_{i0}, \beta_0)\right) \\
    &+ \Gamma_N^{-1} \left( \frac{1}{N} \sum_{i = 1}^N (\mathbb{P}_T h_i)^{-1} (\mathbb{P}_T k_i) \left( \mathbb{H}^{(1)}_{Ni}(\hat{\alpha}_i, \hat{\beta}) - \tilde{\mathbb{H}}^{(1)}_{Ni}(\hat{\alpha}_i, \hat{\beta}) \right)\right) \\
    &+ \Gamma_N^{-1} \left( \mathbb{H}_N^{(2)}(\hat{\alpha}, \hat{\beta}) - \tilde{\mathbb{H}}_N^{(2)}( \hat{\alpha}, \hat{\beta}) \right) \\
    &+ O_p\left( \max_{1 \leq i \leq N} |\hat{\alpha}_i - \alpha_{i0}|^2 \right) \\
    &+ O_p(T^{-1}). \numberthis \label{eq:beta_representation}
\end{align*}


By applying Lemmas \ref{lem:hajek_rs_1}, \ref{lem:hajek_rs_2}, \ref{lem:max_alpha_rate}, and \ref{lem:convergence_Gamma} in Section \ref{sec:auxiliary_lemmas} in the appendix,  the second to  the fourth terms on the RHS of \eqref{eq:beta_representation} are together of the order $o_p(T^{-1/2})$.
Hence \eqref{eq:beta_representation} becomes
\begin{align*}
    \hat{\beta} - \beta_0 =&\Gamma_N^{-1} \left( \tilde{\mathbb{H}}_N^{(2)}(\alpha_0, \beta_0) - \frac{1}{N} \sum_{i = 1}^N (\mathbb{P}_T h_i)^{-1} (\mathbb{P}_T k_i) \tilde{\mathbb{H}}_{Ni}^{(1)}(\alpha_{i0}, \beta_0)\right) +o_p\left( T^{-1/2}\right) \\
    =&\Gamma_N^{-1} \left( \tilde{\mathbb{H}}_N^{(2)}(\alpha_0, \beta_0) - \frac{1}{N} \sum_{i = 1}^N\gamma_i \tilde{\mathbb{H}}_{Ni}^{(1)}(\alpha_{i0}, \beta_0)\right) 
    + o_p\left( T^{-1/2}\right), \numberthis \label{eq:beta_bahadur}
\end{align*}
where
$\gamma_i = \E[f_i(0 \mid X_{i1})X_{i1}] / f_i(0)$. This yields an asymptotic linear representation of $\hat \beta- \beta_0$.
The desirable result follows directly from the representation of \eqref{eq:beta_bahadur}, Lemmas \ref{lem:clt} and  \ref{lem:convergence_Gamma}, and an application of the Slutsky's lemma.

We now prove Theorem \ref{thm:main_m}. First, note that the derivation up to Equation \eqref{eq:beta_representation} does not rely on serial independence of $\{B_t\}$. Moreover, the rate conclusions in Lemmas \ref{lem:hajek_rs_1}, \ref{lem:hajek_rs_2}, \ref{lem:max_alpha_rate}, and \ref{lem:convergence_Gamma} continue to hold when $\{B_t\}$ is stationary and $M$-dependent. Indeed, the relevant bounds can be obtained by splitting the sequence into finitely many subsequences of the form
\[
\{B_j,B_{j+(M+1)},B_{j+2(M+1)},\ldots\},
\qquad 1\leq j\leq M+1,
\]
whose elements are independent. Since $M$ is fixed, this decomposition keeps the rates unchanged. Therefore, Equation \eqref{eq:beta_bahadur} remains valid. The desired conclusion then follows from Lemma \ref{lem:clt_m} and Slutsky's lemma.

\qed


\subsection{Proof of Theorem \ref{thm:covariance_estimation} and \ref{thm:covariance_estimation_m}}\label{sec:proof of thm:covariance_estimation}

We first show Theorem \ref{thm:covariance_estimation}.

\noindent\textbf{(i)}. 
First, note that following the same steps as Proposition~3.1 in \cite{kato2012asymptotics}, it holds 
uniformly over $i = 1, \ldots, N$, that
\begin{align}
  &\frac{1}{T}\sum_{t=1}^T K_h(\hat\epsilon_{it}) = f_i(0) + o_p(1), \nonumber\\
  &\frac{1}{T}\sum_{t=1}^T K_h(\hat\epsilon_{it}) X_{it}
    = \mathbb{E}\!\left[f_i(0 \mid X_{i1}) X_{i1}\right] + o_p(1),\nonumber \\
  &\frac{1}{T}\sum_{t=1}^T K_h(\hat\epsilon_{it}) X_{it} X_{it}' 
    = \mathbb{E}\!\left[f_i(0 \mid X_{i1}) X_{i1} X_{i1}'\right] + o_p(1)\label{eq:covar_Kato_et_al_12}
\end{align}
Observe that
under the maintained assumptions, Proposition~\ref{prop:consistency} implies weak consistency of 
$(\hat\alpha_1,\ldots,\hat\alpha_N,\hat\beta')$, which allows one to replace 
$\hat\epsilon_{it}$ with its population counterpart up to an asymptotically negligible error. 
Moreover, for each fixed $i$, the summands in the three averages on the left-hand side are independent across $t = 1, \ldots, T$, as a consequence of the independence of $\{(B_t, U_{it})\}_{t=1}^T$ over $t$,
so these uniform consistency statements follow from the same Bousquet's inequality arguments used in \cite{kato2012asymptotics}.
These results imply that $\sup_{i\in \mathbb N}\|\hat \gamma_i-\gamma_i\|=o_p(1)$.

Now, let us define the oracle quantities of 
\begin{align*}
    m^0_{Nt} =&\frac{1}{N}\sum_{i=1}^N (\tau - \idf\{\epsilon_{it} \leq 0\})(X_{it} - \gamma_i), \quad 
    \bar{m}^0_{NT} =\frac{1}{T} \sum_{t = 1}^T m^0_{Nt}.
\end{align*}
and write $m^0_{Ntj}$, $\bar{m}^0_{NTj}$ for their respective $j$-th element for $j=1,...,p$. 
The difference between $m^0_{NT}$ and $\hat{m}_{NT}$ in \eqref{eq:mNt} is that we replace $\hat{\gamma}_i$ and $\hat \epsilon_{it}$ by the true $\gamma_i$ and $\epsilon_{it}$, respectively.
Notice that conditioning on $\{B_t\}_{t=1}^T$, for each $t=1,...,T$, the summands in $m_{Nt}^0$ and $\bar{m}^0_{NT}$ are independent over $i$. 
By Hoeffding's Inequality for conditional probabilities and Assumption \ref{assump:x_bounded_support}, we have for all $1 \leq t \leq T$, $1 \leq j \leq p$ and $s > 0$
\begin{align*}
    P\{ \left|m^0_{Ntj} - \E[m^0_{Ntj} \given B_t] \right| \geq s \given B_t\} \leq 2 C \exp\left( - 2 N s^2\right),
\end{align*}
where $C$ is a constant that does not depend on $t$, $N$, $T$ and $B_t$. 
This implies that, 
\begin{align*}
    P\{ \left|m^0_{Ntj} - \E[m^0_{Ntj} \given B_t] \right| \geq s \} \leq 2 C \exp\left( - 2 N s^2\right).
\end{align*}
By the uniform bound, we have 
\begin{align*}
    \max_{1 \leq t \leq T} \norm{m^0_{Nt} - \E[m^0_{Nt} \given B_t]} = O_p\left( \sqrt{\frac{\log T}{N}}\right).
\end{align*}
Define the following oracle estimator for $\Sigma$, $\tilde{\Sigma}_N = \frac{1}{T} \sum_{t = 1}^T m^0_{Nt} (m^0_{Nt})' - (\bar{m}^0_{NT})(\bar{m}^0_{NT})'$. 
By the assumption that $(\log T) / N \to 0$, we see that 
\begin{align*}
    \tilde{\Sigma}_N = \frac{1}{T} \sum_{t = 1}^T \E[m^0_{Nt} \given B_t] E[m^0_{Nt} \given B_t]' - \left( \frac{1}{T} \sum_{t = 1}^T \E[m^0_{Nt} \given B_t]\right)\left( \frac{1}{T} \sum_{t = 1}^T \E[m^0_{Nt} \given B_t]\right)' + o_p(1).
\end{align*}
An application of Markov's Inequality yields that for any $N$ and $\epsilon > 0$, we have 
\begin{align*}
    P \left\{\norm{\frac{1}{T} \sum_{t = 1}^T \E[m^0_{Nt} \given B_t] \E[m^0_{Nt} \given B_t]' - \E\left[ \E[m^0_{Nt} \given B_t]\E[m^0_{Nt} \given B_t]' \right]} > \epsilon\right\} \leq 
    \frac{D}{T\epsilon^2},
\end{align*}
where $D$ is a constant independent of $N$ and $T$. This shows that, as $N, T \to \infty$,
\begin{align*}
    \norm{\frac{1}{T} \sum_{t = 1}^T \E[m^0_{Nt} \given B_t] \E[m^0_{Nt} \given B_t]' - \E\left[ \E[m^0_{Nt} \given B_t]\E[m^0_{Nt} \given B_t]' \right]} = o_p(1).
\end{align*}
Similarly, as $N, T \to \infty$,
\begin{align*}
    \norm{\frac{1}{T} \sum_{t = 1}^T \E[m^0_{Nt} \given B_t] - \E[m^0_{Nt}]} = o_p(1).
\end{align*}
This shows that 
\begin{align*}
    \tilde{\Sigma}_N =& \E \left( \E[m^0_{Nt} \given B_t]\E[m^0_{Nt} \given B_t]' \right) - \E[m^0_{N1}]\E[m^0_{N1}]' + o_p(1) \\
    =& \var\left( \E[m^0_{N1} \given B_1]\right) + o_p(1).
\end{align*}
Recall from Assumption \eqref{assump:variance_bounded_gamma} that 
\begin{align*}
    \Sigma = \lim_{N \to \infty} \var\left( \E[m^0_{N1} \given B_1]\right).
\end{align*}
Therefore, we have the consistency of the oracle estimator $\|\tilde{\Sigma}_N - \Sigma\|= o_p(1)$.

 We now claim $\|\hat \Sigma_N-\tilde \Sigma_N\|=o_p(1)$.
Observe from \eqref{eq:mNt} and \eqref{eq:Sigma_hat_and_Gamma_hat} that 
\begin{align*}
    \hat{\Sigma}_N =& \frac{1}{T} \sum_{t = 1}^T \hat{m}_{Nt} (\hat{m}_{Nt})' - \bar{m}_{NT} (\bar{m}_{NT})'. 
\end{align*}
Without loss of generality,
suppose $\alpha_{i0} = 0$ for all $i \in \N$ and $\beta_0 = 0$. Write 
\begin{align*}
    \hat{z}_{it} = \left(\tau - \idf\{\epsilon_{it} - \hat{\alpha}_i -X_{it}'\hat{\beta} \leq 0\}\right)(X_{it} - \hat \gamma_i ), \quad 
    z^0_{it} = \left( \tau - \idf\{\epsilon_{it} \leq 0\}\right)(X_{it} - \gamma_i).
\end{align*}
Observe that 
\begin{align*}
    \hat{\Sigma}_N &= \frac{1}{N^2} \sum_{i = 1}^N \sum_{j = 1}^N \left(\frac{1}{T}\sum_{t = 1}^T \hat{z}_{it} (\hat{z}_{jt})' \right)
    - \left( \frac{1}{NT} \sum_{i = 1}^N \sum_{t = 1}^T \hat{z}_{it}\right)
    \left( \frac{1}{NT} \sum_{i = 1}^N \sum_{t = 1}^T \hat{z}_{it}\right)', \\
    \tilde{\Sigma}_N &= 
    \frac{1}{N^2} \sum_{i = 1}^N \sum_{j = 1}^N \left(\frac{1}{T}\sum_{t = 1}^T z^0_{it} (z^0_{jt})' \right)
    - \left( \frac{1}{NT} \sum_{i = 1}^N \sum_{t = 1}^T z^0_{it}\right)
    \left( \frac{1}{NT} \sum_{i = 1}^N \sum_{t = 1}^T z^0_{it}\right)'.
\end{align*}
If follows that 
\begin{align*}
 \norm{\hat{\Sigma}_N - \tilde{\Sigma}_N}
    \leq& \frac{1}{N^2} \sum_{i = 1}^N \sum_{j = 1}^N \norm{\frac{1}{T}\sum_{t = 1}^T \hat{z}_{it} (\hat{z}_{jt})' - \frac{1}{T}\sum_{t = 1}^T z^0_{it} (z^0_{jt})'} \\
    &+ 2\left( \max_{1 \leq i \leq N} \norm{ \frac{1}{T} \sum_{t = 1}^T z^0_{it}} + \frac{1}{N^2}\sum_{i = 1}^N \sum_{j = 1}^N \norm{\frac{1}{T} \sum_{t = 1}^T (\hat{z}_{it} -  z^0_{it})}\right)  \\
    &\quad\times \frac{1}{N^2} \sum_{i = 1}^N \sum_{j = 1}^N\norm{\frac{1}{T} \sum_{t = 1}^T (\hat{z}_{it} - z^0_{it})}.
\end{align*}
Therefore, to show that $\norm{\hat{\Sigma}_N - \Sigma} = o_p(1)$, by Markov's inequality, it suffices to show that 
\begin{align*}
    &\max_{1 \leq i  \leq N} \E \norm{\frac{1}{T} \sum_{t = 1}^T (\hat{z}_{it} - z^0_{it})} = o(1),\quad
      \max_{1 \leq i\leq j \leq N} \E\norm{\frac{1}{T}\sum_{t = 1}^T \left(\hat{z}_{it} (\hat{z}_{jt})' -  z^0_{it} (z^0_{jt})' \right)} = o(1) 
\end{align*}
 Let us show the first statement as the second can be shown analogously.  Recall that $\max_{1 \leq i \leq N} \norm{\hat{\gamma}_i -\gamma_i} = o_p(1)$ from \eqref{eq:covar_Kato_et_al_12} and $\sup_{(i,t)\in\mathbb N^2 }\|X_{it}\|<\infty$ and $\sup_{i\in \mathbb N}\|\gamma_i\|<\infty$ from Assumptions  \ref{assump:x_bounded_support} and \ref{assump:variance_bounded_gamma}, it suffices to show
\begin{align*}
&\max_{1 \leq i  \leq N} \E \norm{\frac{1}{T} \sum_{t = 1}^T  (\idf\{\hat \epsilon_{it}\le 0\}-\idf\{\epsilon_{it}\le 0\})X_{it}} = o(1),\\
&\max_{1 \leq i  \leq N} \E \left\|\frac{1}{T} \sum_{t = 1}^T  (\idf\{\hat \epsilon_{it}\le 0\}-\idf\{\epsilon_{it}\le 0\})\gamma_i\right\| = o(1).
\end{align*}
We focus on the first term, since the argument for the second is analogous.
Observe that as $\hat \epsilon_{it}=\epsilon_{it}-\hat \alpha_i - X_{it}'\hat \beta$, we have for each $i=1,...,N$, it holds that 
\begin{align}
&\left\|\frac{1}{T} \sum_{t = 1}^T  (\idf\{\hat \epsilon_{it}\le 0\}-\idf\{\epsilon_{it}\le 0\}) X_{it}\right\|\nonumber\\
\le& \left\|\frac{1}{T} \sum_{t = 1}^T  (\idf\{ \hat\epsilon_{it}\le 0\}-\E[\idf\{\hat \epsilon_{it}\le 0\}|X_{it}])X_{it}\right\|\nonumber\\
&+  \left\|\frac{1}{T} \sum_{t = 1}^T  (\E[\idf\{\hat \epsilon_{it}\le 0\}|X_{it}]-\E[\idf\{ \epsilon_{it}\le 0\}|X_{it}])X_{it}\right\|\nonumber\\
&+  \left\|\frac{1}{T} \sum_{t = 1}^T  (\E[\idf\{ \epsilon_{it}\le 0\}|X_{it}]-\idf\{ \epsilon_{it}\le 0\})X_{it}\right\|\nonumber\\
\le&2\sup_{(a,b')\in \R^{1+p}}\left\|\frac{1}{T} \sum_{t = 1}^T  (\idf\{ \epsilon_{it}-X_{it}'b\le a\}-\E[\idf\{ \epsilon_{it}-X_{it}'b\le a\}|X_{it}])X_{it}\right\|+o_p(1),\label{eq:indicator_bound}
\end{align}
where the the second term in \eqref{eq:indicator_bound} is $o_p(1)$ following Proposition \ref{prop:consistency} and properties of $f_i(\epsilon|x,b)$ from Assumption \ref{assump:bounded_pdf_derivative}.  To control the first term in \eqref{eq:indicator_bound}, note as they all share the same $i$, $\epsilon_{it}$ are independent over $t$.
Further, note that the class  \[\{(x_1,...,x_p,\epsilon)\mapsto\idf\{\epsilon -(x_1,...,x_p)b\le a\} \cdot x_j:(a,b')\in \R^{1+p}, j=1,...,p\}\] is a VC-subgraph\footnote{The formal definition can be found in Section 2.6.2 of \cite{VanDerVaarWellner1996EmpiricalProcess}.
} class following Lemmas 2.6.15 and 2.6.18 in \cite{VanDerVaarWellner1996EmpiricalProcess}.
By Lemma \ref{lem:conditional_exp} and Theorem 2.14.1 in \cite{VanDerVaarWellner1996EmpiricalProcess},
there exists a $C>0$ independent of $N,T$ and $i$ such that for all $i=1,...,N$,  
\[\E\left[\sup_{(a,b')\in \R^{1+p}}\left\|\frac{1}{T} \sum_{t = 1}^T  (\idf\{ \epsilon_{it}-X_{it}'b\le a\}-\E[\idf\{ \epsilon_{it}-X_{it}'b\le a\}|X_{it}])X_{it}\right\|\right]\le C\sqrt{\frac{1}{T}}.\]
This implies that the first term in \eqref{eq:indicator_bound} is $o_p(1)$. 
Collecting these results, we now have
 $\|\hat \Sigma_N-\tilde \Sigma_N\|=o_p(1)$ and therefore $\|\hat \Sigma_N-\Sigma\|=o_p(1)$.

On the other hand, write $\tilde{\Gamma}_N = \frac{1}{N} \sum_{i = 1}^N \E[f_i(0 \given X_{i1})X_{i1}(X_{i1} - \gamma_i)'] = \Gamma + o(1)$.
By  some algebra, we have
\begin{align*}
    \norm{\hat{\Gamma}_N - \tilde{\Gamma}_N} \leq& \sup_{1 \leq i \leq N} \norm{ \frac{1}{T}\sum_{t=1}^T K_h(\hat\epsilon_{it}) X_{it} X_{it}' - \mathbb{E}\!\left[f_i(0 \mid X_{i1}) X_{i1} X_{i1}'\right]} \\
    &+ 2\left( C_f \sup_{x \in \mathcal{X}} \norm{x} + \sup_{1 \leq i \leq N} \norm{\frac{1}{T} \sum_{t = 1}^T K_h(\hat\epsilon_{it}) X_{it} - E[f_i(0 \given X_{i1})X_{i1}]} \right) \sup_{1 \leq i \leq N} \norm{\hat{\gamma}_i - \gamma_{i}} \\
    &+ 2\left( \sup_{1 \leq i \leq N} \norm{\gamma_i} + \sup_{1 \leq i \leq N}\norm{\hat{\gamma}_i - \gamma_i}\right)\sup_{1 \leq i \leq N} \norm{\frac{1}{T} \sum_{t = 1}^T K_h(\hat\epsilon_{it}) X_{it} - E[f_i(0 \given X_{i1})X_{i1}]}.
\end{align*}
By collecting the results from \eqref{eq:covar_Kato_et_al_12}, the right hand side is $o_p(1)$ and hence the proof of this case is concluded.

\medskip
\noindent\textbf{(ii)}.
 First, observe that the results in \eqref{eq:covar_Kato_et_al_12} remain valid in this scenario.
Further, notice that by the independent across $i,t$-pairs, following a variance calculation, the average over cross-product terms satisfies
\begin{align*}
    \frac{1}{N^2T} \sum_{t = 1}^T \sum_{i=1}^N\sum_{i'\ne i}^N   z^0_{it}  (z^0_{i't})'=O_p\left(\frac{1}{\sqrt{N^2T}}\right),
\end{align*}
where $z^0_{it} = ( \tau - \idf\{\epsilon_{it} \leq 0\})(X_{it} - \gamma_i)$.
Under the maintained asymptotic regime, these components remain $o_p(1)$ even after multiplication by $N$. The remainder of the proof follows from standard arguments using the weak law of large numbers and is therefore omitted.

Next we prove Theorem \ref{thm:covariance_estimation_m}.
First note that 
$M$-dependence does not affect the uniform validity of the results in \eqref{eq:covar_Kato_et_al_12} over $1 \leq i \leq N$.
Hence, by the same argument as in the proof of Theorem \ref{thm:covariance_estimation}, we obtain
$\hat{\Gamma}_N \pto \Gamma$.
Moreover, for each $0 \leq l \leq M$, 
the argument used to establish $\norm{\hat{\Sigma}_N - \Sigma} = o_p(1)$ in the proof of Theorem \ref{thm:covariance_estimation} 
also yields $\norm{\hat{\Sigma}^l_N - \Sigma^l_N} = o_p(1)$.
It follows that that $\hat{\Lambda}_N \pto \Lambda$.
This proves the desired result.
\qed

\section{Auxiliary Lemmas}\label{sec:auxiliary_lemmas}
Throughout this section, for $q\in[1,\infty)$, define $\|f\|_{Q,q}=(Q|f|^q)^{1/q}$. 
For a non-empty set $T$ and a function $f:T\to\mathbb{R}$, let $\|f\|_{T}=\sup_{t\in T}|f(t)|$. 
For a pseudometric space $(T,d)$, let $N(T,d,\varepsilon)$ denote the $\varepsilon$-covering number of $(T,d)$. 
For a class of functions $\mathcal F\ni f:\mathcal X\to \R$, we say $F:\mathcal X\to \R_+$ is an envelope for $\mathcal F$ if it is measurable and $\sup_{f\in\mathcal F}|f(x)|\le F(x)$ for all $x\in \mathcal X$.

	\begin{lemma}[Uniform covering for conditional expectations] \label{lem:conditional_exp}
	Let $\mathcal{F}$ be a class of functions $f:\mathcal{X}\times \mathcal Y\to \R$ with envelopes $F$ and $R$ a fixed probability measure on $\mathcal Y$. For a given $f\in \mathcal{F}$, let $\overline f:\mathcal{X}\to \R$ be $\overline f=\int f(x,y)dR(y)$. Set $\overline{\mathcal{F}}=\{\overline f:f\in\mathcal{F}\}$. Note that $\overline{F}$ is an envelope of $\overline{\mathcal{F}}$. 
	Then, for any $r,s\ge 1$, $\varepsilon\in(0,1]$,
	\begin{align*}
\sup_{Q}N(\overline{\mathcal{F}},\|\cdot\|_{Q,r},2\varepsilon\|\overline F\|_{Q,r})\le 	\sup_{Q}N(\mathcal{F},\|\cdot\|_{Q\times R,s},\varepsilon^r\|F\|_{Q \times R,s}),
	\end{align*}
	where $\sup_Q$ are taken over all distributions on $\mathcal X$.
\end{lemma}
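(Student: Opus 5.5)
The plan is to compare covering numbers directly by pushing a net for $\mathcal F$ under a product measure forward to a net for $\overline{\mathcal F}$, using Jensen's inequality to pass from $L_s(Q\times R)$ control to $L_r(Q)$ control of the averaged functions. Fix a probability measure $Q$ on $\mathcal X$ and $\varepsilon\in(0,1]$. Since $\overline f(x)=\int f(x,y)\,dR(y)$, for any $f,g\in\mathcal F$ Jensen's inequality gives
\begin{align*}
|\overline f(x)-\overline g(x)|\le \int |f(x,y)-g(x,y)|\,dR(y),
\end{align*}
and hence $\|\overline f-\overline g\|_{Q,1}\le \|f-g\|_{Q\times R,1}\le \|f-g\|_{Q\times R,s}$ for $s\ge 1$. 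Any $\delta$-net of $\mathcal F$ in $L_s(Q\times R)$ therefore maps, via $f\mapsto \overline f$, to a $\delta$-net of $\overline{\mathcal F}$ in $L_1(Q)$. The remaining work is to upgrade $L_1(Q)$ to $L_r(Q)$ and to match the envelope normalisations.

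To upgrade the norm, I will exploit the envelope bound $|\overline f-\overline g|\le 2\overline F$ and argue through a change of measure rather than uniform boundedness. Let $\widetilde Q$ be the probability measure on $\mathcal X$ with $d\widetilde Q/dQ=\overline F^{\,r-1}/Q\overline F^{\,r-1}$; this is defined if $Q\overline F^{\,r}<\infty$, and otherwise the left side is trivial. Then
\begin{align*}
Q|\overline f-\overline g|^r\le 2^{r-1}\,Q\big(\overline F^{\,r-1}|\overline f-\overline g|\big)=2^{r-1}\,Q\overline F^{\,r-1}\;\widetilde Q|\overline f-\overline g|\le 2^{r-1}\,Q\overline F^{\,r-1}\,\|f-g\|_{\widetilde Q\times R,s},
\end{align*}
where the final inequality applies the Jensen step above under $\widetilde Q$. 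Because the right-hand side of the lemma takes a supremum over all $Q$, it also controls covering numbers under $\widetilde Q\times R$, and that is the point where the supremum gets used.

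The main obstacle is the bookkeeping of envelope norms needed to land exactly on the radius $2\varepsilon\|\overline F\|_{Q,r}$ from an $\varepsilon^r\|F\|_{\widetilde Q\times R,s}$-net. I would first try to prove $\|F\|_{\widetilde Q\times R,1}=\widetilde Q\overline F=Q\overline F^{\,r}/Q\overline F^{\,r-1}$ and take $s=1$, since the $L_s$ covering numbers are monotone in $s$ only after envelope normalisation, and this requires care. With that identity, a net at radius $\varepsilon^r\widetilde Q\overline F$ yields
\begin{align*}
Q|\overline f-\overline g|^r\le 2^{r-1}\varepsilon^r\,Q\overline F^{\,r}\le (2\varepsilon)^r\|\overline F\|_{Q,r}^r,
\end{align*}
which is the desired radius. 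For general $s$ I would replace $\widetilde Q\overline F$ by $\|F\|_{\widetilde Q\times R,s}$ via Hölder's inequality. If the constants fail to match, I would fall back to the alternative tilt $d\widetilde Q\propto \overline F^{\,r}\,dQ$ applied to the ratio $|\overline f-\overline g|/\overline F$, which is the standard device in Chernozhukov--Chetverikov--Kato style arguments. Taking the supremum over $Q$ on both sides then finishes the proof.
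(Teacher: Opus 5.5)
Your tilting argument is correct for $s=1$. With $d\widetilde Q\propto\overline F^{\,r-1}dQ$, Fubini gives $\|F\|_{\widetilde Q\times R,1}=\widetilde Q\overline F=Q\overline F^{\,r}/Q\overline F^{\,r-1}$. An $\varepsilon^r\|F\|_{\widetilde Q\times R,1}$-net in $L_1(\widetilde Q\times R)$ then yields radius $2^{1-1/r}\varepsilon\|\overline F\|_{Q,r}\le 2\varepsilon\|\overline F\|_{Q,r}$. This needs net centres in $\mathcal F$, as in the paper's definition of covering numbers, or truncated to $[-F,F]$, so that $|\overline f-\overline g|\le 2\overline F$. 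This is the standard argument behind Lemma A.2 of Ghosal, Sen and van der Vaart (2000), which the paper cites in place of a proof.

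The gap is the passage to $s>1$. H\"older gives $\widetilde Q\overline F=\|F\|_{\widetilde Q\times R,1}\le\|F\|_{\widetilde Q\times R,s}$, which is the wrong direction. An $\varepsilon^r\|F\|_{\widetilde Q\times R,s}$-net in $L_s(\widetilde Q\times R)$ only bounds $L_1$ distances by $\varepsilon^r\|F\|_{\widetilde Q\times R,s}$, and this can be arbitrarily larger than the $\varepsilon^r\widetilde Q\overline F$ you need. Your fallback tilt $d\widetilde Q\propto\overline F^{\,r}dQ$ does not help. It only reweights $\mathcal X$, whereas the gap between $\|F\|_{\cdot,s}$ and $\|F\|_{\cdot,1}$ can come entirely from the $y$-direction, where $R$ is fixed.

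In fact no argument can close this step. With the right-hand supremum ranging only over product laws $Q\times R$ and an arbitrary envelope, the inequality is false for every $s>1$. Take $\mathcal X=\{x_0\}$, $R$ uniform on $[0,1]$, $\mathcal F=\{f_c\equiv c: c\in[0,1]\}$, and the envelope $F(x,y)=1+M\idf\{y\le M^{-a}\}$ with $1<a<s$.

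\begin{itemize}
\item Left side: $\overline f_c=c$ and $\overline F=1+M^{1-a}\to 1$. For fixed $\varepsilon<1/4$, the left side is at least $2$ once $4\varepsilon(1+M^{1-a})<1$.
\item Right side: $\|f_c-f_{c'}\|_{R,s}=|c-c'|$ while $\|F\|_{R,s}\ge M^{1-a/s}\to\infty$. The right side therefore equals $1$ once $\varepsilon^rM^{1-a/s}\ge 1/2$.
\end{itemize}

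If you object that this envelope is not minimal, add $f_\ast=M\idf\{y\le M^{-a}\}$ to $\mathcal F$, which makes $F$ essentially the minimal envelope. Then the right side is $2$ and, for $\varepsilon=1/10$, the left side is at least $3$.

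What your argument does establish is the case $s=1$. It therefore also gives every $s\ge 1$ when $F$ is constant, since then $\|F\|_{P,s}=\|F\|_{P,1}$ and $\|\cdot\|_{P,1}\le\|\cdot\|_{P,s}$ make the right side nondecreasing in $s$. That covers the classes bounded by constants to which the paper actually applies the lemma. A general-envelope version with $s>1$ needs a different statement. The extra tilt by $F^{1-s}$ that turns $L_s$ covering numbers into $L_1$ ones produces a measure that is no longer of the product form $Q\times R$.
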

\begin{proof}
    This is a direct consequence of Lemma A.2. in \cite{ghosal2000testing}.
\end{proof}


\begin{lemma}[Computational property of quantile regression]\label{lem:QR_computational}
    Suppose Assumption \ref{assump:x_bounded_support} and \ref{assump:pdf_exists} hold, then we have
        \begin{align*}
        &O_p(T^{-1}) = \sup_{1 \leq i \leq N} | \mathbb{H}_{Ni}^{(1)}(\hat{\alpha}_i, \hat{\beta}) | \\
        &O_p(T^{-1}) = \norm{\mathbb{H}_{N}^{(2)}(\hat{\bm{\alpha}}, \hat{\beta})}.
        \end{align*}
   
\begin{proof}
This result is a consequence of the well-known computational property of quantile regression. We include a proof for completeness. 

    Write 
    \begin{align*}
        Q_N(\bm{\alpha}, \beta) = \frac{1}{NT}\sum_{i = 1}^N \sum_{t = 1}^T \rho_\tau(Y_{it} - \alpha_i - X_{it}'\beta).
    \end{align*}
    The directional derivative of $Q_N$ with 
    respect to $\beta$ at $(\hat{\bm{\alpha}}, \hat{\beta})$ is given by 
    \begin{align*}
        D_\beta Q_N(w) \coloneq -\frac{1}{NT} \sum_{(i, t)} (X_{it}'w) \psi(Y_{it} - \hat{\alpha}_i - X_{it}'\hat{\beta}, -X_{it}'w),
    \end{align*}
    where 
    \begin{align*}
        \psi(u, v) = 
        \begin{cases}
            \tau - \idf\{u < 0\} & \text{ if } u \neq 0, \\
            \tau - \idf\{v < 0\} & \text{ if } u = 0.
        \end{cases}
    \end{align*}
    Let $h$ be the subset of $\{(i, t): 1 \leq i \leq N,; 1 \leq t \leq T\}$ such that $\hat{Y}_{it} = \hat{\alpha}_i + X_{it}'\hat{\beta}$.
    Define, $h_i = h \cap \{(j, t): j = i, 1 \leq t \leq T\}$. 
    Since $(\hat{\bm{\alpha}}, \hat{\beta})$ minimizes $Q_N$, we know that, for any $\norm{w} = 1$, 
    \begin{align*}
        D_\beta Q_N(w) \geq 0, \quad D_\beta Q_N(-w) \geq 0,
    \end{align*}
    which implies 
    \begin{align*}
        \frac{1}{NT} \sum_{h} (-X_{it}'w) (\tau - \idf\{X_{it}'w < 0\}) 
        \leq& \frac{w'}{NT} \sum_{h^c} X_{it}(\tau - \idf\{ Y_{it} < \hat{\alpha}_i + X_{it}'\hat{\beta}\}) \\
        \leq& \frac{1}{NT} \sum_{h} (-X_{it}'w) (\tau - \idf\{-X_{it}'w < 0\}).
    \end{align*}
  Notice that for each $i$, $|h_i|$ is at most $p + 1$ with probability 1. This is because the conditional law of $\epsilon \given X_i$ admits a p.d.f. $P_{X_i}$-almost every $x$ by Assumption \ref{assump:pdf_exists}.
  
  We conclude that,
    uniformly over $\norm{w} = 1$, w.p.1,
    \begin{align*}
        \left| \frac{w'}{NT} \sum_{h^c} X_{it}(\tau - \idf\{ Y_{it} < \hat{\alpha}_i + X_{it}'\hat{\beta}\})\right| \leq \frac{1}{NT} \left( \sup_{x \in \mathcal{X}} \norm{x}_\infty \right)N(p + 1) = O\left(\frac{1}{T}\right), \\
        \left| \frac{w'}{NT} \sum_{(i, t)} X_{it}(\tau - \idf\{ Y_{it} \leq \hat{\alpha}_i + X_{it}'\hat{\beta}\})\right| \leq \frac{2}{NT} \left( \sup_{x \in \mathcal{X}} \norm{x}_\infty \right)N(p + 1) = O\left(\frac{1}{T}\right),
    \end{align*}
   Hence we have,  
   $
        \norm{\mathbb{H}_{N}^{(2)}(\hat{\bm{\alpha}}, \hat{\beta})} = O_p(T^{-1}).
    $
    
    The proof for $\sup_{1 \leq i \leq N} | \mathbb{H}_{Ni}^{(1)}(\hat{\alpha}_i, \hat{\beta}) |$
    is analogous--using this argument, we have for all $1 \leq i \leq N$, with probability 1,
    \begin{align*}
        \left| \frac{1}{T}\sum_{h_i^c} (\tau - \idf\{Y_{it} < \hat{\alpha}_i + X_{it}'\hat{\beta}\})\right|
        \leq \frac{1}{T}(p + 1), \\
        \left| \frac{1}{T}\sum_{t = 1}^T (\tau - \idf\{Y_{it} \leq \hat{\alpha}_i + X_{it}'\hat{\beta}\})\right|
        \leq \frac{2}{T}(p + 1).
    \end{align*}
    which implies 
$  
        \sup_{1 \leq i \leq N} | \mathbb{H}_{Ni}^{(1)}(\hat{\alpha}_i, \hat{\beta}) | = O_p(T^{-1}).
   $
\end{proof}
\end{lemma}

\begin{lemma}[Projection differentiability]\label{lem:differentiability_expectations}
Under the setup of Section \ref{sec:model_estimation},
    suppose  Assumption \ref{assump:bounded_pdf_derivative} holds, then
    $ \tilde{\mathbb{H}}_{N}^{(2)}(\bm{\alpha}, \beta)$
    is twice continuously differentiable in $(\bm{\alpha}, \beta)$ almost surely. In addition, the first-order partial derivatives
    are given by 
    \begin{align*}
        \vpd{\tilde{\mathbb{H}}_{N}^{(2)}(\bm{\alpha}, \beta)}{\alpha_i} &= -\frac{1}{T} \sum_{t = 1}^T \E[f_i( \alpha_i - \alpha_{i0} + X_{it}'(\beta - \beta_0) \mid X_{it}, B_t)X_{it} \mid B_t], \\
        \vpd{\tilde{\mathbb{H}}_{N}^{(2)}(\bm{\alpha}, \beta)}{\beta} 
        &= -\frac{1}{NT} \sum_{i = 1}^N \sum_{t = 1}^T \E[f_i( \alpha_i - \alpha_{i0} + X_{it}'(\beta - \beta_0) \mid X_{it}, B_t)X_{it} X_{it}'\mid B_t].
    \end{align*}
    Further, the first- and second-order derivatives are uniformly bounded across $(\bm{\alpha}, \beta)$. The bounds of the derivatives are independent of $N$. Similarly, $\tilde{\mathbb{H}}_{Ni}^{(1)}(\alpha_i, \beta)$ are also twice continuously differentiable in $(\alpha_i, \beta)$ almost surely. The first-order partial derivatives are given by 
    \begin{align*}
        \vpd{\tilde{\mathbb{H}}^{(1)}_{Ni}(\alpha_i, \beta)}{\alpha_i} 
        &= -\frac{1}{T} \sum_{t = 1}^T f_i(\alpha_i - \alpha_{i0} + X_{it}'(\beta - \beta_0) \given B_t) \\
        \vpd{\tilde{\mathbb{H}}^{(1)}_{Ni}(\alpha_i, \beta)}{\beta} 
        &= -\frac{1}{T} \sum_{t = 1}^T \E[f_i(\alpha_i - \alpha_{i0} + X_{it}'(\beta - \beta_0) \given X_{it}, B_t)X_{it} \given B_t].
    \end{align*}
    Moreover, the first- and second-order derivatives are uniformly bounded across $(\alpha_i, \beta)$ and $i \in \N$.
\begin{proof}
    It suffices to prove that the partial derivatives with respect to 
    $(\alpha_i, \beta)$ of 
    \begin{align}
        \E[\idf\{Y_{it} \leq \alpha_i + X_{it}'\beta\})X_{it} \given B_t] \label{eq:conditional_exp}
    \end{align}
    exist and are twice continuously differentiable almost surely for all $i \in \N$. Observe that 
    \begin{align*}
        &\E[\idf\{Y_{it} \leq \alpha_i + X_{it}'\beta\})X_{it} \given B_t] \\
        =& \E[F_i(\alpha_i - \alpha_{i0} + X_{it}'(\beta - \beta_0) \mid X_{it}, B_t)X_{it} \given B_t]
    \end{align*}
    By Assumption \ref{assump:x_bounded_support} and \ref{assump:bounded_pdf_derivative}, with the dominated convergence theorem for conditional expectations, the above display is differentiable in $\alpha_i$ almost surely, with derivative 
    \begin{align*}
        &\vpd{\E[F_i(\alpha_i - \alpha_{i0} + X_{it}'(\beta - \beta_0) \mid X_{it}, B_t)X_{it} \given B_t]}{\alpha_i} \\
        =& \E[f_i( \alpha_i - \alpha_{i0} + X_{it}'(\beta - \beta_0) \mid X_{it}, B_t)X_{it} \mid B_t].
    \end{align*}
    The above is continuous in $(\alpha_i, \beta)$ by the dominated convergence theorem. The same holds 
    for $\beta$---the partial derivative exists and is continuous in $(\alpha_i, \beta)$ with the form 
    \begin{align*}
        &\vpd{\E[F_i(\alpha_i - \alpha_{i0} + X_{it}'(\beta - \beta_0) \mid X_{it}, B_t)X_{it} \given B_t]}{\beta} \\
        =& \E[f_i( \alpha_i - \alpha_{i0} + X_{it}'(\beta - \beta_0) \mid X_{it}, B_t)X_{it} X_{it}'\mid B_t].
    \end{align*}
    This proves  the continuous differentiability of \eqref{eq:conditional_exp}. By applying the same arguments on the first order partial derivatives, one reaches the conclusion that 
    \eqref{eq:conditional_exp}
    is twice continuously differentiable in $(\bm{\alpha}, \beta)$.
    The boundedness of the derivatives comes from Assumptions \ref{assump:x_bounded_support} and \ref{assump:bounded_pdf_derivative}.

    The proof for $\tilde{H}_{Ni}^{(1)}(\alpha_i, \beta)$ is analogous and
    thus omitted.
\end{proof}
\end{lemma}

\begin{lemma}[Projection error bound (i)] \label{lem:hajek_rs_1}
Under the setup of Section \ref{sec:model_estimation},
    suppose Assumptions \ref{assump:x_bounded_support}--\ref{assump:Jacobian} hold.
    Take $\delta_N \to 0$ such that 
    \begin{align*}
        \max_{1 \leq i \leq N} |\hat{\alpha}_i - \alpha_{i0}| \vee \norm{\hat{\beta} - \beta_0} = O_p(\delta_N).
    \end{align*}
    We have 
    \begin{align*}
       \left\| \frac{1}{N} \sum_{i = 1}^N (\mathbb{P}_T h_i)^{-1} (\mathbb{P}_T k_i) \left( \mathbb{H}^{(1)}_{Ni}(\hat{\alpha}_i, \hat{\beta}) - \tilde{\mathbb{H}}^{(1)}_{Ni}(\hat{\alpha}_i, \hat{\beta}) \right)\right\| = o_p(T^{-1/2}) + O_p\left( d_N \right).
    \end{align*}
    where $d_N := T^{-1} |\log \delta_N| \vee T^{-1/2} \delta_N^{1/2} |\log \delta_N|^{1/2}$.
 Hence, in conjunction with Proposition \ref{prop:consistency}, the preceding display is of order $o_p(T^{-1/2})$.
    \begin{proof}
        First let us define
        \begin{align}
            \psi(x, \epsilon;\delta_\alpha, \delta_\beta) \coloneq
            & \idf\{\epsilon - x'\delta_\beta \leq \delta_\alpha\} - \idf\{ \epsilon \leq 0\},\nonumber \\
            \tilde{\psi}_i(b; \delta_\alpha, \delta_\beta)
            \coloneq &\E_{i}[\idf\{\epsilon_{i1} - X_{i1}'\delta_\beta \leq \delta_\alpha\} - \idf\{ \epsilon_{i1} \leq 0\} \mid B_1 = b] \nonumber\\
            =& \int_{\mathcal{X} \times \mathcal{E}} (\idf\{\epsilon - x'\delta_\beta \leq \delta_\alpha\} - \idf\{ \epsilon \leq 0 \} ) \dd F_i(x, \epsilon \given b), \nonumber \\
            \phi_i(x, \epsilon, b; \delta_\alpha, \delta_\beta) \coloneq & \psi(x, \epsilon; \delta_\alpha, \delta_\beta) - \tilde{\psi}_i(b; \delta_\alpha, \delta_\beta).\label{eq:def_psi_phi}
        \end{align}
        Also denote $\hat{\delta}_{\alpha i} = \hat{\alpha}_i - \alpha_{i0}$ and $\hat{\delta}_{\beta} = \hat{\beta} - \beta_0$. Observe that 
        \begin{align*}
            &\frac{\sqrt{T}}{N} \sum_{i = 1}^N (\mathbb{P}_T h_i)^{-1} (\mathbb{P}_T k_i) \left( \mathbb{H}^{(1)}_{Ni}(\hat{\alpha}_i, \hat{\beta}) - \tilde{\mathbb{H}}^{(1)}_{Ni}(\hat{\alpha}_i, \hat{\beta}) \right) \\
            =&\frac{1}{N} \sum_{i = 1}^N  (\mathbb{P}_T h_i)^{-1} (\mathbb{P}_T k_i) \frac{1}{\sqrt{T}} \sum_{t = 1}^T \left( -\idf\{\epsilon_{it} \leq 0\} + \E_i[\idf\{\epsilon_{it} \leq 0\}\given B_t ] \right) \\
            &+ \frac{1}{N} \sum_{i = 1}^N  (\mathbb{P}_T h_i)^{-1} (\mathbb{P}_T k_i) \frac{1}{\sqrt{T}} \sum_{t = 1}^T \phi_i(X_{it}, \epsilon_{it}, B_t; \hat{\delta}_\alpha, \hat{\delta}_\beta) \numberthis \label{eq:20260050}
        \end{align*}
        Note that, conditional on $\bm{B}_T$, the terms are independent across $i$. By the law of total variance, the variance of the first term can be written as
        \begin{align*}
            \frac{1}{N^2} \sum_{i = 1}^N \E\left[ (\mathbb{P}_T h_i)^{-2} (\mathbb{P}_T k_i)'\left( \frac{1}{T} \sum_{t = 1}^T \var(\idf\{\epsilon_{it} \leq 0\}\mid B_t)\right) (\mathbb{P}_T k_i)\right]. 
        \end{align*}
        Notice that, as $N\to\infty$, the norm  of this variance expression is bounded by
        \begin{align*}
            \frac{1}{N^2} \sum_{i = 1}^N \E \norm{ (\mathbb{P}_T h_i)^{-1} \mathbb{P}_T k_i }^2 &\leq \frac{1}{N} \left( \inf_{i \in \N, b \in \mathcal{B}} f_i(0 \given b)\right)^{-1} \cdot \sup_{i \in \N, x \in \mathcal{X}, b \in \mathcal{B}} f_i(0\given x, b) \cdot \sup_{x \in \mathcal{X}} |x|.
        \end{align*}
        By Assumption
        \ref{assump:x_bounded_support} and \ref{assump:bounded_pdf_derivative},
        the above is $O(1/N)$.
        Therefore, by Markov's Inequality, the first term in \eqref{eq:20260050} is $o_p(1)$. The norm of the second term in \eqref{eq:20260050} is bounded above by 
        \begin{align*}
            &\left( \sup_{1 \leq i \leq N} \norm{\gamma_i} + \sup_{1 \leq i \leq N} \norm{(\mathbb{P}_T h_i)^{-1} \mathbb{P}_T k_i - \gamma_i}\right) \times \\
            &\frac{1}{N}\sum_{i = 1}^N \left|\frac{1}{\sqrt{T}} \sum_{t = 1}^T \phi_i(X_{it}, \epsilon_{it}, B_t; \hat{\delta}_\alpha, \hat{\delta}_\beta)\right|.
            \numberthis
            \label{202601301702}
        \end{align*}
        By Lemma \ref{lemma:pdf_gc} and Assumptions \ref{assump:bounded_pdf_derivative} and \ref{assump:variance_bounded_gamma}, the two terms in the round brackets in \eqref{202601301702} is $O_p(1)$. Note that  we  used the fact that
      $
         \E[f_i(0 \mid X_{i1},B_1)X_{i1}]
            =
            \E[f_i(0 \mid X_{i1})X_{i1}],
      $  
        as  $\E[f_i(0 \given X_{i1}, B_1) \given X_{i1}] = f_i(0 \given X_{i1})$ following the law of iterated expectations.

        Next, we show that the second term in \eqref{202601301702} is $O_p(\sqrt{T} d_N)$.
        By Markov's Inequality, it suffices to show 
        \begin{align}
            \max_{1 \leq i \leq N} \E \left| \sum_{t=1}^T \phi_i(X_{it}, \epsilon_{it}, B_t; \hat{\delta}_\alpha, \hat{\delta}_\beta) \right| = O_p(d_N T).
            \label{eq:2026-0131-1355}
        \end{align}
        We follow the proof strategy in Step 2 in the proof of Theorem 3.2 with appropriate modifications.
        Define the following function classes
        \begin{align*}
            \Psi &\coloneq \{\psi(\cdot, \cdot; \delta_\alpha, \delta_\beta): \mathcal{X} \times \mathcal{E} \to \R: |\delta_\alpha| \vee |\delta_\beta| \in \R\} ,\\
            \Psi_\delta &\coloneq \{ \psi(\cdot, \cdot; \delta_\alpha, \delta_\beta): \mathcal{X} \times \mathcal{E} \to \R: |\delta_\alpha| \vee |\delta_\beta| < \delta\},
        \end{align*}
        recall that $\psi$ is defined in \eqref{eq:def_psi_phi}.
        Note that all functions in $\Psi$ is bounded by $2$.
        By Lemmas 2.6.15, 2.6.16, and 2.6.18 of \cite{VanDerVaarWellner1996EmpiricalProcess},
        there exists constants $A \geq 3 \sqrt{e}$ 
        and $v$ depending only on $p = \dim(\mathcal{X})$ such that 
        \begin{align*}
            N(\Psi, \|\cdot\|_{Q,2}, 2\epsilon) \leq \left( \frac{A}{\epsilon}\right)^v.
        \end{align*}
        for every probability measure $Q_{x, \epsilon}$ on $\mathcal{X} \times \mathcal{E}$. 
        For each $i \in \mathbb{N}$, define 
        \begin{align*}
            \tilde{\Psi}_i &\coloneq \{\tilde{\psi}_i(\cdot; \delta_\alpha, \delta_\beta): \mathcal{B} \to \R: |\delta_\alpha| \vee |\delta_\beta| \in \R\}, \\
            \tilde{\Psi}_{i, \delta} &\coloneq \{\tilde{\psi}_i(\cdot; \delta_\alpha, \delta_\beta): \mathcal{B} \to \R: |\delta_\alpha| \vee |\delta_\beta| < \delta\}, \\
            \Phi_i &\coloneq \{\phi_i(\cdot, \cdot, \cdot; \delta_\alpha, \delta_\beta): \mathcal{X} \times \mathcal{E} \times \mathcal{B} \to \R: |\delta_\alpha| \vee |\delta_\beta| \in \R\}, \\
            \Phi_{i, \delta} &\coloneq \{\phi_i(\cdot, \cdot, \cdot; \delta_\alpha, \delta_\beta): \mathcal{X} \times \mathcal{E} \times \mathcal{B} \to \R: |\delta_\alpha| \vee |\delta_\beta| < \delta\},
            \numberthis \label{eq:def_bigpsi_bigphi}
        \end{align*}
        where $\tilde\psi$ and $\phi_i$ are defined in \ref{eq:def_psi_phi}.
        By Lemma \ref{lem:conditional_exp}, we have 
        for all $i \in N$,
        \begin{align*}
            N(\tilde{\Psi}_i, \|\cdot\|_{Q_B,2}, 2\epsilon) \leq 
            \left( \frac{A}{\epsilon}\right)^v,
        \end{align*}
        for any finite discrete probability measure $Q_B$ on $\mathcal{B}$. We therefore have, for any finite discrete
        probability measure $Q$ on $\mathcal{X} \times \mathcal{E} \times \mathcal{B}$ and $i \in \N$,  
        \begin{align*}
            N(\Phi_i, \|\cdot\|_{Q,2}, 2 \epsilon) \leq \left( \frac{A}{\epsilon}\right)^{2 v}.
        \end{align*}
        Now we focus on a class $\Phi_{i, \delta}$ with a $0 < \delta < \infty$.
        The class is pointwise measurable.
        Notice that each function in $\Phi_{i, \delta}$ has zero mean (w.r.t to the probability distribution of $(X_{i1}, \epsilon_{i1}, B_1)$). They also have zero conditional mean given $B_t$. 
        Therefore, by the law of total variance, it holds that
        \begin{align*}
            \var\left( \phi_i(X_{i1}, \epsilon_{i1}, B_1; \delta_\alpha, \delta_\beta)\right) 
            &= \E\left[ \var\left( \phi_i(X_{i1}, \epsilon_{i1}, B_1; \delta_\alpha, \delta_\beta)\given B_1\right)\right] \\
            &= \E\left[ \var\left( \idf\{\epsilon_{i1} \leq \delta_\beta'X_{i1} + \delta_\alpha\} - \idf\{\epsilon_{i1} \leq 0\}\given B_1\right)\right] \\
            &\leq \E\left[ \E[ (\idf\{\epsilon_{i1} \leq \delta_\beta'X_{i1} + \delta_\alpha\} - \idf\{\epsilon_{i1} \leq 0\})^2\given B_1]\right] \\
            &= \E\left[ \left|( F_i(X_{i1}'\delta_\beta + \delta_\alpha \given X_{i1}, B_1) - F_i(0 \given X_{i1}, B_1) \right|\right] \\
            &\leq C_f \left(|\delta_\alpha| + \sup_{x \in X} \norm{x}|\delta_\beta|\right).
        \end{align*} 
        Therefore, we can apply Corollary 5.1 in \cite{chernozhukov2014gaussian} to $\Phi_{i, \delta}$ with $F = 2$ and 
        $\sigma^2 = C_f (1 + \sup_{x \in \mathcal{X}} \norm{x} )\delta$.
        This yields 
        \begin{align}
            \E \norm{ \sum_{t = 1}^T \phi(X_{it}, \epsilon_{it}, B_t)}_{\Phi_{i, \delta}} \leq C \left( |\log \delta| + \sqrt{T} \delta^{1/2} |\log \delta|^{1/2} \right). \label{eq:bound_expectation_sup}
        \end{align}
        where $C$ is a constant independent of $i$, $N$ and $T$.
        It follows that, for any $i \in \mathbb{N}$ and  
        $|\hat{\delta}_\alpha| \vee |\hat{\delta}_\beta| < \delta_N$, we have 
        \begin{align*}
            \E\norm{ \sum_{t = 1}^T \phi_i(X_{it}, \epsilon_{it}, B_t; \hat{\delta}_\alpha, \hat{\delta}_\beta)}
            \leq C \left( |\log \delta_N| + \sqrt{T} \delta_N^{1/2}|\log \delta_N|^{1/2} \right).
        \end{align*}
        This proves \eqref{eq:2026-0131-1355}.
    \end{proof}
\end{lemma}

\begin{lemma}[Projection error bound (ii)]\label{lem:hajek_rs_2}
Under the setup of Section \ref{sec:model_estimation},
   suppose Assumptions \ref{assump:x_bounded_support} - \ref{assump:bounded_pdf_derivative} hold, then
    \begin{align*}
      \|  \mathbb{H}_N^{(2)}(\hat{\alpha}, \hat{\beta}) - \tilde{\mathbb{H}}_N^{(2)}( \hat{\alpha}, \hat{\beta}) \|
        = o_p(T^{-1/2}).
    \end{align*}
   
    \begin{proof}
    The proof proceeds along the same lines as the second part of Lemma \ref{lem:hajek_rs_1} (from \eqref{eq:2026-0131-1355} onward) and is therefore omitted.

    \end{proof}
\end{lemma}

\begin{lemma} [Projection error bound (iii)] \label{lem:20260129}
Under the setup of Section \ref{sec:model_estimation},
    suppose Assumptions \ref{assump:x_bounded_support}-\ref{assump:bounded_pdf_derivative} hold. 
    Further assume that $(\log N)^2 / T \to 0$. Then 
    \begin{align*}
        \max_{1 \leq i \leq N} \left|\mathbb{H}_{Ni}^{(1)}(\hat{\alpha}_i, \hat{\beta}) - \tilde{\mathbb{H}}_{Ni}^{(1)}(\hat{\alpha}_i, \hat{\beta})\right| = O_p\left( \sqrt{\frac{\log N}{T}}\right).
    \end{align*}
    \begin{proof}
        Let $\phi_i(x, \epsilon, b; \delta_\alpha, \delta_\beta)$ be defined 
        as in \eqref{eq:def_psi_phi} in the proof of Lemma \ref{lem:hajek_rs_1}.
        Observe that 
        \begin{align*}
            \mathbb{H}_{Ni}^{(1)}(\hat{\alpha}_i, \hat{\beta}) - \tilde{\mathbb{H}}_{Ni}^{(1)}(\hat{\alpha}_i, \hat{\beta}) 
            = \left( \mathbb{H}_{Ni}^{(1)}(\alpha_{i0}, \beta_{i0}) - \tilde{\mathbb{H}}_{Ni}^{(1)}(\alpha_{i0}, \beta_{i0}) \right)
            + \frac{1}{T} \sum_{t = 1}^T \phi_i(X_{it}, \epsilon_{it}, B_t; \hat{\delta}_{\alpha i}, \hat{\delta}_\beta).
        \end{align*}
It then suffices to show that the two terms on the right hand side are of the order $O_p(\sqrt{T^{-1}\log N})$
        
        We first show that 
        \begin{align}
            \max_{1 \leq i \leq N} \left|\mathbb{H}_{Ni}^{(1)}(\alpha_{i0}, \beta_{i0}) - \tilde{\mathbb{H}}_{Ni}^{(1)}(\alpha_{i0}, \beta_{i0})\right| = O_p\left( \sqrt{\frac{\log N}{T}} \right) \label{eq:2026-0202-2250}
        \end{align}
        By Hoeffding's Inequality, for any $s \in \mathbb{R}$ and $i \in \N$,
        we have 
        \begin{align*}
            P\left\{ \left|\mathbb{H}_{Ni}^{(1)}(\alpha_{i0}, \beta_{i0}) - \tilde{\mathbb{H}}_{Ni}^{(1)}(\alpha_{i0}, \beta_{i0})\right| > s\right\} \leq 
            \exp(-Ts^2 / 2).
        \end{align*}
       By the union bound, we have
        \begin{align*}
            &P\left\{\max_{1 \leq i \leq N}\left|\mathbb{H}_{Ni}^{(1)}(\alpha_{i0}, \beta_{i0}) - \tilde{\mathbb{H}}_{Ni}^{(1)}(\alpha_{i0}, \beta_{i0})\right| > s\right\} \leq N \exp(-Ts^2 / 2),
              \end{align*}
              which in turn implies
                  \begin{align*}
              P\left\{\max_{1 \leq i \leq N}\left|\mathbb{H}_{Ni}^{(1)}(\alpha_{i0}, \beta_{i0}) - \tilde{\mathbb{H}}_{Ni}^{(1)}(\alpha_{i0}, \beta_{i0})\right| > \sqrt{\frac{\log N}{T}} s\right\} \leq \exp(-2s^2).
        \end{align*}
        
        Next, we show that 
        \begin{align}
            \max_{1 \leq i \leq N} \left|\frac{1}{T} \sum_{t = 1}^T \phi_i(X_{it}, \epsilon_{it}, B_t; \hat{\delta}_{\alpha i}, \hat{\delta}_\beta)\right| = o_p\left( \sqrt{\frac{\log N}{T}} \right).
        \end{align}
        The argument builds on the approach used in Step 3 of the proof of Theorem 3.2 in \citet{kato2012asymptotics} with appropriate modifications.
        For each $i \in \N$, let $\Phi_i$ and $\Phi_{i, \delta}$ be defined as
        in \eqref{eq:def_bigpsi_bigphi}.
        By the union bound and 
        $\max_{1 \leq i \leq N} |\hat{\alpha}_i - \alpha_{i0}| \vee \norm{\hat{\beta} - \beta_0} \pto 0$, it suffices to prove that 
        for any $\epsilon  > 0$, there exists $\delta > 0$ such that 
        \begin{align}
            \max_{1 \leq i \leq N} P\left\{ \norm{\sum_{t = 1}^T \phi(X_{it}, \epsilon_{it}, B_t)}_{\Phi_{i, \delta}} > (T \log N)^{1/2} \epsilon \right\} = o( 1/ N). \label{eq:2026-0203-0056}
        \end{align}
        Fix $\epsilon > 0$. For any $\delta > 0$ and $1 \leq i \leq N$, write $Z_{Ni}(\delta) = \norm{\sum_{t = 1}^T \phi(X_{it}, \epsilon_{it}, B_t)}_{\Phi_{i, \delta}}$ and let $\sigma(\delta) = C_f (1 + \sup_{x \in \mathcal{X}} \norm{x}) \delta$.
        Recall from the proof of Lemma \ref{lem:hajek_rs_1}, 
        \[ \sigma(\delta)^2 \geq \sup_{\phi \in \phi_{i, \delta}} \E\left[ \phi(X_{i1}, \epsilon_{i1}, B_1)^2 \right].\]
        Also, note that functions in $\Phi_{i, \delta}$ are centred.
        Applying Proposition B.2 in \citealt{kato2012asymptotics} (Bousquet’s version of Talagrand’s inequality) with $U = 2$, for any sequence $s_N > 0$,
        \begin{align*}
            P \left\{Z_{Ni}(\delta) \geq \E[Z_{Ni}(\delta)] + s_N\sqrt{2(T \sigma(\delta)^2 + 2U \E[Z_{Ni}(\delta)])} + \frac{s_N^2 U}{3} \right\} \leq \exp(-s_N^2).
        \end{align*}
        Take $s_N = \sqrt{2 \log N}$. Then the right hand side of the inequality becomes $(1/N)^2 = o(1 / N)$. It then suffices to prove there exists $\delta > 0$ independent of $i$, $N$ and $T$
        such that 
        \begin{align*}
            &\E[Z_{Ni}(\delta)] + s_N\sqrt{2(T \sigma(\delta)^2 + 2U \E[Z_{Ni}(\delta)])} + \frac{s_N^2 U}{3} \leq (T \log N)^{1/2} \epsilon \\
            \iff&(T \log N)^{-1/2}\left( \E[Z_{Ni}(\delta)] + s_N\sqrt{2(T \sigma(\delta)^2 + 2U \E[Z_{Ni}(\delta)])} + \frac{s_N^2 U}{3}\right) \leq \epsilon.
        \end{align*}
        for large enough $N$. 
        With the assumption that $(\log N)^2 / T \to 0$, we have
        \begin{align*}
            (T \log N)^{-1/2} \frac{s_N^2 U}{3} \longto 0.
        \end{align*}
        By \eqref{eq:bound_expectation_sup}, $\E[Z_{Ni}(\delta)] \leq C \left( |\log \delta| + \sqrt{T} \delta|\log \delta|^{1/2} \right)$. 
        It follows that there exists a choice of $\delta$ that only depends on $C$ such that 
        \begin{align*}
            (T \log N)^{-1/2}\E[Z_{Ni}(\delta)] < \frac{\epsilon}{3}, \quad (T \log N)^{-1/2} s_N\sqrt{2(T \sigma^2 + 2U \E[Z_{Ni}(\delta)])} < \frac{\epsilon}{3}.
        \end{align*}
        for large enough $N$.
        Recall that such $C$ is a constant independent of $i$, $N$ and $T$, 
        so the choice of $\delta$ is also independent of $i$, $N$ and $T$.
        We have thus proved \eqref{eq:2026-0203-0056}.
    \end{proof}
\end{lemma}
\begin{lemma}[Magnitude of projection] \label{lem:20251229_1}

Under the setup of Section \ref{sec:model_estimation}, it holds that 
    \begin{align*}
        \max_{1 \leq i \leq N} |\tilde{\mathbb{H}}^{(1)}_{Ni}(\alpha_{i0}, \beta_0)| = O_p\left( \sqrt{\frac{\log N}{T}}\right).
    \end{align*}
    \begin{proof}
        By Hoeffding's Inequality, for any $i\in\mathbb N$, $s \in \mathbb{R}$,
        \begin{align*}
            P\{|\tilde{\mathbb{H}}^{(1)}_{Ni}(\alpha_{i0}, \beta_0)| > s\} \leq \exp \left( -\frac{2 T s^2}{(\tau - (\tau - 1))^2}\right) = \exp(-2Ts^2).
        \end{align*}
       By the union bound, we have
        \begin{align*}
            &P\left\{\max_{1 \leq i \leq N}|\tilde{\mathbb{H}}^{(1)}_{Ni}(\alpha_{i0}, \beta_0)| > s\right\} \leq N \exp(-2Ts^2),
              \end{align*}
              which in turn implies
                  \begin{align*}
              P\left\{\max_{1 \leq i \leq N}|\tilde{\mathbb{H}}^{(1)}_{Ni}(\alpha_{i0}, \beta_0)| > \sqrt{\frac{\log N}{T}} s\right\} \leq \exp(-2s^2),
        \end{align*}
        which concludes the proof.
    \end{proof}
\end{lemma}

\begin{lemma}[Uniform convergence rate for $\hat \alpha_i$]\label{lem:max_alpha_rate}
Under the setup of Section \ref{sec:model_estimation},
suppose Assumptions \ref{assump:x_bounded_support}-- \ref{assump:Jacobian}  hold, then 
\begin{align*}
    \max_{1 \leq i \leq N} |\hat{\alpha}_i - \alpha_{i0}| = O_p\left( \sqrt{\frac{\log N}{T}}\right).
\end{align*}
\end{lemma}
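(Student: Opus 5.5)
We prove Lemma~\ref{lem:max_alpha_rate} by starting from the individual-effect expansion \eqref{eq:alpha_representation}, which holds uniformly over $1\le i\le N$. The inverse factor $(\mathbb{P}_T h_i)^{-1}$ is bounded uniformly in $i$ and $T$ by Assumption~\ref{assump:bounded_pdf_derivative}(c). The factor $\mathbb{P}_T k_i$ is bounded uniformly by Assumptions~\ref{assump:x_bounded_support} and \ref{assump:bounded_pdf_derivative}(b). Taking the maximum over $i$ in \eqref{eq:alpha_representation} therefore gives, with $r_N:=\max_{1\le i\le N}|\hat\alpha_i-\alpha_{i0}|$,
\begin{align*}
r_N \le C\Big\{\max_{i}|\tilde{\mathbb{H}}^{(1)}_{Ni}(\alpha_{i0},\beta_0)| + \|\hat\beta-\beta_0\| + \max_i|\mathbb{H}^{(1)}_{Ni}(\hat\alpha_i,\hat\beta)-\tilde{\mathbb{H}}^{(1)}_{Ni}(\hat\alpha_i,\hat\beta)| + T^{-1}\Big\} + O_p\big(r_N^2\vee\|\hat\beta-\beta_0\|^2\big).
\end{align*}
We bound the first term by $O_p(\sqrt{\log N/T})$ via Lemma~\ref{lem:20251229_1}, and the third term by $O_p(\sqrt{\log N/T})$ via Lemma~\ref{lem:20260129}, which is where $(\log N)^2/T\to0$ is used. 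The last term is $O(T^{-1})$ from Lemma~\ref{lem:QR_computational}. Proposition~\ref{prop:consistency} gives $r_N\vee\|\hat\beta-\beta_0\|=o_p(1)$, so the quadratic remainder is $o_p(1)\cdot(r_N+\|\hat\beta-\beta_0\|)$. It can be absorbed into the left-hand side, or into the $\beta$ term, with probability approaching one.

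The main obstacle is the term $\|\hat\beta-\beta_0\|$: the $\sqrt T$ rate for $\hat\beta$ is exactly what Theorem~\ref{thm:main} is proving, so we cannot invoke it directly without circularity. We plan to obtain a preliminary rate for $\hat\beta$ from \eqref{eq:beta_representation} rather than from the theorem. In that display, the leading term $\Gamma_N^{-1}(\tilde{\mathbb{H}}^{(2)}_N(\alpha_0,\beta_0)-N^{-1}\sum_i(\mathbb{P}_Th_i)^{-1}(\mathbb{P}_Tk_i)\tilde{\mathbb{H}}^{(1)}_{Ni}(\alpha_{i0},\beta_0))$ is an average over $t$ of i.i.d.\ bounded terms. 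Hence it is $O_p(T^{-1/2})$, after replacing $(\mathbb{P}_Th_i)^{-1}\mathbb{P}_Tk_i$ by $\gamma_i$ with uniform error $O_p(\sqrt{\log N/T})$ through Hoeffding plus a union bound, times $O_p(\sqrt{\log N/T})$ from Lemma~\ref{lem:20251229_1}. We also need $\Gamma_N^{-1}=O_p(1)$. We get it from the law of large numbers over $t$ applied to the $B_t$-measurable averages, together with Assumption~\ref{assump:Jacobian}. The main risk is that Lemma~\ref{lem:convergence_Gamma} may be stated only later; if so, we will prove this step directly.

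The projection-error terms in \eqref{eq:beta_representation} are $o_p(T^{-1/2})+O_p(d_N)$ by Lemmas~\ref{lem:hajek_rs_1} and \ref{lem:hajek_rs_2}, using only consistency through $\delta_N\to0$. The remainder is $O_p(r_N^2)+O_p(T^{-1})$. This yields $\|\hat\beta-\beta_0\|=O_p(T^{-1/2})+O_p(r_N^2)$. Substituting into the display above gives $r_N\le O_p(\sqrt{\log N/T})+o_p(1)r_N$. So $r_N(1-o_p(1))=O_p(\sqrt{\log N/T})$, which is the claim.

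If the $r_N^2$ remainder in the $\beta$ equation makes the two-step argument look circular, we will instead run the coupled inequalities jointly. Setting $s_N=r_N+\|\hat\beta-\beta_0\|$, both equations give $s_N\le O_p(\sqrt{\log N/T})+o_p(1)s_N$, and the conclusion follows.
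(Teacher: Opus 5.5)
Your proposal is correct and follows essentially the same route as the paper. You first extract the preliminary bound $\|\hat\beta-\beta_0\| = O_p(T^{-1/2}) + O_p(\max_i|\hat\alpha_i-\alpha_{i0}|^2)$ from \eqref{eq:beta_representation}, using a variance calculation for the leading term, Lemmas~\ref{lem:hajek_rs_1}--\ref{lem:hajek_rs_2} for the projection errors, and $\Gamma_N^{-1}=O_p(1)$. You then substitute this into \eqref{eq:alpha_representation}, bound the remaining terms with Lemmas~\ref{lem:20251229_1} and \ref{lem:20260129}, and use consistency to absorb the quadratic remainder, as the paper does. Your extra detail only spells out what the paper compresses into ``a variance calculation'' and ``plugging in'': the replacement of $(\mathbb{P}_T h_i)^{-1}\mathbb{P}_T k_i$ by $\gamma_i$ and the absorption of the $O_p(r_N^2)$ terms, both of which you handle correctly.
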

\begin{proof}

By Lemma \ref{lemma:pdf_gc} and Assumptions \ref{assump:x_bounded_support}, \ref{assump:bounded_pdf_derivative}, \ref{assump:variance_bounded_gamma}, and \ref{assump:Jacobian}, the first term of the right hand side of 
\eqref{eq:beta_representation} is $O_p(1/\sqrt{T})$ following a variance calculation.
Together with Lemma \ref{lem:hajek_rs_1}, \ref{lem:hajek_rs_2}, and \ref{lem:convergence_Gamma}, we can further infer from  \eqref{eq:beta_representation} that 
\begin{align}
    \norm{\hat{\beta} - \beta_0} = O_p\left( \max_{1 \leq i \leq N} |\hat{\alpha}_i - \alpha_{i0}|^2 \right) + O_p\left( \frac{1}{\sqrt{T}}\right). \label{eq:beta_primitive_rate}
\end{align}
Hence, by plugging \eqref{eq:beta_primitive_rate}
into \eqref{eq:alpha_representation} and applying Lemma \ref{lem:20260129}, \ref{lem:20251229_1}, and \ref{lem:convergence_Gamma},  we obtain the desired result.
\end{proof}

\begin{lemma}[CLT for the score] \label{lem:clt} 
    Under the setup of Section \ref{sec:model_estimation}, suppose Assumption \ref{assump:variance_bounded_gamma} holds,
    then we have 
    \begin{align*}
        \sqrt{T} \left(  \tilde{\mathbb{H}}_N^{(2)}(\alpha_0, \beta_0) - \frac{1}{N} \sum_{i = 1}^N\gamma_i \tilde{\mathbb{H}}_{Ni}^{(1)}(\alpha_{i0}, \beta_0) \right) \longdto 
        \mathcal{N}(0, \Sigma).
    \end{align*}
\begin{proof}
Observe that for any fixed $N$ and $T$, it holds that
    \begin{align*}
&\sqrt{T}\left(\tilde{\mathbb{H}}_{N}^{(2)} - \frac{1}{N} \sum_{i = 1}^N \gamma_i \tilde{\mathbb{H}}_{Ni}^{(1)}(\alpha_{i0}, \beta_0)\right) \\ 
        =& \frac{1}{T} \sum_{t = 1}^T \frac{1}{N} \sum_{i = 1}^N \E[(\tau - \idf\{\epsilon_{it} \leq 0\})(X_{it} - \gamma_i) \mid B_t] \\
        =& \frac{1}{\sqrt{T}} \sum_{t = 1}^T \underbrace{\E \left[\frac{1}{N} \sum_{i = 1}^N (\tau - \idf\{\epsilon_{it} \leq 0\})(X_{it} - \gamma_i) \,\bigg|\, B_t\right]}_{=:\psi_{Nt}}
       = \frac{1}{\sqrt{T}} \sum_{t = 1}^T \psi_{Nt}.
    \end{align*}
    Since $\psi_{Nt}$ is uniformly bounded across $N$, and $\var(\psi_{Nt})$ is positive-definite 
    for large $N$, by Lindeberg-Feller CLT, the conclusion follows.
\end{proof}
\end{lemma}
\begin{lemma}[ULLN for conditional densities] \label{lemma:pdf_gc}
Under the setup of Section \ref{sec:model_estimation}, suppose that Assumption \ref{assump:bounded_pdf_derivative} holds.
Define the sequences of function classes
\begin{align*}
\mathcal F_{N1}
&= \bigl\{\, b \mapsto f_i(0 \mid B_t = b) : i=1,\ldots,N \bigr\}, \\
\mathcal F_{N2}
&= \bigl\{\, b \mapsto \E\!\bigl[f_i(0 \mid X_{it}, B_t) X_{it} \mid B_t = b\bigr] : i=1,\ldots,N \bigr\}.
\end{align*}
Then, for each $j \in \{1,2\}$, it holds that
\[
\max_{f \in \mathcal F_{Nj}}
\left\|
\frac{1}{T} \sum_{t=1}^T \bigl\{ f(B_t) - \E[f(B_1)] \bigr\}
\right\|
= O_p\left(\sqrt{\frac{\log N}{T}}\right),
\]
as $N,T \to \infty$.

\end{lemma}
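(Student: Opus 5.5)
The plan is a union bound over the $N$ members of each class combined with a scalar (or coordinatewise) Hoeffding inequality in $t$. Fix $j\in\{1,2\}$ and an index $i\le N$. The summands $f(B_t)-\E[f(B_1)]$, $t=1,\ldots,T$, are i.i.d.\ across $t$, because $\{B_t\}$ is i.i.d.\ and $f$ is a fixed deterministic function of $B_t$ once we condition on $\{A_i\}=\{a_i\}$. They are also mean zero. I would first verify that they are uniformly bounded, with a bound independent of $i$, $N$ and $T$.

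For $\mathcal F_{N1}$, Assumption \ref{assump:bounded_pdf_derivative}(b) gives the bound. Integrating the conditional density over $x$ yields $0\le f_i(0\mid b)\le C_f$ for $P_{B}$-almost every $b$. For $\mathcal F_{N2}$, each coordinate satisfies $|\E[f_i(0\mid X_{it},B_t)X_{it,k}\mid B_t=b]|\le C_f\sup_{x\in\mathcal X}\|x\|$, which is finite by Assumption \ref{assump:x_bounded_support}. Since $p$ is fixed, it suffices to treat each coordinate $k=1,\ldots,p$ separately and then combine them through $\|v\|\le\sqrt p\max_k|v_k|$.

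Hoeffding's inequality then gives, for each $f$ (or each coordinate) and each $s>0$,
\begin{align*}
P\left\{\left|\frac1T\sum_{t=1}^T\{f(B_t)-\E f(B_1)\}\right|>s\right\}\le 2\exp\left(-\frac{Ts^2}{2M^2}\right),
\end{align*}
where $M$ is the uniform bound. A union bound over the at most $Np$ functions/coordinates gives
\begin{align*}
P\{\max\cdots>s\}\le 2Np\exp(-Ts^2/(2M^2)).
\end{align*}
Take $s=c\sqrt{\log N/T}$. The right-hand side becomes $2pN^{1-c^2/(2M^2)}$, and this can be made arbitrarily small, uniformly in large $N$, by choosing $c$ large. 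That is exactly the $O_p(\sqrt{\log N/T})$ statement. The argument is the same one used in Lemma \ref{lem:20251229_1}.

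The main (mild) obstacle is measurability and almost-sure well-definedness. The functions $b\mapsto f_i(0\mid b)$ are defined only $P_B$-almost everywhere, and we must be sure they do not depend on $t$. I would handle this by fixing versions of the regular conditional densities via Assumption \ref{assump:pdf_exists}, using the stationarity of the law of $(X_{it},\epsilon_{it},B_t)$ in $t$ implied by \eqref{eq:DGP_cond}, and noting that modifications on null sets do not change the distribution of the averages. No entropy arguments are needed because each class is finite with cardinality $N$.
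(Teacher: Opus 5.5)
Your proof is correct. It differs from the paper's only in the concentration tool used.

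The paper notes that for fixed $i$ the summands are i.i.d.\ over $t$ and invokes the uniform-entropy maximal inequality of Theorem 2.14.1 in \citet{VanDerVaarWellner1996EmpiricalProcess}. A class with $N$ members has covering numbers at most $N$, so the entropy integral is $O(\sqrt{1+\log N})$. This gives $\E\max_{f\in\mathcal F_{Nj}}|T^{-1}\sum_{t=1}^T\{f(B_t)-\E f(B_1)\}|\lesssim\sqrt{(1+\log N)/T}\,\|F\|_{P,2}$, and Markov's inequality finishes.

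You instead apply Hoeffding's inequality to each of the at most $Np$ scalar averages and take a union bound. This is the same device the paper itself uses in Lemma \ref{lem:20251229_1}. For a finite class the two arguments are essentially equivalent, since the $\sqrt{\log N}$ comes from the cardinality either way. Your route is fully elementary and gives an explicit exponential tail rather than only a first-moment bound. The paper's route needs only a square-integrable envelope rather than bounded summands, so it would survive replacing Assumption \ref{assump:x_bounded_support} by moment conditions.

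Either way, controlling $\mathcal F_{N2}$ requires Assumption \ref{assump:x_bounded_support} (or some moment bound on $X_{it}$). The lemma's stated hypotheses omit this, but it is maintained wherever the lemma is applied, so your use of it is legitimate.
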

\begin{proof}
Notice that for each fixed $i$, the summands are i.i.d. over $t$.
     The desired result is then direct consequence of  Theorem 2.14.1 in \cite{VanDerVaarWellner1996EmpiricalProcess} and Markov's inequality.
\end{proof}

\begin{lemma}[Consistency of the Jacobian] \label{lem:convergence_Gamma}
    Under the setup of Section \ref{sec:model_estimation}, suppose Assumptions \ref{assump:x_bounded_support}, \ref{assump:bounded_pdf_derivative} and \ref{assump:Jacobian} hold,
    then we have $\Gamma_N \longpto \Gamma$.
\end{lemma}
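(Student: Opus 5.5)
The plan is to split $\Gamma_N$ into a ``direct'' part and a ``fixed-effect correction'' part. For each part I compare the empirical averages over $t$ with their population counterparts, uniformly in $i$, and then pass to the limit defining $\Gamma$ in Assumption \ref{assump:Jacobian}. Write $m_i(b)=\E[f_i(0\mid X_{it},B_t)X_{it}X_{it}'\mid B_t=b]$, so that
\begin{align*}
\Gamma_N=\frac{1}{N}\sum_{i=1}^N \mathbb{P}_T m_i-\frac{1}{N}\sum_{i=1}^N (\mathbb{P}_T k_i)(\mathbb{P}_T h_i)^{-1}(\mathbb{P}_T k_i)'.
\end{align*}
By the law of iterated expectations, $\E[h_i(B_1)]=f_i(0)$, $\E[k_i(B_1)]=\E[f_i(0\mid X_{i1})X_{i1}]$ and $\E[m_i(B_1)]=\E[f_i(0\mid X_{i1})X_{i1}X_{i1}']$. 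Using the symmetry identity noted after Assumption \ref{assump:Jacobian},
\begin{align*}
\tilde\Gamma_N:=\frac{1}{N}\sum_{i=1}^N \E[f_i(0\mid X_{i1})X_{i1}(X_{i1}-\gamma_i)']=\frac{1}{N}\sum_{i=1}^N\Bigl(\E m_i(B_1)-\frac{\E k_i(B_1)\,\E k_i(B_1)'}{f_i(0)}\Bigr),
\end{align*}
and $\tilde\Gamma_N\to\Gamma$ by Assumption \ref{assump:Jacobian}. It therefore suffices to show $\|\Gamma_N-\tilde\Gamma_N\|=o_p(1)$.

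For the first part, I avoid needing a uniform-in-$i$ statement. I write $\frac1N\sum_i\mathbb{P}_T m_i=\frac1T\sum_{t}g_N(B_t)$ with $g_N(b)=\frac1N\sum_i m_i(b)$. By Assumptions \ref{assump:x_bounded_support} and \ref{assump:bounded_pdf_derivative}(b), $\|g_N\|\le C_f\sup_{x\in\mathcal X}\|x\|^2$ uniformly in $N$. The $B_t$ are i.i.d., so Chebyshev's inequality gives $\frac1T\sum_t g_N(B_t)-\E g_N(B_1)=O_p(T^{-1/2})$, and this piece is done.

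For the correction part, I apply Lemma \ref{lemma:pdf_gc} to obtain
\[
\max_{i\le N}|\mathbb{P}_T h_i-f_i(0)|=O_p\bigl(\sqrt{\log N/T}\bigr),\qquad \max_{i\le N}\|\mathbb{P}_T k_i-\E k_i(B_1)\|=O_p\bigl(\sqrt{\log N/T}\bigr).
\]
Both bounds are $o_p(1)$ because $(\log N)^2/T\to0$. Assumption \ref{assump:bounded_pdf_derivative}(c) gives $h_i\ge c>0$, hence $\mathbb{P}_T h_i\ge c$ and $f_i(0)\ge c$. The map $(a,v)\mapsto vv'/a$ is therefore Lipschitz on $\{a\ge c,\ \|v\|\le C_f\sup_{x\in\mathcal X}\|x\|\}$. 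This yields
\begin{align*}
\max_{i\le N}\Bigl\|(\mathbb{P}_T k_i)(\mathbb{P}_T h_i)^{-1}(\mathbb{P}_T k_i)'-\frac{\E k_i\,\E k_i'}{f_i(0)}\Bigr\|=o_p(1).
\end{align*}
Averaging over $i$ preserves this bound, and combining it with the first part gives $\Gamma_N\pto\Gamma$.

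The only delicate step is the uniformity over $i$ in the correction term, because the nonlinearity $(\mathbb{P}_T h_i)^{-1}$ prevents simply averaging first. I handle it through the maximal inequality of Lemma \ref{lemma:pdf_gc}, which needs only $\log N/T\to0$, together with the uniform lower bound on $h_i$, which keeps the inverse under control.
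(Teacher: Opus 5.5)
Your proposal is correct and is essentially the paper's argument written out in full: the paper proves this lemma in one line, citing Lemma~\ref{lemma:pdf_gc} for the fixed-effect correction term and a weak law of large numbers for the direct term, which is exactly your decomposition, with Assumption~\ref{assump:Jacobian} supplying the limit. One small remark: the rate $(\log N)^2/T\to 0$ you use is not among the lemma's stated hypotheses (it holds wherever the lemma is applied, and the paper's route through Lemma~\ref{lemma:pdf_gc} implicitly needs it too), and you can drop it by applying your Lipschitz bound term by term and taking expectations, since $\E\bigl[\frac1N\sum_{i}(|\mathbb{P}_T h_i-f_i(0)|+\|\mathbb{P}_T k_i-\E k_i(B_1)\|)\bigr]=O(T^{-1/2})$ uniformly in $N$, so no uniformity over $i$ is required.
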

\begin{proof}
  This follows directly from Lemma \ref{lemma:pdf_gc} and a standard weak law of large numbers.

\end{proof}
\begin{lemma}[CLT for the score under $M$-dependence] \label{lem:clt_m}
    Under the setup of Section \ref{sec:model_estimation}, suppose Assumption \ref{assump:variance_bounded_gamma_m} holds,
    then we have 
    \begin{align*}
        \sqrt{T} \left(  \tilde{\mathbb{H}}_N^{(2)}(\alpha_0, \beta_0) - \frac{1}{N} \sum_{i = 1}^N\gamma_i \tilde{\mathbb{H}}_{Ni}^{(1)}(\alpha_{i0}, \beta_0) \right) \longdto 
        \mathcal{N}(0, \Sigma_M).
    \end{align*}
\end{lemma}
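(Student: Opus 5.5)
The plan mirrors Lemma \ref{lem:clt}. For fixed $N$ and $T$, the score rewrites exactly as
\[
\sqrt{T}\Big(\tilde{\mathbb{H}}_N^{(2)}(\alpha_0,\beta_0)-\tfrac1N\textstyle\sum_{i}\gamma_i\tilde{\mathbb{H}}_{Ni}^{(1)}(\alpha_{i0},\beta_0)\Big)=\tfrac{1}{\sqrt T}\textstyle\sum_{t=1}^T J_{Nt},
\]
where $J_{Nt}=\frac1N\sum_i \E[(\tau-\idf\{\epsilon_{it}\le 0\})(X_{it}-\gamma_i)\mid B_t]$. Because $(X_{it},\epsilon_{it})=g_i(B_t,U_{it})$ and $U_{it}$ is independent of $\{B_s\}$, we have $\E[\cdot\mid \bm B_T]=\E[\cdot\mid B_t]$ even under serial dependence. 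Hence $J_{Nt}=\psi_N(B_t)$ for a fixed measurable $\psi_N$.

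Several properties of this sequence follow directly. Since $\{B_t\}$ is stationary and $M$-dependent, $\{J_{Nt}\}_t$ is a stationary $M$-dependent triangular array in $t$. It has mean zero because $\E[\idf\{\epsilon_{i1}\le0\}\mid X_{i1}]=\tau$ by the quantile restriction, which gives $\E[(\tau-\idf\{\epsilon_{i1}\le 0\})(X_{i1}-\gamma_i)]=0$. It is uniformly bounded, $\|J_{Nt}\|\le \sup_{x\in\mathcal X}\|x\|+\sup_i\|\gamma_i\|<\infty$, by Assumptions \ref{assump:x_bounded_support} and \ref{assump:variance_bounded_gamma_m}. Its exact variance is
\[
\var\Big(T^{-1/2}\textstyle\sum_t J_{Nt}\Big)=\Sigma_N^0+\textstyle\sum_{l=1}^M(1-l/T)(\Sigma_N^l+{\Sigma_N^l}'),
\]
which converges to $\Lambda$ by Assumption \ref{assump:variance_bounded_gamma_m}, since $M$ is fixed and the $\Sigma_N^l$ are bounded.

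The CLT will be proved by Cramér--Wold together with Bernstein big-block/small-block blocking. Fix $c\in\R^p$ and set $\xi_{Nt}=c'J_{Nt}$. Choose big blocks of length $r_T\to\infty$ with $r_T/T\to0$, separated by small blocks of length $M$. The big-block sums are then exactly independent by $M$-dependence. The small-block contribution has variance at most $O(M^2 \cdot T/r_T)/T\to 0$, so it is negligible. For the big-block sums $S_k$, boundedness gives $|S_k|/\sqrt T\le C r_T/\sqrt T$; taking $r_T=o(\sqrt T)$ (e.g.\ $r_T=T^{1/4}$) makes the Lindeberg condition hold trivially. The variance of the sum of big blocks is $c'\Lambda c+o(1)$ by the computation above. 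The Lindeberg--Feller theorem then gives convergence to $\mathcal N(0,c'\Lambda c)$, and Cramér--Wold yields $\mathcal N(0,\Lambda)$, with $\Sigma_M=\Lambda$.

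The one delicate point is the triangular-array nature. Because $\psi_N$ changes with $N$, the classical Hoeffding--Robbins $M$-dependent CLT for a fixed stationary sequence does not apply verbatim. That is why the argument uses explicit blocking with uniform boundedness rather than citing it. If $c'\Lambda c=0$ for some $c$, the limit is degenerate in that direction and the variance bound alone gives convergence to $0$, so that case needs no further work.
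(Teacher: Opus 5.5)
Your proposal is correct and follows the paper's route. You rewrite the score as $T^{-1/2}\sum_{t} J_{Nt}$ with $J_{Nt}$ bounded, mean-zero, stationary and $M$-dependent, compute the variance as $\Sigma_N^0+\sum_{l=1}^M(1-l/T)(\Sigma_N^l+{\Sigma_N^l}')\to\Lambda$, and conclude by blocking combined with the Lindeberg--Feller theorem for independent triangular arrays; your explicit big-block/small-block construction, Cram\'er--Wold step, degenerate-direction case, and identification $\Sigma_M=\Lambda$ only make explicit what the paper's brief appeal to Hoeffding--Robbins blocking leaves implicit.
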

\begin{proof}
  The proof is similar to Theorem \ref{lem:clt}. We can rewrite 
  the expression as
  \begin{align*}
      \frac{1}{\sqrt{T}} \sum_{t = 1}^T \E\left[ \frac{1}{N} \sum_{i = 1}^N (\tau - \idf\{\epsilon_{it} \leq 0\}(X_{it} - \gamma_i)) \given[\bigg] B_t\right] = \frac{1}{\sqrt{T}} \sum_{t = 1}^T \psi_{Nt}.
  \end{align*}
    Note that $\psi_{Nt}$ is uniformly bounded across $N$ and that 
    \begin{align*}
        \var\left( \frac{1}{\sqrt{T}} \sum_{t = 1}^T \psi_{Nt}\right)
        = \Gamma^0_N + \sum_{l = 1}^M \left( 1 - \frac{l}{T}\right)\left( \Gamma_N^l + {\Gamma_N^l}'\right)= \Lambda + o(1).
    \end{align*}
    The result then follows by combining the standard blocking argument for $M$-dependent sequences, as in the proof of Theorem 1 of \cite{Hoeffding1948central}, with the Lindeberg--Feller theorem for independent triangular arrays.
\end{proof}



\newpage

\bibliographystyle{ecta}
\bibliography{biblio}

@article{Hoeffding1948central,
  title   = {The Central Limit Theorem for Dependent Random Variables},
  author  = {Hoeffding, Wassily and Robbins, Herbert},
  journal = {Duke Mathematical Journal},
  volume  = {15},
  number  = {3},
  pages   = {773--780},
  year    = {1948}
}

@article{MaLintonGao2021,
title = {Estimation and Inference in Semiparametric Quantile Factor Models},
journal = {Journal of Econometrics},
volume = {222},
number = {1, Part B},
pages = {295-323},
year = {2021},
note = {Annals Issue: Financial Econometrics in the Age of the Digital Economy},
author = {Shujie Ma and Oliver Linton and Jiti Gao}
}

@ARTICLE{ChernozhukovHansen06,
   AUTHOR="Victor Chernozhukov and Christian Hansen",
   TITLE="Instrumental Quantile Regression Inference for Structural and Treatment Effects Models",
   JOURNAL={Journal of Econometrics},
   VOLUME="132",
   PAGES="491--525",
   YEAR="2006"
}

@article{Chen2024, 
title={Two-Step Estimation of Quantile Panel Data Models with Interactive Fixed Effects}, 
volume={40}, 
number={2}, 
journal={Econometric Theory}, 
author={Chen, Liang}, 
year={2024}, 
pages={419–446}
}

@article{BelloniChenPadillaWang2023,
author = {Alexandre Belloni and Mingli Chen and Oscar Hernan Madrid Padilla and Zixuan (Kevin) Wang},
title = {{High-dimensional latent panel quantile regression with an application to asset pricing}},
volume = {51},
journal = {The Annals of Statistics},
number = {1},
publisher = {Institute of Mathematical Statistics},
pages = {96 -- 121},
year = {2023}
}

@article{HardingLamarchePesaran2020,
author = {Harding, Matthew and Lamarche, Carlos and Pesaran, M. Hashem},
title = {Common Correlated Effects Estimation of Heterogeneous Dynamic Panel Quantile Regression Models},
journal = {Journal of Applied Econometrics},
volume = {35},
number = {3},
pages = {294-314},
year = {2020}
}

@article{AndoBai2020,
author = {Tomohiro Ando and Jushan Bai},
title = {Quantile Co-Movement in Financial Markets: A Panel Quantile Model With Unobserved Heterogeneity},
journal = {Journal of the American Statistical Association},
volume = {115},
number = {529},
pages = {266--279},
year = {2020}
}

@article{chernozhukov2013average,
  title={Average and quantile effects in nonseparable panel models},
  author={Chernozhukov, Victor and Fern{\'a}ndez-Val, Iv{\'a}n and Hahn, Jinyong and Newey, Whitney},
  journal={Econometrica},
  volume={81},
  number={2},
  pages={535--580},
  year={2013},
  publisher={Wiley Online Library}
}

@article{fernandez2018fixed,
  title={Fixed effects estimation of large-T panel data models},
  author={Fern{\'a}ndez-Val, Iv{\'a}n and Weidner, Martin},
  journal={Annual Review of Economics},
  volume={10},
  number={1},
  pages={109--138},
  year={2018},
  publisher={Annual Reviews of Economics}
}

@article{chiang2024standard,
  title={Standard errors for two-way clustering with serially correlated time effects},
  author={Chiang, Harold D and Hansen, Bruce E and Sasaki, Yuya},
  journal={Review of Economics and Statistics},
  pages={1--40},
  year={2024},
  publisher={MIT Press}
}

@article{juodis2025shock,
  title={This shock is different: Estimation and inference in misspecified two-way fixed effects panel regressions},
  author={Juodis, Art{\=u}ras},
  journal={Econometric Theory},
  pages={1--34},
  year={2025},
  publisher={Cambridge University Press}
}

@article{besstremyannaya2019reconsideration,
  title={Reconsideration of a simple approach to quantile regression for panel data},
  author={Besstremyannaya, Galina and Golovan, Sergei},
  journal={The Econometrics Journal},
  volume={22},
  number={3},
  pages={292--308},
  year={2019},
  publisher={Oxford University Press}
}

@article{hausman2025linear,
  title={Linear Estimation of Structural and Causal Effects for Nonseparable Panel Data},
author={Chernozhukov, Victor and Deaner, Ben and Gao, Ying and Hausman, Jerry A and Newey, Whitney},
  year={2025}
}

@techreport{andrews2003cross,
  title={Cross-section Regression with Common Shocks},
  author={Andrews, Donald},
  year={2003},
  institution={Cowles Foundation for Research in Economics, Yale University}
}

@Article{GuVolgushev19,
  author    = {Jiaying Gu and Stanislav Volgushev},
  title     = {Panel Data Quantile Regression with Grouped Fixed Effects},
  journal   = {Journal of Econometrics},
  year      = {2019},
  volume    = {213},
  pages     = {68-91}
}

@Article{Koenker04,
  Title                    = {Quantile Regression for Longitudinal Data},
  Author                   = {Roger Koenker},
  Journal                  = {Journal of Multivariate Analysis},
  Year                     = {2004},
  Pages                    = {74--89},
  Volume                   = {91}
}

@ARTICLE{AbrevayaDahl08,
   AUTHOR = {Jason Abrevaya and Christian M. Dahl},
   TITLE = {The Effects of Birth Inputs on Birthweight: Evidence From Quantile Estimation on Panel Data},
   JOURNAL = {Journal of Business and Economic Statistics},
   VOLUME = {26},
   PAGES = {379-397},
   YEAR = {2008}
}

@Article{Lamarche10,
  Title                    = {Robust Penalized Quantile Regression Estimation for Panel Data},
  Author                   = {Carlos Lamarche},
  Journal                  = {Journal of Econometrics},
  Year                     = {2010},
  Pages                    = {396--408},
  Volume                   = {157}
}

@Article{MachadoSantosSilva19,
  author    = {Jos/'e A. F. Machado and J. M. C. {Santos Silva}},
  title     = {Quantiles via Moments},
  journal   = {Journal of Econometrics},
  year      = {2019},
  volume    = {213},
  pages     = {145-173}
}

@Article{ChetverikovLarsenPalmer16,
  Title = {IV Quantile Regression for Group-Level Treatments, with an Application to the Effects of Trade on the Distribution of Wages},
  Author = {Denis Chetverikov and Bradley Larsen and Christopher Palmer},
  Journal= {Econometrica},
  Year  = {2016},
  Volume= {84},
  Pages = {809–833}
}

@Article{ArellanoBonhomme16,
  Title                    = {Nonlinear Panel Data Estimation Via Quantile Regressions},
  Author                   = {Manuel Arellano and  Stephane Bonhomme},
  Journal                  = {Econometrics Journal},
  Year                     = {2016},
  Pages                    = {C61--C94},
  Volume                   = {19}
}

@article{ZhangWangLianLi2023,
author = {Xiaoyu Zhang and Di Wang and Heng Lian and Guodong Li},
title = {Nonparametric Quantile Regression for Homogeneity Pursuit in Panel Data Models},
journal = {Journal of Business \& Economic Statistics},
volume = {41},
number = {4},
pages = {1238--1250},
year = {2023}
}

@article{KimYang11,
title = {Semiparametric Approach to a Random Effects Quantile Regression Model},
author = {Mi-Ok Kim and Yunwen Yang},
journal = {Journal of the American Statistical Association},
year = {2011},
pages = {1405--1417},
volume = {106}
}

@Article{GrahamHahnPoirierPowell2015,
  Title                    = {A Quantile Correlated Random Coefficients Panel Data Model},
  Author                   = {Bryan S. Graham and Jinyong Hahn and Alexandre Poirier and James L. Powell},
  Journal                  = {Journal of Econometrics},
  Volume		      ={206},
  Pages                  ={305-335},
  Year                     = {2018}
}

@incollection{ArellanoHahn07,
	Author = {Manuel Arellano and Jinyong Hahn},
	Booktitle = {Economics and Econometrics},
	Editor = {R Blundell and W K Newey and T Persson},
	Pages = {381--409},
	Publisher = {Cambridge University Press},
	Title = {Understanding Bias in Nonlinear Panel Models: Some Recent Developments},
	Year = {2007}}

@article{ArellanoBonhomme11,
	Author = {Manuel Arellano and Stephane Bonhomme},
	Journal = {Annual Review of Economics},
	Pages = {395--424},
	Title = {Nonlinear Panel Data Analysis},
	Volume = {3},
	Year = {2011}}

@article{FernandezValWeidner18,
	Author = {Ivan Fern\'{a}ndez-Val and Martin Weidner},
	Title = {Fixed Effect Estimation of Large-T Panel Data Models},
    Journal={Annual Review of Economics},
    Volume={10},
    Pages={109-138},
	Year = {2018}}

@article{berry1995automobile,
  title={Automobile prices in market equilibrium},
  author={Berry, Steven and Levinsohn, James and Pakes, Ariel},
  journal={Econometrica},
  volume={63},
  number={4},
  pages={841--890},
  year={1995},
  publisher={Blackwell Publishing Inc.}
}

@book{moulin1991axioms,
  title={Axioms of cooperative decision making},
  author={Moulin, Herv{\'e}},
  number={15},
  year={1991},
  publisher={Cambridge university press}
}

@article{athey2001investment,
  title={Investment and market dominance},
  author={Athey, Susan and Schmutzler, Armin},
  journal={RAND Journal of economics},
  pages={1--26},
  year={2001},
  publisher={JSTOR}
}

@book{kallenberg2021foundation,
  author    = {Kallenberg, Olav},
  title     = {Foundations of Modern Probability},
  publisher = {Springer},
  year      = {2021}
}

@book{kallenberg2005probabilistic,
  title={Probabilistic symmetries and invariance principles},
  author={Kallenberg, Olav},
  year={2005},
  publisher={Springer}
}

@article{ghosal2000testing,
  title={Testing monotonicity of regression},
  author={Ghosal, Subhashis and Sen, Arusharka and van der Vaart, Aad W},
  journal={Annals of Statistics},
  pages={1054--1082},
  year={2000},
  publisher={JSTOR}
}

@article{fernandez2005bias,
  title={Bias correction in panel data models with individual specific parameters},
  author={Fern{\'a}ndez-Val, Iv{\'a}n},
  journal={Available at SSRN 869104},
  year={2005}
}

@book{demetrescu2023tests,
  title={Tests of no cross-sectional error dependence in panel quantile regressions},
  author={Demetrescu, Matei and Hosseinkouchack, Mehdi and Rodrigues, Paulo MM},
  number={1041},
  year={2023},
  publisher={Ruhr Economic Papers}
}

@article{andrews2005cross,
  title={Cross-section regression with common shocks},
  author={Andrews, Donald WK},
  journal={Econometrica},
  volume={73},
  number={5},
  pages={1551--1585},
  year={2005},
  publisher={Wiley Online Library}
}

@article{chernozhukov2014gaussian,
  title={Gaussian approximation of suprema of empirical processes},
  author={Chernozhukov, Victor and Chetverikov, Denis and Kato, Kengo},
  journal={The Annals of Statistics},
  volume={42},
  number={4},
  pages={1564},
  year={2014},
  publisher={Institute of Mathematical Statistics}
}

@article{athey2025identification,
  title={Identification of average treatment effects in nonparametric panel models},
  author={Athey, Susan and Imbens, Guido},
  journal={arXiv preprint arXiv:2503.19873},
  year={2025}
}

@article{chen2021quantile,
  title={Quantile factor models},
  author={Chen, Liang and Dolado, Juan J and Gonzalo, Jes{\'u}s},
  journal={Econometrica},
  volume={89},
  number={2},
  pages={875--910},
  year={2021},
  publisher={Wiley Online Library}
}

@article{galvao2016smoothed,
  title={Smoothed quantile regression for panel data},
  author={Galvao, Antonio F and Kato, Kengo},
  journal={Journal of Econometrics},
  volume={193},
  number={1},
  pages={92--112},
  year={2016},
  publisher={Elsevier}
}

@article{koenker1978regression,
  title={Regression quantiles},
  author={Koenker, Roger and Bassett, Gilbert},
  journal={Econometrica: journal of the Econometric Society},
  pages={33--50},
  year={1978},
  publisher={JSTOR}
}

@article{canay2011simple,
  title={A simple approach to quantile regression for panel data},
  author={Canay, Ivan A},
  journal={Econometrics Journal},
  volume={14},
  number={3},
  pages={368--386},
  year={2011},
  publisher={Oxford University Press Oxford, UK}
}

@article{galvao2020unbiased,
  title={On the unbiased asymptotic normality of quantile regression with fixed effects},
  author={Galvao, Antonio F and Gu, Jiaying and Volgushev, Stanislav},
  journal={Journal of Econometrics},
  volume={218},
  number={1},
  pages={178--215},
  year={2020},
  publisher={Elsevier}
}

@article{fama1973risk,
  title={Risk, return, and equilibrium: Empirical tests},
  author={Fama, Eugene F and MacBeth, James D},
  journal={Journal of Political Economy},
  volume={81},
  number={3},
  pages={607--636},
  year={1973},
  publisher={The University of Chicago Press}
}

@article{driscoll1998consistent,
  title={Consistent covariance matrix estimation with spatially dependent panel data},
  author={Driscoll, John C and Kraay, Aart C},
  journal={Review of Economics and Statistics},
  volume={80},
  number={4},
  pages={549--560},
  year={1998},
  publisher={MIT Press 238 Main St., Suite 500, Cambridge, MA 02142-1046, USA journals~…}
}

@article{petersen2008estimating,
  title={Estimating standard errors in finance panel data sets: Comparing approaches},
  author={Petersen, Mitchell A},
  journal={The Review of Financial Studies},
  volume={22},
  number={1},
  pages={435--480},
  year={2008},
  publisher={Society for Financial Studies}
}

@book{VanDerVaarWellner1996EmpiricalProcess,
publisher={Springer New York, NY},
series={Springer Series in Statistics},
author={van der Vaart,A. W. and Jon A. Wellner},
title = {Weak Convergence and Empirical Processes},
year={1996}
}

@article{graham2024sparse,
  title={Sparse network asymptotics for logistic regression under possible misspecification},
  author={Graham, Bryan S},
  journal={Econometrica},
  volume={92},
  number={6},
  pages={1837--1868},
  year={2024},
  publisher={Wiley Online Library}
}

@article{bickel2011method,
  title={The method of moments and degree distributions for network models},
  author={Bickel, Peter J and Chen, Aiyou and Levina, Elizaveta},
  journal={Annals of Statistics},
  volume={39},
  number={5},
  pages={2280--2301},
  year={2011}
}

@article{kato2012asymptotics,
  title={Asymptotics for panel quantile regression models with individual effects},
  author={Kato, Kengo and Galvao, Antonio F and Montes-Rojas, Gabriel V},
  journal={Journal of Econometrics},
  volume={170},
  number={1},
  pages={76--91},
  year={2012},
  publisher={Elsevier}
}

@article{davezies2021empirical,
  title={Empirical process results for exchangeable arrays},
  author={Davezies, Laurent and D’Haultf{\oe}uille, Xavier and Guyonvarch, Yannick},
  journal={The Annals of Statistics},
  volume={49},
  number={2},
  pages={845--862},
  year={2021},
  publisher={JSTOR}
}

@article{menzel2021bootstrap,
  title={Bootstrap with cluster-dependence in two or more dimensions},
  author={Menzel, Konrad},
  journal={Econometrica},
  volume={89},
  number={5},
  pages={2143--2188},
  year={2021},
  publisher={Wiley Online Library}
}
\end{document}